\newcommand{\subscript}[1]{{\rm S$_ #1$}}
\newtheorem{theorem}{Theorem}
\newtheorem{definition}{Definition}
\newtheorem{remark}{Remark}
\newtheorem{lemma}{Lemma}
\newtheorem{corollary}[theorem]{Corollary}
\newenvironment{proof}{\noindent{\bf Proof.}}%
{\hspace*{\fill}$\Box$\par\vspace{4mm}}
\newenvironment{claimproof}[2]{\par\noindent\underline{Proof of #1:}\space#2}{\hfill $\diamondsuit$\mybreak\noindent}
\def\cadre{$$\vcenter\bgroup\advance\hsize by -2em\noindent
	\refstepcounter{equation}(\theequation)~\ignorespaces}
\def\endcadre{\egroup\eqno$$\global\@ignoretrue}
\newcommand{\mybreak} {\par\vspace{2mm}\noindent}
\newcommand{\Upper}{{\rm Upper}}
\newcommand{\cross}{{\rm cross}}
\newcommand{\intra}{{\rm intra}}
\newcommand{\cro}{\mathbf{\mathsf{h}}}
\newcommand{\ccro}{\cro_Q^\cross}
\begin{document}
	\pagestyle{plain}
	
	\title{A New Characterization of Path Graphs}
	\date{}
	\author{Nicola Apollonio\footnote{Istituto per le Applicazioni del
			Calcolo, M. Picone, Via dei Taurini 19, 00185 Roma, Italy.
			\texttt{nicola.apollonio@cnr.it}} \and
		{Lorenzo Balzotti\footnote{Dipartimento di Scienze di Base e Applicate per l’Ingegneria, Sapienza Universit\`a di Roma, Via Antonio Scarpa, 16,
				00161 Roma, Italy. \texttt{lorenzo.balzotti@sbai.uniroma1.it}.}}
	}
	
	\maketitle
	
	\begin{abstract} Path graphs are intersection graphs of paths in a tree.~In this paper we give a ``good characterization'' of path graphs, namely, we prove that path graph membership is in $NP\cap CoNP$ without resorting to existing polynomial time algorithms. The characterization is given in terms of the collection of the \emph{attachedness graphs} of a graph, a novel device to deal with the connected components of a graph after the removal of clique separators. On the one hand, the characterization refines and simplifies the characterization of path graphs due to Monma and Wei [C.L.~Monma,~and~V.K.~Wei, Intersection {G}raphs of {P}aths in a {T}ree, J. Combin. Theory Ser. B, 41:2 (1986) 141--181], which we build on, by reducing a constrained vertex coloring problem defined on the \emph{attachedness graphs} to a vertex 2-coloring problem on the same graphs. On the other hand, the characterization allows us to exhibit two exhaustive lists of obstructions to path graph membership in the form of  minimal forbidden induced/partial 2-edge colored subgraphs in each of the \emph{attachedness graphs}.
	\end{abstract}
	
	\noindent \textbf{Keywords}: Path Graphs, Clique Path Tree, Minimal Forbidden subgraphs.

	\section{Introduction}\label{introduction}
	A graph $G$ is a \emph{path graph} if there is a tree $T$ (\emph{the host tree of $G$}), a collection $\mathcal{P}$ of paths of $T$ and a bijection $\phi: V(G)\rightarrow \mathcal{P}$ such that two vertices $u$ and $v$ of $G$ are adjacent in $G$ if and only if the vertex-sets of paths $\phi(u)$ and $\phi(v)$ intersect.~Other variants of the Path/Tree intersection model are obtained by requiring edge-intersection (or even arc intersection) and by specializing the shape of $T$ (e.g.: directed, rooted). The class of path graphs is clearly closed under taking induced subgraphs. 
	Path graphs were introduced by Renz~\cite{renz} who also posed the question of characterizing them by forbidden subgraphs giving at the same a first partial answer.~The question has been fully answered only recently by L\'{e}v\^{e}que, Maffray and Preissmann~\cite{bfm}.
	
	Path graphs were first characterized by Gavril~\cite{gavril_UV_algorithm} through the notion of \emph{clique path tree} as follows (unless otherwise stated, maximal cliques are referred to as cliques, where a \emph{clique} is a set of pairwise adjacent vertices).
	
	\begin{theorem}[Gavril~\cite{gavril_UV_algorithm}]\label{thm:gavril} A graph $G$ is a path graph if and only if it possesses a \emph{clique path tree}, namely, a tree $T$ whose vertices are the cliques of $G$ with the property that the set of cliques of $G$ containing a given vertex $v$ of $G$ induces a path in $T$. 
	\end{theorem}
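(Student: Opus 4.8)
The plan is to prove the two implications separately; the ``if'' direction is routine, and the ``only if'' direction rests on normalizing an arbitrary path representation. Suppose first that $G$ has a clique path tree $T$. For each vertex $v$ let $P_v$ be the path of $T$ induced by the set of (maximal) cliques of $G$ containing $v$, which is a path by hypothesis. I would check that $T$ together with the family $\{P_v\}_{v\in V(G)}$ is a path representation of $G$: if $uv\in E(G)$ then $\{u,v\}$ extends to a maximal clique $K$, which is then a common node of $P_u$ and $P_v$; conversely, if a clique $K$ lies on $P_u\cap P_v$ then $u,v\in K$, so $uv\in E(G)$. Hence $P_u\cap P_v\neq\emptyset$ exactly when $uv\in E(G)$, and $G$ is a path graph.

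For the converse I would start from an arbitrary path representation $(T,\{P_v\}_{v\in V(G)})$ and manufacture a clique path tree. It suffices to treat connected $G$: clique path trees of the components can be joined into a single tree by adding edges between them, and this preserves, for each vertex, its family of cliques inducing a path, since each component's tree survives as a subtree. Writing $K_t=\{v\in V(G):t\in P_v\}$ for the clique sitting at a node $t$, the main tool is the Helly property for subtrees of a tree: for any clique $K$ the paths $\{P_v:v\in K\}$ pairwise meet, so $C(K):=\bigcap_{v\in K}P_v\neq\emptyset$; and for distinct \emph{maximal} cliques $K,K'$ one has $C(K)\cap C(K')=\emptyset$, since a common node $t$ would give $K\cup K'\subseteq K_t$, a clique, contradicting maximality.

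The core step is a normalization of $T$: repeatedly contract any edge $st$ with $K_t\subseteq K_s$. I would verify this preserves being a path representation: it cannot lose an intersection, and it cannot create a spurious one, because if disjoint paths $P_u,P_v$ had (say) $s\in P_u$ and $t\in P_v$, then $v\in K_t\subseteq K_s$ would put $s\in P_v$, a contradiction; moreover contracting an edge keeps $T$ a tree and the image of each $P_v$ a path. Since each contraction shrinks $T$, the procedure terminates at a representation in which $K_s\not\subseteq K_t$ and $K_t\not\subseteq K_s$ along every edge. I would then argue that $t\mapsto K_t$ is now a bijection from $V(T)$ onto the maximal cliques of $G$: if $K_t$ were not maximal, extend it to a maximal clique $K\supsetneq K_t$, note $C(K)\neq\emptyset$ does not contain $t$, and let $s$ be the first node on the path from $t$ to $C(K)$; every $v\in K_t\subseteq K$ has $t\in P_v\supseteq C(K)$, hence $s\in P_v$, so $K_t\subseteq K_s$, impossible. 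The same device (with $s$ on the path from $t$ to $t'$) rules out $K_t=K_{t'}$ for $t\neq t'$, and surjectivity is immediate from $C(K)\neq\emptyset$ together with maximality. Relabelling each node $t$ by $K_t$ then exhibits $T$ as a tree whose vertices are the maximal cliques of $G$ in which, for every $v$, the cliques containing $v$ are exactly the nodes of $P_v$ — a path — i.e., a clique path tree.

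I expect the normalization to be the only real obstacle, and within it the crucial, slightly subtle point is the choice, for a node whose clique is not maximal, of a neighbour into which it can be absorbed: this is precisely where the Helly property and clique-maximality are combined. The remaining wrinkles — nodes $t$ with $K_t=\emptyset$ (always contractible, since any incident edge qualifies), the existence of the bijection onto $\mathcal{P}$ when $G$ has vertices lying in exactly the same cliques, and the reduction to connected $G$ — are routine bookkeeping, as are the ``if'' direction and the standard Helly property.
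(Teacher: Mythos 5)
The paper does not prove this statement at all: it is Gavril's theorem, quoted with a citation, and no argument for it appears anywhere in the source. So there is no ``paper proof'' to compare against; your proposal has to stand on its own, and it does. The ``if'' direction is handled correctly. For the ``only if'' direction you give what is essentially Gavril's original normalization argument: the Helly property for subtrees guarantees $C(K)=\bigcap_{v\in K}P_v\neq\emptyset$ for every clique $K$, repeated contraction of edges $st$ with $K_t\subseteq K_s$ terminates and preserves the intersection model (your check that no spurious intersection is created -- if $s\in P_u$, $t\in P_v$ and $K_t\subseteq K_s$ then $s\in P_v$ already -- is exactly the point that needs verifying), and at termination the maximality and injectivity of $t\mapsto K_t$ both follow from the same ``first node on the path from $t$'' device, which is applied correctly: every $v\in K_t$ has $P_v$ containing both $t$ and the subtree $C(K)$ (resp.\ $t'$), hence the neighbour $s$ of $t$ on the connecting path, forcing $K_t\subseteq K_s$ and contradicting normalization. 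Together with surjectivity from $C(K)\neq\emptyset$ and the observation that $\{t: v\in K_t\}=P_v$ is a path, this yields a clique path tree. The reduction to connected $G$ and the disposal of nodes with $K_t=\emptyset$ are indeed routine. I see no gap.
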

	The picture in the middle of Figure~\ref{fig:path_graph} shows the clique path tree of the path graph on the left. Theorem~\ref{thm:gavril2} specializes the celebrated characterization of chordal graphs, still due to Gavril~\cite{gavril1}, as those graphs possessing a \emph{clique tree} (equivalently, as the intersection graphs of a collection of subtrees in a given tree) as stated below.
	\begin{theorem}[Gavril~\cite{gavril1}]\label{thm:gavril2}
		A graph $G$ is a chordal graph if and only if it possesses a \emph{clique tree}, namely, a tree $T$ on the set of cliques of $G$ with the property that the set of cliques of $G$ containing a given vertex $v$ of $G$ induces a subtree in $T$.
	\end{theorem}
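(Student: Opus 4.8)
The plan is to establish both implications by induction on $|V(G)|$, using \emph{simplicial} vertices as the common pivot; recall that $v$ is simplicial if its neighbourhood $N(v)$ is a clique, equivalently if $N[v]:=N(v)\cup\{v\}$ induces a complete subgraph. I will freely use two standard facts: (i) \emph{Dirac's lemma}, that every nonempty chordal graph has a simplicial vertex; and (ii) that adding a simplicial vertex to a chordal graph keeps it chordal --- any induced cycle of length at least $4$ through such a vertex $v$ would force its two neighbours on the cycle to be non-adjacent, contradicting that $N(v)$ is a clique. The base case is the one-vertex graph, whose sole clique tree is a single node, and chordality is trivially inherited by induced subgraphs, so in the inductive step it suffices to pass between $G$ and $G-v$ for a suitable simplicial $v$ while maintaining the respective certificates.

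For the implication ``clique tree $\Rightarrow$ chordal'', given a clique tree $T$ of $G$ I would first locate a simplicial vertex: if $T$ is a single node, $G$ is complete; otherwise take a leaf $C$ of $T$ with neighbour $C'$, pick $v\in C\setminus C'$ (possible since the maximal clique $C$ is not contained in $C'$), and observe that the set $T_v$ of cliques containing $v$ is a connected subtree of $T$ that contains $C$ but not $C'$ --- hence $T_v=\{C\}$, so every neighbour of $v$ lies in $C$ and $v$ is simplicial. Deleting $v$, I would check that a clique tree of $G-v$ is obtained from $T$ either by relabelling the node $C$ as $C\setminus\{v\}$ (when the latter is still a maximal clique of $G-v$) or by deleting the leaf $C$ (when $C\setminus\{v\}$ is contained in another maximal clique); in both cases the induced-subtree property is verified directly, after which induction gives that $G-v$, and then $G$, is chordal.

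For ``chordal $\Rightarrow$ clique tree'', Dirac's lemma supplies a simplicial vertex $v$; then $K:=N[v]$ is the unique maximal clique of $G$ containing $v$, and $G-v$ is chordal, so induction furnishes a clique tree $T'$ of $G-v$. Letting $\hat K$ be a maximal clique of $G-v$ with $N(v)\subseteq\hat K$, the maximal cliques of $G$ are precisely those of $G-v$ together with $K$, with $N(v)$ removed in the single case $\hat K=N(v)$ (when $N(v)$ happens to be maximal in $G-v$). Accordingly I would build a clique tree of $G$ from $T'$ by relabelling the node $N(v)$ as $K$ in that case, and otherwise by attaching a new node $K$ adjacent to $\hat K$. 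Checking the induced-subtree property is then routine: for a vertex $w\neq v$ the relevant family of cliques is, up to this relabelling or addition, the subtree of $T'$ already present (noting that $w\in N(v)$ implies $w\in\hat K$, so attaching $K$ at $\hat K$ keeps the family connected), and for $v$ itself it is the single node $K$.

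The substance of the proof, and the only place where care is needed, is the bookkeeping inside each inductive step: matching up the maximal cliques of $G$ with those of $G-v$, treating separately the two sub-cases according to whether $N(v)$ (resp.\ $C\setminus\{v\}$) is already a maximal clique of the smaller graph, and verifying in the degenerate situations (a clique tree reduced to a single node, or a vertex whose clique subtree is reduced to the deleted leaf) that the claimed modification of the tree still certifies the smaller graph. An alternative route for the forward implication --- showing that a maximum-weight spanning tree of the clique intersection graph, weighted by the sizes of pairwise intersections, is a clique tree --- is also available, but the symmetric simplicial-vertex induction above seems cleanest and makes the two directions nearly mirror images of one another.
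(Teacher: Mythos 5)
The paper does not prove this statement at all: it is quoted as a classical result with a citation to Gavril, so there is no in-paper argument to compare yours against. Judged on its own, your proof is correct and is essentially the standard simplicial-vertex argument (Dirac's lemma plus the clique bookkeeping one finds, e.g., in Golumbic's book or in Blair and Peyton's survey on clique trees). All the load-bearing steps check out: a leaf $C$ of a clique tree with neighbour $C'$ and a vertex $v\in C\setminus C'$ forces $T_v=\{C\}$ and hence $v$ simplicial; the maximal cliques of $G$ and of $G-v$ match up exactly as you describe, with the single exceptional case $\hat K=N(v)$ handled by relabelling rather than attaching; and the induced-subtree property survives each modification because for $w\in N(v)$ the relevant subtree already meets $\hat K$ (resp.\ already contains $C$ together with some other clique). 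The one point I would make fully explicit, since it is the only place where the "delete the leaf $C$" step could conceivably break, is that in that sub-case no vertex $w\neq v$ has $T_w=\{C\}$: since $C\setminus\{v\}$ is contained in another maximal clique of $G-v$, which is also a maximal clique of $G$, every such $w$ lies in a second node of $T$, so removing the leaf $C$ leaves $T_w$ nonempty and connected. You flag this among the "degenerate situations" to be checked, so I regard it as a presentational remark rather than a gap. Your closing observation that the forward direction can instead be obtained from the maximum-weight spanning tree of the clique intersection graph is accurate; the inductive route you chose has the advantage of giving both implications by one mechanism.
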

	 Notice that since a clique path tree is a particular clique tree, Theorem~\ref{thm:gavril2} also implies that path graphs are chordal. Recall that a graph is a chordal graph if it does not contain a \emph{hole} as an induced subgraph, where a hole is a chordless cycle of length at least four.~
	
	\begin{figure}[h]
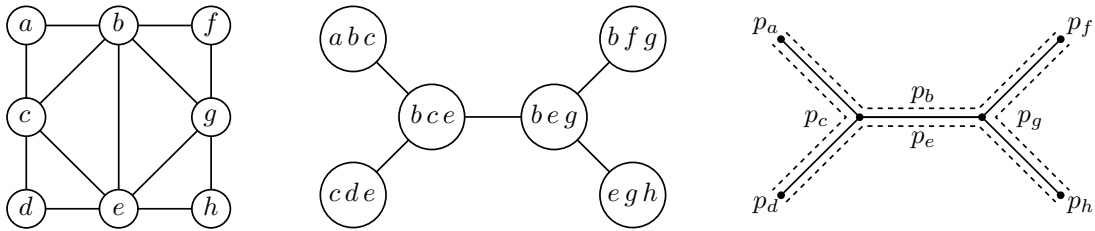

		\centering
		\begin{overpic}[width=14cm,percent]{images/tree_decomposition_2.eps}         
			
			\put(1.2,18.8){$a$}
			\put(10,18.8){$b$}
			\put(1.2,10.2){$c$}
			\put(1.2,1.4){$d$}
			\put(10,1.4){$e$}
			\put(18.5,18.8){$f$}
			\put(18.55,10.3){$g$}
			\put(18.5,1.4){$h$}
			
			\put(30.3,17.5){$a\,b\,c$}
			\put(30.3,2.7){$c\,d\,e$}
			\put(38,10.1){$b\,c\,e$}
			\put(49.3,10.1){$b\,e\,g$}
			\put(56.5,17.5){$b\,f\,g$}
			\put(56.5,2.7){$e\,g\,h$}
			
			\put(70.2,19){$p_a$}
			\put(85,12.5){$p_b$}
			\put(75,10.2){$p_c$}
			\put(70.2,2){$p_d$}
			\put(85,8.2){$p_e$}
			\put(99.7,19){$p_f$}
			\put(95,10.2){$p_g$}
			\put(99.7,2){$p_h$}
		\end{overpic}
		\caption{on the left a path graph $G$, in the center the clique path tree of $G$, on the right the host tree of $G$ and the collection $\mathcal{P}=\{p_a,\ldots,p_h\}$ that realizes $G$. Note that $p_a,p_d,p_f,p_g$ are composed by only one vertex.}
		\label{fig:path_graph}
	\end{figure}
	
	The class of path graphs is a class of graphs sandwiched between \emph{interval graphs} and \emph{chordal graphs}. A graph is an interval graph if it is the intersection graph of a family of intervals on the real line; or, equivalently, those path graphs whose host tree is a path. Interval graphs were characterized by Lekkerkerker and Boland~\cite{lekkerkerker-boland} as those chordal graphs with no \emph{asteroidal triples}, where an asteroidal triple is an independent set of three vertices such that each pair is connected by a path avoiding the neighborhood of the third vertex.
	
	Inspired by the work of Tarjan~\cite{tarjan}, Monma and Wei~\cite{mew} presented a general framework to recognize and realize intersection graphs having as intersection model all possible variants of the Path/Tree model. In particular, they characterized path graphs, \emph{directed path graphs} and \emph{rooted directed path graphs}, where the latter are subclasses of path graphs. A graph is a directed path graph if it is the intersection graph of a family of paths of a directed tree. Directed path graphs were characterized first by Panda~\cite{panda} by a list of forbidden induced subgraphs and then by Cameron, Ho\'{a}ng and L\'{e}v\^{e}que~\cite{cameron-hoang_1,cameron-hoang_2} by extending the notion of asteroidal triple. A graph is a \emph{rooted path graph} if it is the intersection graph of a family of paths of a rooted directed tree. No characterization of rooted path graphs by forbidden subgraphs or by concepts similar to asteroidal triples is currently known.
	
	Between the classes of graphs introduced above, the following inclusions hold by definition:
	\begin{equation*}
		\text{interval graphs $\subset$ rooted path graphs $\subset$ directed path graphs $\subset$ path graphs $\subset$ chordal graphs}.
	\end{equation*}
	The first recognition algorithm for path graphs was given by Gavril~\cite{gavril_UV_algorithm}, and it has $O(n^4)$ worst-case time complexity, where the input graph has $n$ vertices and $m$ edges. The two state-of-the-art fastest algorithms are due to Sch\"{a}ffer~\cite{schaffer} and Chaplick~\cite{chaplick}. Both have $O(mn)$ worst-case time complexity. The former algorithm relies on a sophisticated backtracking procedure and builds on Monma and Wei's characterization. The latter algorithm employs PQR-trees (a complex data structure). One more algorithm is proposed in \cite{dah} and claimed to run in $O(n+m)$ time. However it has only appeared as an extended abstract (see comments in [\cite{chaplick}, Section 2.1.4]).
	
	Gavril also gave the first algorithm to recognize directed path graphs~\cite{gavril_DV_algorithm}. Chaplick \emph{et al.}~\cite{chaplick-gutierrez} describe a linear algorithm able to decide whether a path graph is a directed path graph, by assuming to have the realization of the path graph as the intersection of a family of paths of a tree. This implies that the algorithms in \cite{chaplick,schaffer} can be extended to recognition algorithms for directed path graphs within the same time complexity. To the best of our knowledge these are the two state-of-the-art fastest algorithms for directed path graph recognition.
	
	\paragraph{Our Contribution}Building on Monma and Wei characterization~\cite{mew}, we give a good characterization of path graph membership within chordal graphs.~Monma and Wei characterization requires specific terminology and it is presented in detail in Section~\ref{section:mew_characterization}.~Briefly, in~\cite{mew} the input graph is decomposed recursively by \emph{clique separators} and in every decomposition step one has to solve a coloring problem (see Theorem~\ref{thm:mw} and Section~\ref{section:mew_characterization} for undefined terminology); the solution of the coloring problem is then used to build the clique path tree.  
	
	We read the coloring problem as a constrained proper vertex coloring problem in the \emph{attachedness graph} of the input graph, a graph depending on the input graph and on the clique separator (see Section~\ref{section:mew_characterization}), and, by exploiting the structure of such a graph, we reduce the constrained coloring problem to a vertex 2-coloring problem. The solution of the latter problem characterizes path graphs (see Section~\ref{section:dominance_and_antipodality}). On the one hand, this simplification implies a new polynomial-time algorithm for recognizing/realizing path graphs; on the other hand it allows us to exhibit all forbidden configurations to the property of being a path graph in the form of forbidden 2-colored subgraphs of the \emph{attachedness graphs} of the input graphs (Section~\ref{section:forbidden_subgraphs}). Main results are summarized in Theorem~\ref{cor:main} and in Theorem~\ref{cor:all}.
	
	The theoretical machinery developed in this paper is exploited algorithmically in~\cite{balzotti} where two algorithms to recognize path graphs and directed path graphs are presented. Refer to Section~\ref{sec:consequences} for a more detailed discussion on these algorithms.

	\paragraph{Notation}
	We denote by $[n]$ the interval $\{1,2,\ldots,n\}$, where $n$ is a natural number; for a subset $A$ of $V(G)$, we denote the graph induced by $A$ in $G$ by $G[A]$; a set of edges $F$ \emph{spans} a set of vertices $W$ if $W$ consists precisely of the endvertices of the edges of $F$; for a map $f:A\rightarrow B$ and $X\subseteq A$ we denote by $f(X)$ the image of $X$ under $f$, namely, $f(X)=\{f(x) \ |\ x\in X\}$. Finally, a \emph{2-edge-colored graph} is a graph whose edge-set is partitioned into two parts referred to as the \emph{edge colors} of $G$.
	
	\paragraph{Organization}
	The paper is organized as follows. In Section~\ref{section:mew_characterization} we give a detailed discussion of Monma and Wei's characterization of path graphs~\cite{mew} and we settle basic concepts and terminology. Section~\ref{section:dominance_and_antipodality} is devoted to our characterization (summarized in Theorem~\ref{cor:main} and Corollary~\ref{cor:above}), while its consequences are discussed in Section~\ref{sec:consequences}. Finally, in Section~\ref{section:forbidden_subgraphs}, we use the results to characterize paths graphs by a list of forbidden subgraphs in their attachedness graphs (see Theorem~\ref{cor:all}).

\section{Monma and Wei's characterization of path graphs}\label{section:mew_characterization}

In this section we present Monma and Wei's characterization of path graphs (Theorem~\ref{thm:mw}) which our characterization builds on. To this end we need a specific terminology.

A clique $Q$ is a \emph{clique separator} if the removal of $Q$ from $G$ disconnects $G$ into more than one connected component (without loss of generality, throughout the paper, we suppose that $G$ is connected). If graph $G$ has no clique separator, then $G$ is called \emph{atom}. In \cite{mew} it is proved that an atom is a path graph if and only if it is a chordal graph.

Given a clique separator $Q$ of a graph $G$ let $G-Q$ have $s$ connected components, $s\geq 2$ with vertex-sets $V_1,\ldots,V_s$, respectively. We define $\gamma_i=G[V_i\cup Q]$, $i=1,\ldots,s$ and $\Gamma_Q=\{\gamma_1,\ldots,\gamma_s\}$. 
A clique $K$ of a subgraph $\gamma$ of $\Gamma_Q$ is called a  \emph{relevant clique}, if $K \cap Q\not=\emptyset$ and $K\neq Q$. A \emph{neighboring subgraph} of a vertex $v\in V(G)$ is a member $\gamma\in \Gamma_Q$ such that $v$ belongs to some relevant clique $K$ of $\gamma$. For instance, in Figure~\ref{fig:example1} referring to the graph on the left, all the $\gamma_i$'s but $\gamma_5$ are neighboring subgraphs of the vertex in the north-east corner of the clique separator $Q$, while all the $\gamma_i$'s but $\gamma_2$ and $\gamma_3$ are neighboring subgraphs of the vertex in the south-west corner of $Q$. We say that two subgraphs $\gamma$ and $\gamma'$ are \emph{neighboring} if they are neighboring subgraphs of some vertex $v\in Q$; a subset $W\subseteq \Gamma_Q$ whose elements are neighboring subgraphs will be referred to as a \emph{neighboring set} (e.g, \emph{neighboring pairs}, \emph{neighboring triples} etc). Monma and Wei~\cite{mew}, defined the following binary relations on $\Gamma_Q$.

\begin{figure}[h]
\centering
\begin{overpic}[width=13cm]{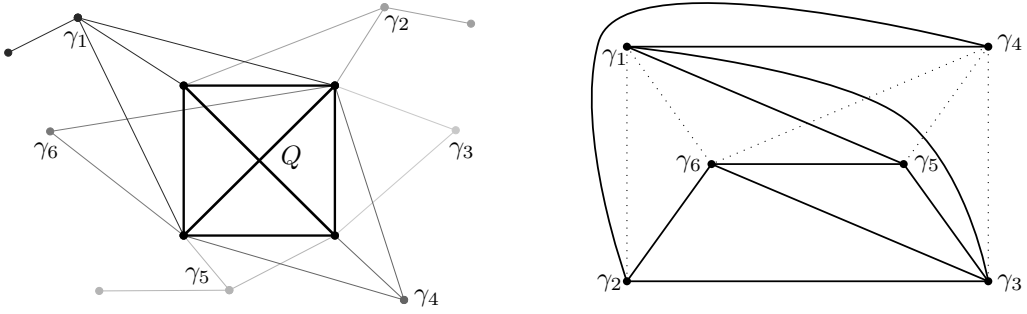}    
\put(6,26.8){$\gamma_1$} 
\put(38.5,28.5){$\gamma_2$} 
\put(45,15.5){$\gamma_3$} 
\put(41.5,0.5){$\gamma_4$} 
\put(18.3,2.7){$\gamma_5$} 
\put(3,15.5){$\gamma_6$}    
\put(28,14.5){$Q$}

\put(60.5,25){$\gamma_1$} 
\put(60,2){$\gamma_2$} 
\put(100.5,2){$\gamma_3$} 
\put(100.5,26.5){$\gamma_4$} 
\put(92.1,14.2){$\gamma_5$} 
\put(68,14.2){$\gamma_6$}    
\end{overpic}
     \caption{A graph $G$ (on the right) with a unique clique separator $Q$. Connected components separated by $Q$ are represented with different grays The $Q$-attachedness graph of $G$ is on the left. Remark that graph G is not a path graph.}
\label{fig:example1}
\end{figure}
\mybreak
\emph{Attachedness}, denoted by $\Join$ and defined by $\gamma\Join \gamma'$ if and only if there is a relevant clique $K$ of $\gamma$ and a relevant clique $K'$ of $\gamma'$ such that $K\cap K'\cap Q\neq\emptyset$. In particular, $\gamma$ and $\gamma'$ are neighboring subgraphs of each vertex $v\in K\cap K'\cap Q$. 
\mybreak
\noindent \emph{Dominance}, denoted by $\leq$ and defined by $\gamma\leq \gamma'$ if and only if $\gamma\Join \gamma'$ and for each relevant clique $K'$ of $\gamma'$ either $K\cap Q\subseteq K'\cap Q$ for each relevant clique $K$ of $\gamma$ or $K\cap K'\cap Q=\emptyset$ for each relevant clique $K$ of $\gamma$. Pairs of $\leq$-comparable subgraphs of the graph $G$ on the left of Figure~\ref{fig:example1} are joined by a dotted edge on the right of the same figure.
\mybreak
\noindent \emph{Antipodality}, denoted by $\leftrightarrow$ and defined by $\gamma \leftrightarrow\gamma'$ if and only if there are relevant cliques $K$ of $\gamma$ and $K'$ of $\gamma'$ such that $K\cap K'\cap Q\not=\emptyset$ and $K\cap Q$ and $K'\cap Q$ are inclusion-wise incomparable. Still referring to Figure~\ref{fig:example1}, pairs of antipodal subgraphs of $G$ are joined by a solid edge on the right of the same figure.
\mybreak
The following lemma is implied by definitions of $\leq$ and $\leftrightarrow$.

\begin{lemma}\label{lemma:elenco_trivial_1}
Let $Q$ be a clique separator of $G$ and let $\gamma,\gamma'\in\Gamma_Q$, the following hold:
\begin{enumerate}[label={\rm (\arabic*)}]\itemsep0em
\item\label{item:neighboring_leq} $\gamma\leq\gamma'\Rightarrow \gamma$ and $\gamma$ are neighboring of $v$, for all $v\in V(\gamma)\cap Q$
\item\label{item:neighboring_ant} $\gamma\leftrightarrow\gamma'\Rightarrow \gamma$ and $\gamma$ are neighboring of $v$, for all $v\in V(\gamma)\cap V(\gamma')\cap Q$.
\end{enumerate}
\end{lemma}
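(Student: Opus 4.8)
The plan is to obtain both implications by a direct unwinding of the definitions of $\Join$, $\leq$ and $\leftrightarrow$ (reading the conclusion in each case as ``$\gamma$ and $\gamma'$ are neighboring of $v$''); nothing beyond those definitions is needed, so the argument is a short definition chase whose only mild subtlety is the logical bookkeeping in part~(1).

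For (1), I would start from the hypothesis $\gamma\leq\gamma'$. By definition this already gives $\gamma\Join\gamma'$, hence a relevant clique $A$ of $\gamma$ and a relevant clique $B$ of $\gamma'$ with $A\cap B\cap Q\neq\emptyset$. The key step is to feed the clique $B$ back into the definition of dominance: in the dichotomy that $\leq$ imposes at $K'=B$, the alternative ``$K\cap B\cap Q=\emptyset$ for every relevant clique $K$ of $\gamma$'' is contradicted by $K=A$, so the other alternative must hold, namely $K\cap Q\subseteq B\cap Q$ for \emph{every} relevant clique $K$ of $\gamma$. Now take any $v\in V(\gamma)\cap Q$ lying in a relevant clique $K$ of $\gamma$, so that $\gamma$ is a neighboring subgraph of $v$; then $v\in K\cap Q\subseteq B$, so $v$ also lies in the relevant clique $B$ of $\gamma'$, whence $\gamma'$ is a neighboring subgraph of $v$ as well and $\gamma,\gamma'$ are neighboring of $v$.

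For (2), the argument is even more immediate. From $\gamma\leftrightarrow\gamma'$ I would simply extract the pair of relevant cliques $K$ of $\gamma$ and $K'$ of $\gamma'$ that witness antipodality; in particular $K\cap K'\cap Q\neq\emptyset$, the required incomparability of $K\cap Q$ and $K'\cap Q$ playing no role here. Then every vertex $v\in V(\gamma)\cap V(\gamma')\cap Q$ lying in $K\cap K'$ belongs to the relevant clique $K$ of $\gamma$ and to the relevant clique $K'$ of $\gamma'$, so both $\gamma$ and $\gamma'$ are neighboring subgraphs of $v$, as claimed.

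The closest thing to an obstacle is the observation underlying (1): the attachedness witness $A\cap B\cap Q\neq\emptyset$ says more than ``$\gamma$ and $\gamma'$ are neighboring of some single vertex'' — plugged into the dominance condition at the \emph{specific} clique $B$ it forces the inclusion branch there, and since that branch is universally quantified over the relevant cliques of $\gamma$, it is precisely what promotes the conclusion from ``for one such $v$'' to ``for every such $v$''. Everything else is routine bookkeeping with the notions of relevant clique and neighboring subgraph.
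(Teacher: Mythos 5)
Your proof is correct and is exactly the definition chase the paper intends: the paper states this lemma without proof, remarking only that it is ``implied by the definitions of $\leq$ and $\leftrightarrow$,'' and your write-up supplies precisely those details --- in particular the key observation in (1) that the attachedness witness rules out the empty-intersection branch of the dominance dichotomy at that specific relevant clique $B$ of $\gamma'$, so the inclusion branch, being universally quantified over the relevant cliques of $\gamma$, carries every such $v$ into $B$. The only caveat is one you handle implicitly and correctly: since $\gamma=G[V_i\cup Q]$ gives $V(\gamma)\supseteq Q$, the quantifier ``for all $v\in V(\gamma)\cap Q$'' must be read as ranging over the vertices of $Q$ that lie in some relevant clique of $\gamma$ (respectively, of both $\gamma$ and $\gamma'$ in (2)), which is the only reading under which the statement is true and the one the paper actually uses in its applications.
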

\noindent
Antipodality and dominance relations are disjoint binary relations on $\Gamma_Q$ whose union is the relation $\Join$. Therefore $(\gamma\leq \gamma'$, $\gamma'\leq \gamma$ or $\gamma\leftrightarrow \gamma')$ if and only if $(\gamma\Join\gamma')$. Both $\Join$ and $\leftrightarrow$ are symmetric and only $\leftrightarrow$ is irreflexive. Hence, after neglecting reflexive pairs, $(\Gamma_Q,\leftrightarrow)$,  $(\Gamma_Q,\Join)$ are simple undirected graphs on $\Gamma_Q$ referred to as, respectively, the \emph{$Q$-antipodality} and the \emph{$Q$-attachedness graph of $G$}. The edges of the $Q$-antipodality graph of $G$ are called \emph{antipodal edges} while those edges of the $Q$-attachedness graph of $G$ which are not antipodal edges, are called \emph{dominance edges}. The \emph{$Q$-dominance} graph of $G$ is the graph on $\Gamma_Q$ having as edges the dominance edges (i.e., the complement of $(\Gamma_Q,\leftrightarrow)$ in $(\Gamma_Q,\Join)$). Hence the edge-sets of the $Q$-antipodality and the $Q$-dominance graphs of $G$ partition the edge-set of the $Q$-attachedness graph of $G$, and the latter is naturally 2-edge colored by the antipodality edges and by the dominance edges. We adopt the pictorial convention to represent antipodality edges by thin lines and dominance edges by dotted lines. In Figure~\ref{fig:example1}, the graph $G$ on the left ($G$ is not a path graph) has a unique clique separator $Q$. Its $Q$-attachedness graph is shown on the right of the same figure. Due to transitivity of $\leq$, the following diagrams (which are triangles in the $Q$-attachedness graph of $G$) represent all possible cases involving three pairwise attached elements of $\Gamma_Q$.
\begin{equation}\label{diagram:triangles}
	\begin{gathered}
		\xymatrix@R-1pc@C=3mm{
			&  *[o]+<5pt>{\gamma''} &\\
			& & *[o]+<5pt>{\gamma'}\ar@{-}[ul]  \\
			*[o]+<5pt>{\gamma}\ar@{-}[uur]\ar@{-}[urr] & &\\ & (a) &}\qquad
		\xymatrix@R-1pc@C=3mm{
			&  *[o]+<5pt>{\gamma''} &\\
			& & *[o]+<5pt>{\gamma'}\ar@{.}[ul]  \\
			*[o]+<5pt>{\gamma}\ar@{.}[uur]\ar@{.}[urr] & &\\ & (b) &}\qquad	
		\xymatrix@R-1pc@C=3mm{
			&  *[o]+<5pt>{\gamma''} &\\
			& & *[o]+<5pt>{\gamma'}\ar@{-}[ul]  \\
			*[o]+<5pt>{\gamma}\ar@{.}[uur]\ar@{-}[urr] & &\\ & (c) &}\qquad
		\xymatrix@R-1pc@C=3mm{
			&  *[o]+<5pt>{\gamma''} &\\
			& & *[o]+<5pt>{\gamma'}\ar@{.}[ul]  \\
			*[o]+<5pt>{\gamma}\ar@{.}[uur]\ar@{-}[urr] & &\\ & (d) &}\qquad
		\xymatrix@R-1pc@C=3mm{
			&  *[o]+<5pt>{\gamma''} &\\
			& & *[o]+<5pt>{\gamma'}\ar@{-}[ul]  \\
			*[o]+<5pt>{\gamma}\ar@{.}[uur]\ar@{.}[urr] & &\\ & (e) &}
	\end{gathered}
\end{equation} 
Notice that by Lemma~\ref{lemma:elenco_trivial_1}, while any of the  triangles (b),(c),(d),(e) in \eqref{diagram:triangles} is induced by a neighboring triple, the same does not occur for the triangle in (a), namely, a triangle in the $Q$-antipodality graph is not necessarily induced by a neighboring triple. We are now ready to state Monma and Wei's characterization of path graphs which builds on the following notion.

\begin{definition}\label{def:strong_colorability}
Let $Q$ be a clique separator of $G$, we say that $G$ is \emph{strong $Q$-colorable} if there exists $f:\Gamma_Q\rightarrow [s]$ such that:
\begin{enumerate}[label={\rm (\arabic*)}]\itemsep0em
\item\label{com:mw_i} if $\gamma\leftrightarrow \gamma'$, then $f(\gamma)\not=f(\gamma')$;
\item\label{com:mw_ii} if $\{\gamma,\gamma',\gamma''\}$ is neighboring triple, then $|f(\{\gamma,\gamma',\gamma''\})|\leq 2$.
\end{enumerate}
\end{definition}
We refer to a coloring $f$ satisfying the conditions of Definition~\ref{def:strong_colorability} as a \emph{strong $Q$-coloring}. We use the term ``strong" because in Section~\ref{section:dominance_and_antipodality} we introduce a weaker notion of coloring and we prove that they are equivalent. 
\begin{remark}\label{rem:pro_coloring}
By condition~\ref{com:mw_i} in the definition above, any strong $Q$-coloring of $G$ is a proper coloring of the $Q$-antipodality graph of $G$. Condition~\ref{com:mw_ii} requires that all neighboring triples are 2-colored under this proper coloring.  
\end{remark}
\begin{remark}
Dominance is a reflexive and transitive relation. Hence $(\Gamma_Q,\leq)$ is a preorder. We assume that such a preorder is in fact a partial order. The latter assumption is not restrictive as showed implicitly in Sch\"{a}ffer~\cite{schaffer} and explicitly as follows. Let $\sim$ be the equivalence relation on $\Gamma_{Q}$ defined by  $\gamma\sim\gamma'\Leftrightarrow\left(\gamma\leq\gamma'\wedge \gamma'\leq\gamma\right)$, namely, $\sim$ is the standard equivalence relation associated with a preorder. It is easily checked that being antipodal, neighboring and strong $Q$-colorable, are properties that all \emph{pass to the quotient} $\Gamma_Q/\sim$. Hence, unless otherwise stated, we assume that $(\Gamma_Q,\leq)$ is a partial order for every clique separator $Q$ of $G$. In other words, we assume that $\Gamma_Q=\Gamma_Q/\sim$.
\end{remark}
The reason for studying strong $Q$-coloring relies on the fact that the two conditions in Definition~\ref{def:strong_colorability} are necessary for membership in path graphs. Indeed, suppose $G$ is a path graph and let $T$ be a clique path tree of $G$ (refer to Figure~\ref{fig:path_graph}). The removal of clique separator $Q$ from $G$ disconnects $G$ in more connected components, but it also disconnects $T$ in more subtrees, the branches of $T$. Each $\gamma\in \Gamma_Q$ is thus a path graph having its own clique tree (a subtree of $T$) which lies in exactly one branch of $T$. Coloring $f$ associates connected components of $G$ with the branches of $T$. The first condition implies that two antipodal connected components map into two distinct branches, for if not there is some $v\in Q$ such that the set of cliques of $G$ that contain $v$ is not connected in $T$. The second condition requires that all connected components that contain $v$ need to be in at most two distinct branches, for if not the set of cliques of $G$ that contain $v$ induces a graph in $T$ with a vertex of degree at least three. Summarizing, both conditions imply that the set of cliques that contain $v$, for all $v\in Q$, induces in $T$ a connected graph in which every vertex has degree at most two, i.e., a path, as required by Theorem~\ref{thm:gavril}. Monma and Wei's characterization shows that these conditions are also sufficient for path graphs membership. 
\begin{theorem}[Monma and Wei~\cite{mew}]\label{thm:mw}
A chordal graph $G$ is a path graph if and only if $G$ is an atom or for a clique separator $Q$ each graph $\gamma\in \Gamma_Q$ is a path graph and $G$ is strong $Q$-colorable.
\end{theorem}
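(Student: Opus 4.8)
\emph{Proof plan.} I would prove both implications through Gavril's Theorem~\ref{thm:gavril}: exploit a clique path tree for necessity, and build one for sufficiency. In each direction the atom case is separate, and for sufficiency it is precisely the fact recalled in Section~\ref{section:mew_characterization} (from~\cite{mew}) that a chordal atom is a path graph; so assume $G$ is not an atom, fix a clique separator $Q$ with $G-Q$ having components $V_1,\dots,V_s$, and put $\gamma_i=G[V_i\cup Q]$. The preliminary I would establish first, and use throughout, is that $Q$ sits at an \emph{end}: since $G[V_i]$ is connected and no maximal clique meeting $V_i$ can equal $Q$ (maximality of $Q$), the union of the clique subtrees of the vertices along a path inside $G[V_i]$ is connected and avoids $Q$; hence (i) in any clique path tree $T$ of $G$, all non-$Q$ maximal cliques contained in $V_i\cup Q$ (in particular all relevant cliques of $\gamma_i$) lie in a single connected component of $T-Q$ --- call it a \emph{branch} --- and (ii) in any clique path tree of $\gamma_i$ the clique $Q$ is a leaf.

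\emph{Necessity.} Each $\gamma_i$ is a path graph, being an induced subgraph of one. For strong $Q$-colorability I would take a clique path tree $T$ of $G$ and let $f(\gamma_i)$ be the branch of $T-Q$ containing the relevant cliques of $\gamma_i$ --- well defined by (i). One further remark is needed: if $v\in Q$ and maximal cliques $K,K'$ both containing $v$ lie in the same branch, then together with $Q$ they form a subpath of the path of cliques containing $v$, with $Q$ at one end; taking $K$ the one nearer to $Q$, connectedness of the clique subtree of any $w\in K'\cap Q$ forces $w\in K$, so $K\cap Q$ and $K'\cap Q$ are $\subseteq$-comparable. Condition~\ref{com:mw_i} of Definition~\ref{def:strong_colorability} then follows: the relevant cliques witnessing an antipodal pair share some $v\in Q$ but have incomparable traces on $Q$, so by the remark they lie in different branches, i.e., $f$ separates the pair. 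For condition~\ref{com:mw_ii}, the three relevant cliques witnessing a neighboring triple at $v\in Q$ all lie on the path of cliques containing $v$, and deleting $Q$ from a path leaves at most two subpaths, each within one branch, so the triple meets at most two branches.

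\emph{Sufficiency.} I would induct on $|V(G)|$. Given $Q$, path graphs $\gamma_i$ with clique path trees $T_i$ (each with $Q$ a leaf by (ii), so $T_i-Q$ is a subtree), and a strong $Q$-coloring $f$, I would build $T$ by attaching the subtrees $T_i-Q$ around a single copy of $Q$, and to one another, as dictated by $f$ and $\le$: when $\gamma_i\le\gamma_j$ and $f(\gamma_i)=f(\gamma_j)$, hang $T_i-Q$ below $T_j-Q$ (dominant subgraphs nearest $Q$); subgraphs that are attached but differently colored go on opposite sides of $Q$; the remaining freedom (unattached subgraphs) is used to keep, for each $v\in Q$, the cliques through $v$ a path. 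The node set of the resulting $T$ is then $\{Q\}$ together with the non-$Q$ maximal cliques of the $\gamma_i$'s, which are pairwise disjoint over $i$ and exhaust the maximal cliques of $G$ other than $Q$. For the clique-path property I would argue: a vertex of $V_i$ has all its maximal cliques inside $T_i$ (none of them being $Q$), where they already induce a path; and a vertex $v\in Q$ has, besides $Q$, only relevant cliques, which by condition~\ref{com:mw_ii} (applied to triples of neighboring subgraphs of $v$) come from at most two sides of $Q$, each side contributing a subpath ending at $Q$, so their union is a path. Theorem~\ref{thm:gavril} then finishes it.

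\emph{Main obstacle.} The hard part is making the attachment in the sufficiency direction precise and verifying the clique-path condition at the vertices of $Q$; this is exactly where the dominance/antipodality dichotomy --- the $2$-coloring hidden in Definition~\ref{def:strong_colorability} --- is indispensable. Condition~\ref{com:mw_i} forces any two attached subgraphs of the same $f$-class to be $\le$-comparable, so that the traces on $Q$ of their relevant cliques form a laminar family, and it is this laminar structure that lets the corresponding subtrees be nested coherently along one branch, the comparisons of traces telling each subtree where to hook in. One must then check that the branches so produced can be seated around $Q$ --- using condition~\ref{com:mw_ii} to keep every $v\in Q$ inside at most two of them --- consistently with the clique-path requirement at each $v\in Q$; this is a finite analysis keyed to the triangle types in~\eqref{diagram:triangles}. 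Everything else (closure under induced subgraphs, the bookkeeping of maximal cliques across the sets $V_i\cup Q$, and observations (i) and (ii)) is routine once these structural points are settled.
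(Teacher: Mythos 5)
The first thing to say is that the paper does not prove this theorem at all: it is imported from Monma and Wei's paper \cite{mew}, and the only argument the paper supplies is the informal necessity discussion in the paragraph preceding the statement (the branches of $T-Q$, connectivity and degree at most two of the set of cliques through each $v\in Q$). So there is no in-paper proof of sufficiency against which to check yours. Your necessity half is correct and is essentially that sketch made rigorous: observations (i) and (ii), the remark that two cliques through $v$ lying in the same branch have $\subseteq$-comparable traces on $Q$, and the derivation of conditions \ref{com:mw_i} and \ref{com:mw_ii} of Definition~\ref{def:strong_colorability} from them are all sound.

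The sufficiency half, however, remains a plan, and the part you defer is the entire content of the theorem. Concretely: the laminarity you extract from condition \ref{com:mw_i} (same-colored attached subgraphs are $\le$-comparable) constrains only pairs that are attached, whereas a single color class typically contains many pairwise non-attached subgraphs. You must arrange all of their subtrees along one branch so that, for \emph{every} $v\in Q$ simultaneously, the subgraphs having $v$ in a relevant clique appear consecutively and in an order compatible with the nesting of their traces at $v$. This is a consecutive-arrangement problem; condition \ref{com:mw_ii} is a purely local, triple-wise hypothesis, and promoting it to the required global arrangement is precisely the hard step of Monma and Wei's proof --- it is not settled by a ``finite analysis keyed to the triangle types,'' which classify only three pairwise attached subgraphs at a time. (A smaller point: the induction on $|V(G)|$ does no work here, since the hypothesis already hands you clique path trees of the $\gamma_i$; the recursion only matters when the theorem is iterated, as in Corollary~\ref{cor:local_mew}.) As a reconstruction of the paper's necessity-only discussion your write-up is accurate; as a proof of the quoted theorem it has a genuine gap in the sufficiency direction.
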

The following non recursive restatement of Theorem~\ref{thm:mw} is more well suited for our purposes. We recall that a graph with no clique separator (i.e., an atom) is a path graph if and only if it is chordal.
\begin{corollary}\label{cor:local_mew}
A chordal graph $G$ is a path graph if and only if $G$ is strong $Q$-colorable, for all  clique separators $Q$ of $G$.
\end{corollary}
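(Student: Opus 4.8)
The plan is to deduce the corollary from Theorem~\ref{thm:mw} by induction on $|V(G)|$, handling the two implications separately. The forward implication needs no induction. If $G$ is a path graph and $Q$ is any clique separator of $G$ (if $G$ has none there is nothing to prove), fix a clique path tree $T$ of $G$ and delete the vertex $Q$ from $T$; this produces the \emph{branches} of $T-Q$, and because each component $V_i$ is connected the clique-subtrees of all vertices of $V_i$ lie in a single branch, so we may define $f$ on $\Gamma_Q$ by sending $\gamma_i$ to that branch. One then verifies that $f$ satisfies \ref{com:mw_i} and \ref{com:mw_ii} of Definition~\ref{def:strong_colorability}: both conditions follow from the fact that, for every $v\in Q$, the cliques of $G$ through $v$ form a path in $T$, together with the elementary observations that deleting the single vertex $Q$ from a path leaves at most two sub-paths and that the sub-path of a tree-path between two of its vertices is the unique tree-path between them. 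This is essentially the necessity argument outlined just before Theorem~\ref{thm:mw}, so $G$ is strong $Q$-colorable for \emph{every} clique separator $Q$.

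For the converse, assume $G$ is chordal and strong $Q$-colorable for every clique separator $Q$ of $G$, and induct on $|V(G)|$. If $G$ is an atom it has no clique separator and, being chordal, is a path graph. Otherwise pick a clique separator $Q$ with decomposition $\Gamma_Q=\{\gamma_1,\dots,\gamma_s\}$; by hypothesis $G$ is strong $Q$-colorable, so by Theorem~\ref{thm:mw} it is enough to show that each $\gamma_i$ is a path graph. As $s\ge2$, each $\gamma_i=G[V_i\cup Q]$ is a chordal proper induced subgraph of $G$, so by the induction hypothesis it suffices to verify that $\gamma_i$ is strong $Q'$-colorable for every clique separator $Q'$ of $\gamma_i$.

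The heart of the proof is therefore a transfer lemma: \emph{every clique separator $Q'$ of $\gamma_i$ is a clique separator of $G$, and every strong $Q'$-coloring of $G$ induces a strong $Q'$-coloring of $\gamma_i$.} I would establish it in the following steps. (i) $Q'\ne Q$, since $\gamma_i-Q=G[V_i]$ is connected; consequently $Q\setminus Q'\ne\emptyset$ and $Q'\cap V_i\ne\emptyset$, and the clique $Q\setminus Q'$ lies in a single component of $\gamma_i-Q'$, say $C_0$. (ii) $Q'$, and more generally every relevant clique of a member of $\Gamma_{Q'}^{\gamma_i}$, is a maximal clique of $G$: a vertex dominating such a clique would be adjacent to a vertex lying outside $Q'$ in the relevant component of $\gamma_i-Q'$, hence would already belong to that component (the case in which the clique is $Q$ being immediate). (iii) $Q'$ separates $G$: every vertex of $G$ outside $\gamma_i$ lies in some $V_j$ with $j\ne i$, and $V_j$ attaches only to $Q$, so in $G-Q'$ it can merge only with $C_0$; hence every component of $\gamma_i-Q'$ other than $C_0$ is also a component of $G-Q'$, and since there are $\ge2$ of the former, $G-Q'$ is disconnected. (iv) By the same reasoning distinct components of $\gamma_i-Q'$ stay distinct in $G-Q'$ (a merger would have to be routed through $Q\setminus Q'\subseteq C_0$), which yields an injection $\iota\colon\Gamma_{Q'}^{\gamma_i}\hookrightarrow\Gamma_{Q'}^{G}$. (v) By (ii) a relevant clique of a member of $\Gamma_{Q'}^{\gamma_i}$ is a relevant clique of its $\iota$-image, so antipodality and the neighboring relation in $\gamma_i$ are inherited by the $\iota$-images in $G$; therefore pulling a strong $Q'$-coloring of $G$ back along $\iota$ (and relabelling so that the colour range indexes the components of $\gamma_i-Q'$) gives a strong $Q'$-coloring of $\gamma_i$. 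Granting the lemma, the global hypothesis provides a strong $Q'$-coloring of $G$, which restricts to the one needed for $\gamma_i$; this closes the induction.

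The main obstacle is the transfer lemma, and inside it steps (iii)--(iv): controlling how the connected components reorganize when the separator $Q'$ of $\gamma_i$ is regarded inside $G$, i.e.\ ruling out that the vertices of $G$ outside $\gamma_i$ glue several components of $\gamma_i-Q'$ together. Everything else — the forward direction, the verification of \ref{com:mw_i}--\ref{com:mw_ii} from a clique path tree, and the preservation of clique-maximality — amounts to routine manipulation of clique-paths in $T$ and of the fact that each $V_j$ meets the rest of $G$ only inside $Q$.
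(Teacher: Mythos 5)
Your proof is correct, and it takes the only natural route: the paper gives no proof of this corollary at all, presenting it as an immediate non-recursive restatement of Theorem~\ref{thm:mw} (together with the necessity discussion preceding that theorem, which is your forward direction). The induction plus the transfer lemma you isolate --- that a clique separator $Q'$ of $\gamma_i$ is a maximal clique of $G$ separating $G$, that the components of $\gamma_i-Q'$ inject into those of $G-Q'$ because the other $V_j$'s can only attach to the component containing $Q\setminus Q'$, and that relevant cliques, antipodality and neighboring triples are preserved under this injection so a strong $Q'$-coloring of $G$ pulls back to one of $\gamma_i$ --- is precisely the verification the paper (following Monma and Wei) leaves tacit, and all of its steps check out.
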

By Corollary~\ref{cor:local_mew}, deciding whether a graph $G$ is a path graph is tantamount to decide whether $G$ is strong $Q$-colorable for each separator $Q$. It is thus natural to wonder whether there are obstructions to strong $Q$-colorability and, in case, how do such obstructions look like in attachedness graph of $G$. One of such obstruction is easily recognized (see \cite{mew}): let $\{\gamma,\gamma',\gamma''\}\subseteq \Gamma_Q$ be a neighboring triple and suppose that $\gamma$, $\gamma'$ and $\gamma''$ are pairwise antipodal. Hence $\{\gamma,\gamma',\gamma''\}$ induces a triangle in the $Q$-antipodal graph of $G$ though not all triangles of the $Q$-antipodality graph correspond to neighboring triples (in Section~\ref{section:forbidden_subgraphs}, we give a detailed account of this phenomenon). We refer any such triple to as a \emph{full antipodal triple}. It is clear that if $\Gamma_Q$ contains a full antipodal triple, then $G$  is not strong $Q$-colorable because the two conditions in Definition~\ref{def:strong_colorability} cannot be both satisfied. 
For later reference we formalize this easy fact in a lemma.
\begin{lemma}\label{lemma:nofull}
Let $Q$ be a clique separator of $G$. If $G$ is strong $Q$-colorable, then $\Gamma_Q$ contains no full antipodal triple.
\end{lemma}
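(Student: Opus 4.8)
The plan is to argue by contradiction, working directly from the two defining conditions of a strong $Q$-coloring, together with the pigeonhole principle. Suppose $G$ is strong $Q$-colorable, witnessed by a map $f\colon\Gamma_Q\to[s]$ as in Definition~\ref{def:strong_colorability}, and suppose toward a contradiction that $\{\gamma,\gamma',\gamma''\}\subseteq\Gamma_Q$ is a full antipodal triple; that is, it is a neighboring triple whose three members are pairwise antipodal.

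First I would invoke condition~\ref{com:mw_ii}: since $\{\gamma,\gamma',\gamma''\}$ is a neighboring triple, we have $|f(\{\gamma,\gamma',\gamma''\})|\le 2$, so among the three values $f(\gamma),f(\gamma'),f(\gamma'')$ at least two coincide; after relabeling we may assume $f(\gamma)=f(\gamma')$. Next I would invoke condition~\ref{com:mw_i}: because the triple is full antipodal, $\gamma\leftrightarrow\gamma'$ holds, whence $f(\gamma)\neq f(\gamma')$, contradicting the previous line. Hence $\Gamma_Q$ can contain no full antipodal triple.

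Equivalently, one can phrase the argument through Remark~\ref{rem:pro_coloring}: a strong $Q$-coloring is a proper coloring of the $Q$-antipodality graph in which every neighboring triple receives at most two colors; a full antipodal triple induces a triangle in the $Q$-antipodality graph that is simultaneously a neighboring triple, and a proper coloring of a triangle uses three distinct colors, which is incompatible with the two-color restriction.

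I do not expect any real obstacle here: the statement is an immediate consequence of combining the two conditions of Definition~\ref{def:strong_colorability} with a counting argument. The only point that warrants a word of care is the observation — built into the definition of a full antipodal triple — that such a triple is indeed a neighboring triple, so that condition~\ref{com:mw_ii} genuinely applies to it.
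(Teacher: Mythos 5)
Your proof is correct and coincides with the paper's own (informal) justification: the paper simply observes that a full antipodal triple makes conditions~\ref{com:mw_i} and~\ref{com:mw_ii} of Definition~\ref{def:strong_colorability} mutually unsatisfiable, which is precisely your pigeonhole argument. Nothing further is needed.
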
	  
\noindent
Besides full antipodal triples there are many other obstructions and in this paper we exhibit the exhaustive list of such obstructions in the form of forbidden subgraphs of the $Q$-attachedness graph of $G$. The result is achieved by means of a new characterization of strong $Q$-coloring (Section~\ref{section:dominance_and_antipodality}) which is actually a weakening of the original notion. Our characterization yields a reduction of strong $Q$-colorability of $G$ to 2-colorability of the members of a certain partition of $\Gamma_Q$ providing at the same time the so-called \emph{good characterization} for path graph membership within chordal graphs, namely, it proves that path graph membership of chordal graphs is in $\text{NP}\cap\text{CoNP}$ without resorting to existing polynomial-time algorithms (Theorem~\ref{cor:main} and Corollary~\ref{cor:above}).

\section{Characterizing path graphs through weak colorings}\label{section:dominance_and_antipodality}
By Remark~\ref{rem:pro_coloring}, a strong $Q$-coloring is a proper coloring of the $Q$-antipodal graph of $G$ fulfilling the condition that neighboring triples are 2-colored. Full antipodal triples are special neighboring triples and play a distinguished role in characterizing path graphs. This fact is somehow hidden in Monma and Wei' characterization~\cite{mew} because the absence of full antipodal triples in $\Gamma_Q$ is a direct consequence of Theorem~\ref{thm:mw} (recall Lemma~\ref{lemma:nofull}). Nonetheless, as we show, by singling out the property of being full antipodal triple free, entails a stiff structure of strong $Q$-colorings of $G$ (Theorem~\ref{thm:canonical} via Lemma~\ref{remark:D_i_is_a_partition} and Lemma~\ref{lemma:elenco_trivial_2}) and an even stiffer structure of the $Q$-antipodality graph of $G$ (see Corollary~\ref{cor:above}). Such a structure allows to easily find \emph{weak $Q$-colorings}---simpler and more refined proper colorings of the $Q$-antipodality graph of $G$ (see Definition~\ref{def:partial_and_weak})---, which we prove to be equivalent to strong-$Q$-colorings (Theorem~\ref{lemma:characterization_1_only_if}). These are the bases of our characterization.        

For a clique separator $Q$ of $G$, let $\Upper_Q=\{u\in\Gamma_Q \,|\, u\not\leq\gamma, \text{ for all } \gamma\in\Gamma_Q\}$ the set of \emph{upper bounds} of $\Gamma_Q$ with respect to $\leq$.

From now on we fix $(u_1,u_2,\ldots,u_\ell)$ an ordering of $\Upper_Q$. For all $i,j\in[\ell]$ and $i<j$ we define

\begin{equation}\label{dgamma}	
	D_{i}^Q=\{\gamma\in \Gamma_Q \ |\ \gamma\leq u_i  \text{ and } \gamma\nleq u_j,\,\,\forall j\in[\ell]\setminus \{i\}\},
\end{equation}
\begin{equation}\label{dgammagamma'}
	D_{i,j}^Q=\{\gamma\in\Gamma_Q \ |\ \gamma\leq u_i,\gamma\leq u_j  \text{ and } \gamma\nleq u_k,\,\,\forall k\in[\ell]\setminus \{i,j\}\},
\end{equation}	
namely, $D_i$ consists of the elements of $\Gamma_Q$ dominated only by $u_i$ and no other upper bound, while $D_{i,j}$ consists of those elements of $\Gamma_Q$ dominated only by $u_i$ and $u_j$ and no other upper bound. Referring to Figure~\ref{fig:example1}, $\Upper_Q=\{\gamma_1,\gamma_4\}$, if we fix the ordering $(u_1,u_2)=(\gamma_1,\gamma_4)$, then $D_1=\{\gamma_1,\gamma_2\}$, $D_2=\{\gamma_3,\gamma_4\}$ and $D_{1,2}=\{\gamma_6\}$. Moreover let
\begin{equation*}\label{dstorto}
	\mathcal{D}^Q=\Big\{D^Q_i \,|\, i\in[\ell]\Big\}\cup\Big\{D^Q_{i,j} \,|\, i,j\in[\ell],i<j\Big\}.
\end{equation*}	
and refer to $\mathcal{D}^Q$ as the \emph{$Q$-skeleton} of $G$. If no confusion arises, we omit the superscript $Q$ in the notation above.

\begin{lemma}\label{remark:D_i_is_a_partition}
	Let $Q$ be a clique separator of $G$ and let $\mathcal{D}$ be the $Q$-skeleton of $G$. If $\Upper_Q$ contains no full antipodal triple, then $\mathcal{D}$ is a partition of $\Gamma_Q$.
\end{lemma}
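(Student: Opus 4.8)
The plan is to show that the sets in $\mathcal{D}$ cover $\Gamma_Q$ and are pairwise disjoint. Disjointness is almost immediate from the definitions: if $\gamma\in D_i\cap D_j$ with $i\neq j$, then $\gamma\leq u_i$ is forced by membership in $D_i$ but $\gamma\nleq u_i$ is forced by membership in $D_j$ (since $i\in[\ell]\setminus\{j\}$), a contradiction; the cases $D_i\cap D_{j,k}$ and $D_{i,j}\cap D_{k,l}$ are handled the same way by comparing which upper bounds are allowed to dominate $\gamma$. So the whole content is the covering statement: every $\gamma\in\Gamma_Q$ lies in exactly one of the $D_i$'s or $D_{i,j}$'s, i.e. $\gamma$ is dominated by at most two elements of $\Upper_Q$. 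This is where the hypothesis that $\Upper_Q$ contains no full antipodal triple must be used.

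First I would record the trivial facts. Fix $\gamma\in\Gamma_Q$ and let $U(\gamma)=\{u\in\Upper_Q \mid \gamma\leq u\}$. I claim $U(\gamma)\neq\emptyset$: indeed, since $(\Gamma_Q,\leq)$ is a (finite) partial order, $\gamma$ lies below some maximal element, and the maximal elements of $\leq$ are exactly the members of $\Upper_Q$ (unwinding the definition $\Upper_Q=\{u \mid u\not\leq\gamma \ \forall\gamma\}$: an element is in $\Upper_Q$ iff nothing is strictly above it — here I should be slightly careful that $u\not\leq\gamma$ for $\gamma\neq u$ is what "maximal" means, reflexivity aside). If $|U(\gamma)|\leq 2$ we are done: $\gamma\in D_i$ when $U(\gamma)=\{u_i\}$ and $\gamma\in D_{i,j}$ when $U(\gamma)=\{u_i,u_j\}$. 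So assume for contradiction $|U(\gamma)|\geq 3$ and pick three distinct $u_a,u_b,u_c\in U(\gamma)$, i.e. $\gamma\leq u_a$, $\gamma\leq u_b$, $\gamma\leq u_c$.

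The key step — and the main obstacle — is to derive from $\gamma\leq u_a$, $\gamma\leq u_b$, $\gamma\leq u_c$ that $\{u_a,u_b,u_c\}$ is a full antipodal triple, contradicting the hypothesis. I expect this to go as follows. Since $\gamma\leq u_a$ we have $\gamma\Join u_a$, so there is a vertex $v_{a}\in K\cap K'\cap Q$ for relevant cliques $K$ of $\gamma$, $K'$ of $u_a$; similarly for $b$ and $c$. The delicate part is to produce a \emph{single} vertex $v\in Q$ witnessing attachedness of $\gamma$ to all three of $u_a,u_b,u_c$ simultaneously, and then to show the three sets $K_a\cap Q$, $K_b\cap Q$, $K_c\cap Q$ (relevant cliques of $u_a,u_b,u_c$ through $v$) are pairwise incomparable. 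Incomparability should come from maximality in $\leq$: if, say, $K_a\cap Q\subseteq K_b\cap Q$ held for the appropriate cliques, one would leverage the definition of dominance to conclude $u_a\leq u_b$ (using that $\gamma\leq u_a$ and $\gamma\leq u_b$ force the relevant cliques of $u_a$ and $u_b$ meeting $Q$ near $v$ to be tightly nested), contradicting $u_a,u_b\in\Upper_Q$ being incomparable maximal elements. Producing the common vertex $v$ is the real work: I would argue that because $\gamma\leq u_a$, \emph{every} relevant clique of $\gamma$ meeting $u_a$ does so inside one fixed relevant clique of $u_a$ (that is the force of dominance), and intersect this information across $a,b,c$ — picking any vertex $v$ in a relevant clique $K$ of $\gamma$ that is involved in all three attachedness relations, which exists since $\gamma$ has finitely many relevant cliques and a pigeonhole/consistency argument on the dominance conditions pins them down. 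Once $v$ and the pairwise incomparability of $K_a\cap Q, K_b\cap Q, K_c\cap Q$ are in hand, pairwise antipodality of $u_a,u_b,u_c$ follows directly from the definition of $\leftrightarrow$, and $\{u_a,u_b,u_c\}$ is a neighboring triple (all three contain $v$), hence a full antipodal triple — the desired contradiction. Therefore $|U(\gamma)|\leq 2$, covering holds, and combined with disjointness $\mathcal{D}$ is a partition.
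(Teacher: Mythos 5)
Your overall skeleton is the paper's: disjointness holds by construction, covering reduces to showing every $\gamma\in\Gamma_Q$ is dominated by at most two elements of $\Upper_Q$, and a $\gamma$ with three upper bounds must be shown to produce a full antipodal triple in $\Upper_Q$. But the way you propose to carry out that last step has a genuine gap. You want to prove pairwise antipodality of $u_a,u_b,u_c$ by exhibiting relevant cliques $K_a,K_b,K_c$ through a common vertex $v$ and showing $K_a\cap Q$, $K_b\cap Q$, $K_c\cap Q$ pairwise incomparable, arguing that a containment such as $K_a\cap Q\subseteq K_b\cap Q$ would let you ``conclude $u_a\leq u_b$'' and contradict maximality. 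That implication is false: dominance is a condition quantified over \emph{all} relevant cliques of both subgraphs, so a single nested pair of traces on $Q$ does not yield $u_a\leq u_b$, and the particular cliques you single out may perfectly well have nested traces without any contradiction. The incomparability you need cannot be extracted this way.

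The repair --- and what the paper actually does --- is to avoid the clique level entirely for the antipodality part. A repeated application of Lemma~\ref{lemma:elenco_trivial_1} (which is precisely the ``common vertex'' fact you are trying to re-derive by pigeonhole: $\gamma\leq u$ forces $K\cap Q\subseteq K'\cap Q$ for \emph{every} relevant clique $K$ of $\gamma$ and a suitable relevant clique $K'$ of $u$, so any $v\in K\cap Q$ witnesses that $\gamma$ and $u$ are neighboring) shows that $\{u_a,u_b,u_c\}$ is a neighboring triple, hence pairwise attached. Since $\Join$ is the disjoint union of the dominance and antipodality relations, and distinct elements of $\Upper_Q$ are $\leq$-incomparable by maximality, pairwise attachedness immediately gives pairwise antipodality, hence a full antipodal triple. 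With that substitution your argument closes; as written, it does not.
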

\begin{proof}
	The members of $\mathcal{D}$ are pairwise disjoint by construction. Hence $\mathcal{D}$ is a partition of $\Gamma_Q$ if and only if $\cup_{D\in\mathcal{D}}D=\Gamma_Q$ and this happens if and only if every $\gamma\in \Gamma_Q\setminus \Upper_Q$ has at most two upper bounds in $\Upper_Q$. If some $\gamma\in \Gamma_Q\setminus \Upper_Q$ had three upper bounds in $\Upper_Q$, then by a repeated application of Lemma~\ref{lemma:elenco_trivial_1}, the triple formed by these upper bounds would be a neighboring triple consisting of three mutually $\leq$-incomparable elements and hence a full antipodal triple contained in $\Upper_Q$.
\end{proof}

\begin{lemma}\label{lemma:elenco_trivial_2}
	Let $Q$ be a clique separator of $G$. Let $i,\,j\in[\ell]$, $i<j$. If $\Upper_Q$ contains no full antipodal triple, then the following hold:
	\begin{enumerate}[label={\rm(\arabic*)}]\itemsep0em
		\item\label{item:antipodal_k_k'_1} $\gamma\leftrightarrow\gamma'$, $\gamma\in D_{i,j}$ and $\gamma'\not\in D_{i,j}$ $\Rightarrow$ $\gamma'\in D_i\cup D_j$,
		\item\label{item:antipodal_D_i_Upper} $\gamma\leftrightarrow\gamma'$, $\gamma\in D_i$ and $\gamma'\not\in D_i$ $\Rightarrow\gamma\leftrightarrow u_k$ for any $k\neq i$ such that $\gamma'\leq u_k$,
		\item\label{item:antipodal_square} $\gamma\leftrightarrow\gamma'$, $\gamma\in D_i$ and $\gamma'\in D_j$ $\Rightarrow\gamma\leftrightarrow u_j$ and $\gamma'\leftrightarrow u_i$.
	\end{enumerate}
\end{lemma}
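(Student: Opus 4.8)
The three statements in Lemma~\ref{lemma:elenco_trivial_2} all follow the same template: we are given an antipodal edge $\gamma\leftrightarrow\gamma'$, we know one (or both) of its endpoints lies in a specified block of the $Q$-skeleton, and we must deduce where the other endpoint lies and/or that certain additional antipodal edges to upper bounds exist. The unifying tool is the transitivity of $\leq$ combined with the trichotomy ``$\gamma\Join\gamma'$ iff ($\gamma\leq\gamma'$ or $\gamma'\leq\gamma$ or $\gamma\leftrightarrow\gamma'$)'', together with Lemma~\ref{lemma:elenco_trivial_1} and the standing hypothesis that $\Upper_Q$ contains no full antipodal triple (so by Lemma~\ref{remark:D_i_is_a_partition} the skeleton $\mathcal D$ actually partitions $\Gamma_Q$, and every $\gamma\notin\Upper_Q$ has at most two upper bounds). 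I would handle the three items in order, recycling the conclusions of earlier items.

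\textbf{Item \ref{item:antipodal_k_k'_1}.} Since $\gamma\in D_{i,j}$ we have $\gamma\leq u_i$ and $\gamma\leq u_j$ and $\gamma$ is dominated by no other upper bound. The edge $\gamma\leftrightarrow\gamma'$ gives $\gamma\Join\gamma'$, and since $\leftrightarrow$ excludes $\leq$-comparability, $\gamma'$ is $\leq$-incomparable to $\gamma$; in particular $\gamma'\notin\Upper_Q$ would still need an argument, but more to the point $\gamma'\ne\gamma$ and $\gamma'$ lies in some block of $\mathcal D$. I want to rule out $\gamma'\in D_k$ or $\gamma'\in D_{k,l}$ with $\{k,l\}\cap\{i,j\}=\emptyset$, and also $\gamma'\in D_{i,j}$ is excluded by hypothesis. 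The mechanism: if $\gamma'\leq u_k$ with $k\notin\{i,j\}$, I would combine this with $\gamma\leq u_i$, $\gamma\leq u_j$ and the antipodal relevant cliques witnessing $\gamma\leftrightarrow\gamma'$ to force, via Lemma~\ref{lemma:elenco_trivial_1} applied repeatedly (chasing the common vertex $v\in K\cap K'\cap Q$ up to each of $u_i,u_j,u_k$), that $\{u_i,u_j,u_k\}$ is a neighboring triple of mutually incomparable elements — a full antipodal triple in $\Upper_Q$, contradiction. So $\gamma'$ is dominated by at most the upper bounds in $\{i,j\}$, and since $\gamma'\notin D_{i,j}$ it is dominated by exactly one of them, i.e.\ $\gamma'\in D_i\cup D_j$. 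The delicate point is checking that the vertex witnessing antipodality really does ``propagate'' all the way up to the relevant upper bounds; this is where Lemma~\ref{lemma:elenco_trivial_1}\ref{item:neighboring_ant} and the definition of dominance (the clause $K\cap Q\subseteq K'\cap Q$ or disjointness) must be invoked carefully, because dominance only guarantees neighboring of $v$ for $v\in V(\gamma)\cap Q$, not automatically the incomparability one needs.

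\textbf{Items \ref{item:antipodal_D_i_Upper} and \ref{item:antipodal_square}.} For \ref{item:antipodal_D_i_Upper}: given $\gamma\in D_i$, $\gamma'\notin D_i$ and $\gamma\leftrightarrow\gamma'$, fix $k\ne i$ with $\gamma'\leq u_k$ (such a $k$ exists by the partition property applied to $\gamma'$). Pick relevant cliques $K$ of $\gamma$, $K'$ of $\gamma'$ with $K\cap K'\cap Q\ne\emptyset$ and $K\cap Q,K'\cap Q$ incomparable. Since $\gamma'\leq u_k$, transport $K'$ up to a relevant clique $K''$ of $u_k$ with $K'\cap Q\subseteq K''\cap Q$; then $v\in K\cap K'\cap Q\subseteq K\cap K''\cap Q$, so $\gamma\Join u_k$. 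It remains to exclude $\gamma\leq u_k$ (impossible since $\gamma\in D_i$ and $k\ne i$) and $u_k\leq\gamma$ (impossible since $u_k\in\Upper_Q$ is maximal, unless $u_k=\gamma$, but then $\gamma\in\Upper_Q$, and one checks this contradicts $\gamma'\notin D_i$ with $\gamma\leftrightarrow\gamma'$ — or handle it directly). Hence $\gamma\leftrightarrow u_k$. Item \ref{item:antipodal_square} is then the symmetric instantiation: $\gamma\in D_i$, $\gamma'\in D_j$ with $i\ne j$, so $\gamma'\notin D_i$ and $\gamma'\leq u_j$ lets item \ref{item:antipodal_D_i_Upper} conclude $\gamma\leftrightarrow u_j$; swapping roles gives $\gamma'\leftrightarrow u_i$.

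\textbf{Main obstacle.} The genuinely non-routine part is the ``lifting'' step common to all three items: showing that an antipodal witness $(K,K')$ between $\gamma$ and $\gamma'$ induces an antipodal (or at least incomparable-and-attached) witness between $\gamma$ and an upper bound $u_k$ dominating $\gamma'$. One must verify both that the common vertex survives ($K\cap u_k$-clique $\cap\,Q\ne\emptyset$) and that inclusion-incomparability is preserved or upgraded; the dominance definition gives $K'\cap Q\subseteq K''\cap Q$ for the lifted clique $K''$, which preserves attachedness but one still has to argue the pair ends up antipodal rather than accidentally comparable, ruling out $\gamma\le u_k$ using membership in $D_i$. Once this lifting lemma is pinned down cleanly, items \ref{item:antipodal_D_i_Upper} and \ref{item:antipodal_square} are immediate and item \ref{item:antipodal_k_k'_1} follows by the three-upper-bounds-forces-full-antipodal-triple argument already used in the proof of Lemma~\ref{remark:D_i_is_a_partition}.
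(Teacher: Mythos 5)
Your proposal is correct and follows essentially the same route as the paper: item (1) via the observation that a third upper bound would force $\{u_i,u_j,u_k\}$ to be a neighboring, pairwise incomparable (hence full antipodal) triple; item (2) by lifting the antipodality witness to $u_k$ (which is exactly what Lemma~\ref{lemma:elenco_trivial_1} packages), then using attachedness plus the exclusion of $\gamma\leq u_k$ (from $\gamma\in D_i$) and of $u_k\leq\gamma$ (maximality) to conclude antipodality; and item (3) as a double application of (2). The paper simply cites Lemma~\ref{lemma:elenco_trivial_1} and Lemma~\ref{remark:D_i_is_a_partition} where you unfold the dominance definition at the level of relevant cliques, but the argument is the same.
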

\begin{proof}
	If \ref{item:antipodal_k_k'_1} fails to hold, then there exists $k\in[\ell]\setminus\{i,j\}$ such that $\gamma'\leq u_k$. Hence Lemma~\ref{lemma:elenco_trivial_1} implies that $\{u_i,u_j,u_k\}$ is a full antipodal triple, contradicting that $\Upper_Q$ contains no such a triple. To prove \ref{item:antipodal_D_i_Upper} observe that by Lemma~\ref{remark:D_i_is_a_partition}, if $\gamma'\not\in D_i$ then $\gamma'\leq u_k$ for some $k\neq i$. By Lemma~\ref{lemma:elenco_trivial_1}, $\gamma$ and $u_k$ are neighboring. Hence $\gamma$ and $u_k$ are attached with $u_k\in \Upper_Q$. Thus either $\gamma\leftrightarrow u_k$ or $\gamma\leq u_k$. The latter relation cannot hold by the definition of $D_i$. We therefore conclude that $\gamma\leftrightarrow u_k$ as claimed. Finally, \ref{item:antipodal_square} is a straightforward consequence of \ref{item:antipodal_D_i_Upper}.
\end{proof}
By the two previous lemmas, the $Q$-antipodality graph is equipped with an extra structure: the $Q$-skeleton of $G$. By looking at how strong $Q$-colorability acts within and between the members of the $Q$-skeleton of $G$ provides a significant (and easily checkable) refinement of the two conditions in Definition~\ref{def:strong_colorability}.

By Lemma~\ref{remark:D_i_is_a_partition} the $Q$-skeleton $\mathcal{D}$ of $G$ partitions the vertex set of its $Q$-antipodality graph $H$. This partition induces a partition of the edges of $H$ into two set:
\begin{itemize}\itemsep0em
	\item the set of \emph{cross edges}, namely, those edges of $H$ having their ends in different members of $\mathcal{D}$, 
	\item the set of \emph{intra edges}, namely, those edges having both ends in the same member of  $\mathcal{D}$.  
\end{itemize}
Let $H^\cross$ be the subgraph of $H$ spanned by the cross edges. Analogously, let $H^\intra$ be the subgraph of $H$ spanned by the intra edges. Observe that 
$$V(H^\cross)=\text{$\{\gamma\in\Gamma_Q \,|\, \gamma\in D$, for some $D\in\mathcal{D}$, and $\gamma\leftrightarrow\gamma'$, for some $\gamma'\not\in D\}$}.$$

\begin{theorem}\label{thm:canonical}
	Let $Q$ be a clique separator of $G$ and let $\ell=|\Upper_Q|$. If $G$ is strong $Q$-colorable, then there exists a strong $Q$-coloring $f:\Gamma_Q\rightarrow [\ell+1]$ satisfying the following:
	\begin{enumerate}[label={\rm(\alph*)}]\itemsep0em
		\item\label{item:a} for all $i\in[\ell]$, $f(u_i)=i$, 
		\item\label{item:b} for all $i\in[\ell]$, for all $\gamma\in D_i$, $f(\gamma)\in\{i,\ell+1\}$,
		\item\label{item:c} for all $i,\,j\in[\ell]$, $i<j$, for all $\gamma\in D_{i,j}$, $f(\gamma)\in\{i,j\}$,
		\item\label{item:d} for all $i\in[\ell]$, for all $\gamma\in D_i$ if $\exists u\in\Upper_Q$ such that $\gamma\leftrightarrow u$, then $f(\gamma)=i$,
		\item\label{item:e} for all $i,\,j\in[\ell]$, $i<j$, for $k\in\{i,j\}$, and for all $\gamma\in D_{i,j}$, if $\exists\gamma'\in D_{k}$ such that $\gamma\leftrightarrow\gamma'$, then $f(\gamma)=\{i,j\}\setminus\{k\}$,
		\item\label{item:f} for all $D\in\mathcal{D}$, for all $\gamma,\gamma\in D$ such that $\gamma\leftrightarrow\gamma'$, $f(\gamma)\neq f(\gamma')$,
	\end{enumerate}
	where $\{D_i \ |\ i\in [\ell]\}\cup\{D_{i,j} \ |\ i,\,j\in [\ell], i<j\}$ is the $Q$-skeleton $\mathcal{D}$ of $G$. 
\end{theorem}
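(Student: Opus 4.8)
The plan is to start from an arbitrary strong $Q$-coloring $g$ (one exists by hypothesis) and to build $f$ from it by ``projecting'' each color class of $g$ onto the small palette that the $Q$-skeleton prescribes for the block of a vertex, then to verify the six properties and the two conditions of Definition~\ref{def:strong_colorability} directly. Note first that, $G$ being strong $Q$-colorable, Lemma~\ref{lemma:nofull} gives that $\Gamma_Q$---and in particular $\Upper_Q$---contains no full antipodal triple, so Lemma~\ref{remark:D_i_is_a_partition} makes $\mathcal{D}$ a partition of $\Gamma_Q$ and Lemma~\ref{lemma:elenco_trivial_2} is available. I would also settle one arithmetic point at the outset: whenever $D_{i,j}\neq\emptyset$ one has $u_i\leftrightarrow u_j$, hence $g(u_i)\neq g(u_j)$. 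This is a short relevant-clique chase: if $\gamma\in D_{i,j}$, then unwinding the definition of dominance, $\gamma\leq u_i$ yields a relevant clique $A_i$ of $u_i$ with $K\cap Q\subseteq A_i\cap Q$ for every relevant clique $K$ of $\gamma$, and likewise an $A_j$ for $u_j$; any relevant clique $K$ of $\gamma$ then gives $\emptyset\neq K\cap Q\subseteq A_i\cap A_j\cap Q$, so $u_i\Join u_j$, and this attachedness is an antipodality because $u_i,u_j$ are both $\leq$-maximal.

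Next I would define $f$ by $f(u_i)=i$; for $\gamma\in D_i$, $f(\gamma)=i$ if $g(\gamma)=g(u_i)$ and $f(\gamma)=\ell+1$ otherwise; for $\gamma\in D_{i,j}$ with $i<j$, $f(\gamma)=i$ if $g(\gamma)=g(u_i)$ and $f(\gamma)=j$ if $g(\gamma)=g(u_j)$. For this to be well defined on $D_{i,j}$ one needs $g(\gamma)\in\{g(u_i),g(u_j)\}$, and this follows from the previous paragraph together with the observation that $\{\gamma,u_i,u_j\}$ is a neighboring triple: since $\gamma\leq u_i$ and $\gamma\leq u_j$, Lemma~\ref{lemma:elenco_trivial_1}\ref{item:neighboring_leq} puts any vertex of a relevant clique of $\gamma$ lying in $Q$ into a relevant clique of $u_i$ and of $u_j$ as well, and since $|g(\{\gamma,u_i,u_j\})|\leq 2$ while $g(u_i)\neq g(u_j)$, $g(\gamma)$ must be one of the two. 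The same ``neighboring-triple'' mechanism yields the handful of facts about $g$ that drive everything else: (i) if $\gamma,\gamma'\in D_i$ and $\gamma\leftrightarrow\gamma'$, then $\{\gamma,\gamma',u_i\}$ is a neighboring triple, so $g(\gamma)$ and $g(\gamma')$ cannot both differ from $g(u_i)$; (ii) if $\gamma\in D_i$ and $\gamma\leftrightarrow u_k$ with $k\neq i$, then $\{\gamma,u_i,u_k\}$ is a neighboring triple and $u_i\leftrightarrow u_k$ (both upper bounds, attached through $\gamma$), forcing $g(\gamma)=g(u_i)$; (iii) a cross antipodality $\gamma\leftrightarrow\gamma'$ with $\gamma\in D_{i,j}$ and $\gamma'\in D_i$ upgrades, via the clique $A_j$ above, to $\gamma'\leftrightarrow u_j$, making $\{\gamma,\gamma',u_j\}$ a neighboring triple and forcing $g(\gamma)=g(u_j)$; (iv) a domination $\gamma'\leq\gamma$ with $\gamma'\in D_{i,j}$ and $\gamma\in D_i$ upgrades likewise to $\gamma\leftrightarrow u_j$.

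The rest is bookkeeping. Properties~\ref{item:a}--\ref{item:c} are immediate from the definition of $f$ (using the palette fact for~\ref{item:c}); \ref{item:d} is fact (ii); \ref{item:e} is fact (iii) and its mirror image; and~\ref{item:f} is fact (i) inside the blocks $D_i$ (two antipodal members there cannot both be sent to $\ell+1$) together with plain properness of $g$ inside the blocks $D_{i,j}$. That $f$ is a strong $Q$-coloring then splits in the usual way: properness, condition~\ref{com:mw_i}, on intra edges is~\ref{item:f}, and on cross edges it follows from Lemma~\ref{lemma:elenco_trivial_2}---an antipodal edge leaving a $D_i$ witnesses an antipodality to some $u_k$, so $f$ takes value $i$ at its $D_i$-end by~\ref{item:d}; antipodal pairs across $D_i,D_j$ or across $D_{i,j},D_i$ get opposite indices by~\ref{item:d},~\ref{item:e} and Lemma~\ref{lemma:elenco_trivial_2}; and antipodal pairs between two distinct ``double'' blocks, or between a double block and an upper bound of a foreign index, do not occur, as one reads off from Lemma~\ref{lemma:elenco_trivial_2} and the absence of full antipodal triples. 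Condition~\ref{com:mw_ii} is the real case analysis: a neighboring triple either sits inside one block, where only two $f$-values are available, or it meets $\Upper_Q$; three of its vertices in $\Upper_Q$ is impossible (full antipodal triple), two vertices $u_m,u_{m'}$ in $\Upper_Q$ pin the third member into $\{m,m'\}$ once its block is located, and a single vertex $u_m$ in $\Upper_Q$ is handled, after locating the blocks of the other two members, by~\ref{item:d},~\ref{item:e}, facts (iii)--(iv), and again the exclusion of full antipodal triples.

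The hard part is exactly this verification of condition~\ref{com:mw_ii}: one has to run through every way a neighboring triple can lie against the $Q$-skeleton and rule out a third color, and this is the only place where the hypothesis ``$G$ is strong $Q$-colorable'' is used beyond its consequence ``no full antipodal triple'', namely through the collapse facts (i)--(iv) extracted from $g$. Those facts are themselves the fiddly ingredient, but they all reduce to two elementary moves: ``$\gamma\leq u_i$ together with an antipodality incident to $\gamma$ turns the relevant trio into a neighboring triple'', and ``a relevant clique $K$ of an element of $D_{i,j}$ has $K\cap Q$ contained in $A_i\cap Q$ and in $A_j\cap Q$ for suitable relevant cliques $A_i,A_j$ of $u_i,u_j$''. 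Once these are isolated, nothing that remains is deep.
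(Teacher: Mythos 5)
Your construction is the paper's construction: starting from an arbitrary strong $Q$-coloring $g$, you send $\gamma\in D_i$ to $i$ or $\ell+1$ according to whether $g(\gamma)=g(u_i)$, and $\gamma\in D_{i,j}$ to $i$ or $j$ according to which of $g(u_i),g(u_j)$ it matches; your preliminary facts ($g(u_i)\neq g(u_j)$ when $D_{i,j}\neq\emptyset$, the palette fact $g(\gamma)\in\{g(u_i),g(u_j)\}$ on $D_{i,j}$, and the collapse facts (i)--(iv)) are exactly the paper's claims \eqref{eq:stable} and \eqref{eq:stable3}, proved by the same neighboring-triple arguments. Properties \ref{item:a}--\ref{item:f} and the properness of $f$ then come out as in the paper.

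The problem is in the step you yourself single out as the hard part: verifying condition \ref{com:mw_ii} for $f$. Your case split --- ``a neighboring triple either sits inside one block \ldots or it meets $\Upper_Q$'' --- is not a dichotomy. A neighboring triple can span two or three members of the $Q$-skeleton while containing no upper bound at all (e.g.\ $\gamma\in D_i\setminus\{u_i\}$, $\gamma'\in D_j\setminus\{u_j\}$, $\gamma''\in D_{i,j}$, pairwise attached; or two non-maximal elements of $D_i$ together with an element of $D_{i,j}$). Your subsequent sub-cases (three, two, or one vertices of the triple in $\Upper_Q$) therefore do not exhaust the possibilities, and the configurations you omit are precisely the ones where a third color could a priori appear ($i$, $\ell+1$ and $j$, say). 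These cases \emph{can} be closed with the tools you have --- any element of $D_i$ attached to something outside $D_i$ is forced to color $i$ by your facts (ii) and (iv) together with Lemma~\ref{lemma:elenco_trivial_2} --- but the argument as written does not do it. The paper avoids the enumeration altogether with a cleaner dichotomy you are missing: a neighboring triple that meets the $(\ell+1)$-colored class is entirely contained in a single $D_i$ (this is \eqref{eq:stable2}, a consequence of \eqref{eq:stable}), and a neighboring triple that avoids it can be pushed, via the map $\nu^g$ assigning to each element the upper bound it inherits its color from, to a neighboring triple inside $\Upper_Q$, where three colors would force a full antipodal triple, contradicting Lemma~\ref{lemma:nofull}. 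Either repair your enumeration to cover triples disjoint from $\Upper_Q$ that straddle several blocks, or adopt the $\nu^g$ argument; as it stands the verification of \ref{com:mw_ii} is incomplete.
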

\begin{proof} Let $H$ be the $Q$-antipodality graph of $G$ and let $g$ be any strong $Q$-coloring of $G$. Since $g$ is a strong $Q$-coloring, it is, in particular, a proper coloring of $H$. Hence \ref{item:f} is trivially satisfied by $g$ and we have only to prove the theorem for the first five conditions. To this end, we first prove that under the hypothesis of strong $Q$-colorability, $g$ satisfies a number of properties, and then we use such properties to modify $g$ into a strong $Q$-coloring $f$ which satisfies the first five conditions of the theorem. 
	
	For $i\in [\ell]$ let $A_i=\{\gamma\in D_i\ |\ g(\gamma)\not=g(u_i)\}$, $B_i=D_i\setminus A_i$, and $A_{\ell+1}=\cup_i^\ell A_i$. Notice that $u_i\in B_i$ hence $B_i\not=\emptyset$. Clearly $A_i$ and $B_i$ form a bipartition of $D_i$ and the elements of $B_i$ all have the same color $g(u_i)$. Hence the elements of $B_i$ are pairwise non antipodal ($B_i$ is thus an independent set in $H$). We claim that 
	\begin{equation}\label{eq:stable}
		\text{$\forall i\in [\ell]$ it holds that $A_i$ is an independent set of $H$, and $\gamma\not\Join\gamma'$ for each $\gamma\in A_i$, $\gamma'\in\Gamma_Q\setminus D_i$.}
	\end{equation}	
	To prove that $A_i$ is an independent set of $H$ for all $i\in [\ell]$ observe that if $\gamma,\,\gamma'\in A_i$ were antipodal, then $\{\gamma,\gamma',u_i\}$ would be a neighboring triple by Lemma~\ref{lemma:elenco_trivial_1}. Now, $g(\gamma)\neq g(\gamma')$ (because $g$ is a proper coloring of $H$) and $g(u_i)\not\in \{g(\gamma),g(\gamma)\}$ (by the definition of $A_i$). However this is impossible because neighboring triples have to be 2-colored under $g$ by Definition~\ref{def:strong_colorability}.~\ref{com:mw_ii}. Let us prove the second part of \ref{eq:stable}. Suppose by contradiction that $\gamma\Join \gamma'$ for some $\gamma\in A_i$,  $\gamma'\in \Gamma_Q\setminus D_i$, and $i,\,j\in [\ell]$, $i<j$. Since $\gamma'\not\in D_i$, there is an upper bound $u'$ of $\gamma'$ such that $u'\neq u_i$. By transitivity of $\leq$ one must have $\gamma\not\leq\gamma'$, otherwise $\gamma\leq u'$ would hold true, contradicting that $\gamma\in D_i$. Hence $\gamma\Join\gamma'$ implies either $\gamma\leftrightarrow\gamma'$ or $\gamma'\leq\gamma$. In either case, $\gamma\leftrightarrow u'$ and $u_i\leftrightarrow u'$ both hold by Lemma~\ref{lemma:elenco_trivial_2} and the definition of $D_i$. It follows that Diagram~\ref{diagram:triangles}.(c) applies with $u'$ in place of $\gamma'$ and $u_i$ in place of $\gamma''$. Thus $\{\gamma,u',u_i\}$ must be 2-colored. However this is impossible. Indeed $g(\gamma)$ and $g(u_i)$ should coincide, because $\gamma$ and $u_i$ are both antipodal to $u'$ while $g(u_i)\not=g(\gamma)$ because $\gamma\in A_i$. 
	We therefore conclude that~\eqref{eq:stable} holds. Notice, in passing, that~\eqref{eq:stable} implies $A_{\ell+1}\cap V(H^\cross)=\emptyset$. By~\eqref{eq:stable} it also follows that
	\begin{equation}\label{eq:stable2}
		\text{if $\Lambda\cap A_{\ell+1}\neq \emptyset$ for some neighboring triple $\Lambda\subseteq \Gamma_Q$, then $\Lambda\subseteq D_i$ for some $i\in [\ell]$.}
	\end{equation}	
	For if not, there are $i\in [\ell]$, $\gamma\in \Lambda\cap D_i$ and $\gamma'\in \Lambda\setminus D_i$ such that $\gamma\Join\gamma'$ (because the elements of every neighboring set are pairwise attached), contradicting~\eqref{eq:stable}.  
	
	Next we prove that	
	\begin{equation}\label{eq:stable3}
		\text{for all $i,j\in[\ell]$, $i<j$, and for all $\gamma\in D_{i,j}$, $g(\gamma)\in\{g(u_i),g(u_j)\}$}
	\end{equation}	
	To prove~\eqref{eq:stable3} let $\gamma\in D_{i,j}$ and observe that $u_i$ and $u_j$ are the unique upper bounds of $\gamma$ because of Lemma~\ref{remark:D_i_is_a_partition} and the definition of $D_{i,j}$ (see \ref{dgammagamma'}). By Lemma~\ref{lemma:elenco_trivial_1}, $\gamma$, $u_i$ and $u_j$ are pairwise neighboring. Moreover, $u_i$ and $u_j$  are $\leq$-incomparable. Hence Diagram~\ref{diagram:triangles}.(e) applies with $u_i$ and $u_j$ in place of $\gamma'$ and $\gamma''$. Since $g$ is a strong $Q$-coloring, it follows that $g(u_i)\not=g(u_j)$ and  $|g(\{\gamma,u_i,u_j\})|\leq 2$. We conclude that $g(\gamma)\in \{g(u_i),g(u_j)\}$ as required.
	
	By~\eqref{eq:stable},~\eqref{eq:stable3}, and the definitions of $A_{\ell+1}$, and $B_i$, it follows that for each $\gamma\in \Gamma_Q\setminus A_{\ell+1}$ there is a unique $u\in\Upper_Q$ such that $g(\gamma)=g(u)$. Denote such an upper bound by $\nu^g(\gamma)$. Hence the map $\nu^g: \Gamma_Q\setminus A_{\ell+1}\rightarrow \Upper_Q$ is well defined and
	\begin{itemize}\itemsep0em
		\item $\nu^g(\gamma)=u_i$, for all $\gamma\in B_i$ and $i\in[\ell]$, 
		\item $\nu^g(\gamma)\in\{u_i,u_j\}$, for all $\gamma\in D_{i,j}$ and $i,j\in[\ell]$, $i<j$. 
	\end{itemize}
	Let $\tau: \Upper\rightarrow [\ell]$ be the unique bijection such that $\tau(u_i)=i$. We claim that the map $f:\Gamma_Q\rightarrow [\ell+1]$ defined by
	\begin{equation*}
		f(\gamma)=\left\{
		\begin{array}{ll}
			\ell+1 & \text{if $\gamma\in A_{\ell+1}$}\\
			\tau(\nu^g(\gamma)) & \text{otherwise},
		\end{array}
		\right.
	\end{equation*}
	is strong $Q$-coloring of $G$. In the first place, $f$ is a proper coloring of $H$. Indeed $A_{\ell+1}$ is an independent set of $H$; moreover, if $\gamma\leftrightarrow\gamma'$ for some two $\gamma,\gamma'\in \Gamma_Q\setminus A_{\ell+1}$, then $g(\gamma)\neq g(\gamma')\Rightarrow f(\gamma)\neq f(\gamma')$ because $\gamma$ and $\gamma'$ get their respective colors from different upper bounds. In the second place, it holds that $|f(\Lambda)|\leq 2$ for every neighboring triple $\Lambda$ contained in $\Gamma_Q$. Indeed, if $\Lambda\cap A_{\ell+1}\neq \emptyset$, then there exists some $i\in [\ell]$ such that $\Lambda\subseteq D_i$ by~\eqref{eq:stable2}. Hence $f(\Lambda)\subseteq \{\ell+1,i\}$ and we are done. Otherwise, if $\Lambda\cap A_{\ell+1}=\emptyset$, then $|f(\Lambda)|=|\{\nu^g(\gamma) \ |\ \gamma\in \Lambda\}|$ (because $\tau$ is injective).~By a repeated application of Lemma~\ref{lemma:elenco_trivial_1}, $\{\nu^g(\gamma) \ |\ \gamma\in \Lambda\}$ is a neighboring set contained in $\Upper_Q$. Hence, if $|f(\Lambda)|=3$, then $\{\nu^g(\gamma) \ |\ \gamma\in \Lambda\}$  would be a full antipodal triple contained in $\Gamma_Q$, contradicting Lemma~\ref{lemma:nofull}. We conclude that $f$ is a strong $Q$-coloring which, by construction, satisfies \ref{item:a},~\ref{item:b},~\ref{item:c}. 
	
	It remains to prove that $f$ satisfies \ref{item:d}, and~\ref{item:e} as well. As for \ref{item:d}, observe that if $\gamma$ qualifies for the condition, then $\gamma\in D_i\cap V(H^\cross)\subseteq D_i\setminus A_{\ell+1}$ by~\eqref{eq:stable}. Hence $f(\gamma)=i$ by \ref{item:b}. As for \ref{item:e}, observe that if $\gamma$ qualifies for the condition, then $\gamma\in D_{i,j}$ and there exists $\gamma'\in D_k\setminus A_{\ell+1}$ for $k\in\{i,j\}$ such that $\gamma\rightarrow \gamma'$. Now $f(\gamma')=k$ by \ref{item:b}, and $f(\gamma)\in\{i,j\}$ by \ref{item:c}. Hence $f(\gamma)\in\{i,j\}\setminus\{k\}$ because $f$ is (in particular) a proper coloring of $H$. The proof is now completed.\end{proof}

\begin{remark}\label{rem:tau}
	Bijection $\tau$ in the proof above can actually be chosen arbitrarily but \ref{item:a}, \ref{item:b}, and~\ref{item:c} have to be modified accordingly: \rm {(a') the restriction of $f$ to $\Upper_Q$ uses exactly $\ell$ colors}; \rm {(b')
		for all $i\in[\ell]$, for all $\gamma\in D_i$, $f(\gamma)\in\{f(u_i),\ell+1\}$}, and \rm {(c') for all $i,\,j\in[\ell],i<j$, for all $\gamma\in D_{i,j}$, $f(\gamma)\in\{f(u_i),f(u_j)\}$}.  
\end{remark}
Theorem~\ref{thm:canonical} sharpens the two defining conditions of strong $Q$-colorability (Definition~\ref{def:strong_colorability}) into a number of easily checkable conditions. 

Conditions~\ref{item:d} and \ref{item:e} rule the behavior of the strong $Q$-colorings $f$ on the edges of $H^\cross$ while conditions \ref{item:b}, \ref{item:c}, and \ref{item:f} rule the behavior of $f$ on the edges of $H^\intra$. Condition~\ref{item:a} prescribes the number of colors globally used by $f$.

If $g$ is not assumed to be a strong $Q$-coloring of $G$, then all the six conditions are independent from each other. All such  conditions are clearly necessary for the strong $Q$-colorability of $G$. We prove in Theorem~\ref{lemma:characterization_1_only_if} that these conditions are also sufficient for strong $Q$-colorability yielding the new characterization. To avoid a boring steady mention to the six conditions of Theorem~\ref{thm:canonical} we give the following definition.
\begin{definition}\label{def:partial_and_weak}
	Let $Q$ be a clique separator of $G$ and let $\ell=|\Upper_Q|$. If $\Upper_Q$ contains no full antipodal triple, then we say that $G$ is \emph{weak $Q$-colorable} if there exists $f:\Gamma_Q\rightarrow [\ell+1]$ satisfying all of the conditions in Theorem~\ref{thm:canonical}. If such an $f$ exists, then we refer to it as a \emph{weak $Q$-coloring of $G$}.
\end{definition}
We defer a justification for the adjective ``weak'' after Theorem~\ref{lemma:characterization_1_only_if}; for the moment we only notice that the first five conditions in the definition of weak $Q$-coloring, uniquely determine a proper coloring $f^\cross$ of $H^\cross$ which uses exactly $\ell$-colors. 

Indeed, if $\gamma$ and $\gamma'$ are adjacent vertices in $H^\cross$, then $\gamma\leftrightarrow \gamma'$ with $\gamma\in D$ and $\gamma'\in D'$ for some two members $D$ and $D'$ of the $Q$-skeleton of $G$. Hence, there are $i,\,j\in [\ell]$ such that either $D=D_i$ and $D'=D_j$, or $D=D_i$ and $D'=D_{i,j}$, or $D=D_j$ and $D'=D_{i,j}$. By Lemma~\ref{lemma:elenco_trivial_2}.~\ref{item:antipodal_k_k'_1} there are no other possibilities. Moreover, if $D=D_i$ and $D'=D_j$, then  $\gamma\leftrightarrow u_j$ and $\gamma'\leftrightarrow u_i$ both hold by Lemma~\ref{lemma:elenco_trivial_2}.~\ref{item:antipodal_square}. Hence $f(\gamma)\neq f(\gamma')$ in each of the three cases: in the first case by \ref{item:b} and \ref{item:d} while in the remaining cases by \ref{item:b}, \ref{item:c} and \ref{item:e}. Remark that $f$ is a map on $\Gamma_Q$. Hence there cannot exist $\gamma\in D_{i,j}$, $\gamma'\in D_i$, and $\gamma''\in D_j$ such that $\gamma\leftrightarrow \gamma'$ and $\gamma\leftrightarrow \gamma''$ both hold, otherwise $f$ could not have been defined. 
Uniqueness of $f^\cross$ is entailed by conditions \ref{item:d}, and \ref{item:f}. 
Conditions~\ref{item:b},\ref{item:c}, and \ref{item:f} on the other hand, determines a proper 2-coloring in the subgraph induced in $H$ by each member $D$ of $Q$-skeleton. Hence, the same conditions determine a proper 2-coloring of $H^\intra$ because such graph is precisely the union of the $H[D]$'s as $D$ runs in $\mathcal{D}$. Clearly, altogether, the six conditions define a proper coloring of $H$ because such a coloring is defined within and between components. Conversely, if $f$ is a proper coloring of $H$ which satisfies the first four conditions of Definition~\ref{def:partial_and_weak}, then $f$ is a weak $Q$-coloring: condition~\ref{item:f} is trivially satisfied, while condition \ref{item:e} must necessarily hold given the conditions~\ref{item:a}, \ref{item:b},\ref{item:c}, \ref{item:d}, and the fact that $f$ is a proper coloring. Therefore we have the following:
\begin{theorem}\label{thm:restricted_col}
	Let $Q$ be a clique separator of a graph $G$, and $H$ be the $Q$-antipodality graph of $G$. If $\Upper_Q$ contains no full antipodal triples, then 
	\begin{itemize}\itemsep0em
		\item[--] any weak $Q$-coloring $f$ is a proper coloring of $H$ and any proper coloring $f$ of $H$ which satisfies conditions \ref{item:a}, \ref{item:b},\ref{item:c}, \ref{item:d}, is a weak-$Q$-coloring;
		\item[--] the restriction $f^\cross$ of a weak $Q$-coloring $f$ on $V(H^\cross)$ is the unique proper coloring of $H^\cross$ determined by conditions \ref{item:a}, \ref{item:b},\ref{item:c}, \ref{item:d};
		\item[--] any weak $Q$-coloring $f$ uniquely restricts to a proper coloring of $H^\cross$ and $f$ restricts to a proper 2-coloring of $H^\intra$.
	\end{itemize} 
\end{theorem}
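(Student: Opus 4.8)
The plan is to obtain all three bullets directly from the six conditions defining a weak $Q$-coloring (those of Theorem~\ref{thm:canonical}), together with the structural facts in Lemma~\ref{remark:D_i_is_a_partition} and Lemma~\ref{lemma:elenco_trivial_2}; the standing hypothesis that $\Upper_Q$ has no full antipodal triple is exactly what makes those lemmas available and, in particular, makes $\mathcal{D}$ a partition of $V(H)$, so that $E(H)$ is the disjoint union of $E(H^\intra)$ and $E(H^\cross)$. The only step with genuine content is a three-case analysis on a single cross edge; all the rest is bookkeeping. Recall also (from the displayed identity preceding the theorem) that $V(H^\cross)$ consists precisely of those $\gamma$ that are antipodal to some vertex outside their own member of $\mathcal{D}$.

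First I would check that a weak $Q$-coloring $f$ is a proper coloring of $H$. Intra edges are handled verbatim by condition~\ref{item:f}, so consider a cross edge $\gamma\leftrightarrow\gamma'$ with $\gamma\in D$, $\gamma'\in D'$ and $D\ne D'$ in $\mathcal{D}$. By Lemma~\ref{lemma:elenco_trivial_2}.\ref{item:antipodal_k_k'_1}, $\{D,D'\}$ is one of $\{D_i,D_j\}$, $\{D_i,D_{i,j}\}$, $\{D_j,D_{i,j}\}$. If $\{D,D'\}=\{D_i,D_j\}$, then Lemma~\ref{lemma:elenco_trivial_2}.\ref{item:antipodal_square} gives $\gamma\leftrightarrow u_j$ and $\gamma'\leftrightarrow u_i$, so $f(\gamma)=i$ and $f(\gamma')=j$ by~\ref{item:d}. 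Otherwise, up to symmetry $\gamma\in D_{i,j}$ and $\gamma'\in D_k$ with $k\in\{i,j\}$; putting $m$ for the element of $\{i,j\}\setminus\{k\}$ we have $\gamma\le u_m$, so Lemma~\ref{lemma:elenco_trivial_2}.\ref{item:antipodal_D_i_Upper} (with the roles of its two arguments interchanged) yields $\gamma'\leftrightarrow u_m$, hence $f(\gamma')=k$ by~\ref{item:d}, while $f(\gamma)=m$ by~\ref{item:e}. In every case $f(\gamma)\ne f(\gamma')$, so $f$ is proper on $H$; restricting, $f^\cross$ is a proper coloring of $H^\cross$ and $f|_{H^\intra}$ is proper on $H^\intra$.

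For the converse in the first bullet let $f$ be a proper coloring of $H$ satisfying \ref{item:a}--\ref{item:d}: condition~\ref{item:f} is then immediate, and for~\ref{item:e} one repeats the last computation of the previous paragraph --- given $\gamma\in D_{i,j}$, $\gamma'\in D_k$ with $\gamma\leftrightarrow\gamma'$, Lemma~\ref{lemma:elenco_trivial_2}.\ref{item:antipodal_D_i_Upper} forces $\gamma'\leftrightarrow u_m$, hence $f(\gamma')=k$ by~\ref{item:d}, and then $f(\gamma)=m$ by~\ref{item:c} and properness --- so $f$ is a weak $Q$-coloring. The same argument yields the uniqueness asserted in the second bullet: any $\gamma\in V(H^\cross)$ sits in some member $D$ of $\mathcal{D}$ and is antipodal to a vertex $\gamma'$ outside $D$; if $D=D_i$ then $\gamma'\le u_k$ for some $k\ne i$ (since $\mathcal{D}$ is a partition), so Lemma~\ref{lemma:elenco_trivial_2}.\ref{item:antipodal_D_i_Upper} and~\ref{item:d} force $f(\gamma)=i$, and if $D=D_{i,j}$ then $\gamma'\in D_i\cup D_j$ by Lemma~\ref{lemma:elenco_trivial_2}.\ref{item:antipodal_k_k'_1}, say $\gamma'\in D_k$, whereupon the value $f(\gamma')=k$ (already forced by the previous case), condition~\ref{item:c}, and properness force $f(\gamma)$ to the remaining index of $\{i,j\}$; finally $f(u_i)=i$ by~\ref{item:a}. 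Hence the restriction to $V(H^\cross)$ of every weak $Q$-coloring coincides, and it is the unique proper coloring of $H^\cross$ pinned down by \ref{item:a}--\ref{item:d}.

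It remains to treat $H^\intra$ (third bullet). Since $\mathcal{D}$ partitions $V(H)$, every intra edge lies inside a single member of $\mathcal{D}$, so $H^\intra$ is the vertex-disjoint union of the induced subgraphs $H[D]$, $D\in\mathcal{D}$, and each connected component of $H^\intra$ is contained in one such $H[D]$; on $H[D_i]$ condition~\ref{item:b} confines $f$ to $\{i,\ell+1\}$, on $H[D_{i,j}]$ condition~\ref{item:c} confines $f$ to $\{i,j\}$, and~\ref{item:f} makes $f$ proper there, so $f$ restricts to a proper $2$-coloring of each member of $\mathcal{D}$ and hence of each component of $H^\intra$. The main obstacle throughout is the cross-edge case analysis of the second paragraph (reused in the third): the delicate points are choosing the right upper bound when invoking Lemma~\ref{lemma:elenco_trivial_2} --- for a cross edge with one end $\gamma\in D_{i,j}$ one must exploit $\gamma\le u_m$ for the specific index $m=\{i,j\}\setminus\{k\}$ --- and observing that, $f$ being a function, no vertex of $D_{i,j}$ can have cross neighbours in both $D_i$ and $D_j$ at once. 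Degenerate cases such as $\ell=1$ (where $\mathcal{D}=\{D_1\}$ and $H^\cross=\emptyset$) are immediate.
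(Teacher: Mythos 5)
Your proposal is correct and follows essentially the same route as the paper, which establishes this theorem via the discursive argument immediately preceding its statement: the same three-case analysis of cross edges through Lemma~\ref{lemma:elenco_trivial_2}, the same decomposition of $H^\intra$ into the $H[D]$'s governed by conditions \ref{item:b}, \ref{item:c}, \ref{item:f}, and the same observation that a proper coloring satisfying \ref{item:a}--\ref{item:d} automatically satisfies \ref{item:e} and \ref{item:f}. Your explicit forcing argument for the uniqueness of $f^\cross$ (and the remark that no vertex of $D_{i,j}$ can have cross neighbours in both $D_i$ and $D_j$ since $f$ is a function) merely spells out details the paper leaves terse, so there is nothing substantively different to report.
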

We now prove the characterization.
\begin{theorem}\label{lemma:characterization_1_only_if}
	Let $Q$ be a clique separator of $G$. If $\Upper_Q$ contains no full antipodal triple and $G$ is weak $Q$-colorable, then $G$ is strong $Q$-colorable.
\end{theorem}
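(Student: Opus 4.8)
The plan is to prove that \emph{every} weak $Q$-coloring $f$ of $G$ is already a strong $Q$-coloring. Condition~\ref{com:mw_i} of Definition~\ref{def:strong_colorability} is automatic: by the first item of Theorem~\ref{thm:restricted_col} a weak $Q$-coloring is a proper coloring of the $Q$-antipodality graph $H$ of $G$, which is exactly condition~\ref{com:mw_i}. So the whole task is condition~\ref{com:mw_ii}: $|f(\Lambda)|\leq 2$ for every neighboring triple $\Lambda\subseteq\Gamma_Q$. We use throughout that $\mathcal D$ is a partition of $\Gamma_Q$ (Lemma~\ref{remark:D_i_is_a_partition}), that the elements of $\Upper_Q$ are pairwise $\leq$-incomparable (being the maximal elements of $(\Gamma_Q,\leq)$), and that a neighboring triple has a common vertex $v^\ast\in Q$ of which all three of its members are neighboring subgraphs, so that $v^\ast\in V(\gamma)\cap Q$ for each such member $\gamma$.

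The core step is the following analogue, valid for an arbitrary weak $Q$-coloring, of the stability property~\eqref{eq:stable} from the proof of Theorem~\ref{thm:canonical}: \emph{if $f(\gamma_0)=\ell+1$, then $\gamma_0\in D_i$ for a unique $i\in[\ell]$ and $\gamma_0$ is attached to no element of $\Gamma_Q\setminus D_i$.} To prove it, note $\gamma_0\in D_i$ by~\ref{item:b}, and $f(\gamma_0)=\ell+1\neq i$ forces, via the contrapositive of~\ref{item:d}, that $\gamma_0\leftrightarrow u$ fails for every $u\in\Upper_Q$. Suppose $\gamma_0\Join\gamma'$ with $\gamma'\notin D_i$; we rule out each of the three ways $\Join$ can be realized. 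If $\gamma_0\leftrightarrow\gamma'$, then since $\gamma'\notin D_i$ there is $u_k\in\Upper_Q$ with $k\neq i$ and $\gamma'\leq u_k$, whence $\gamma_0\leftrightarrow u_k$ by Lemma~\ref{lemma:elenco_trivial_2}.~\ref{item:antipodal_D_i_Upper}, a contradiction. If $\gamma_0\leq\gamma'$, then for a maximal $u\geq\gamma'$ we get $\gamma_0\leq u$ by transitivity, forcing $u=u_i$; the same reasoning shows $u_i$ is the only maximal element above $\gamma'$, i.e.\ $\gamma'\in D_i$, a contradiction. If $\gamma'\leq\gamma_0$, then $\gamma'\leq u_i$ by transitivity, so (as $\gamma'\notin D_i$ and $\mathcal D$ partitions $\Gamma_Q$) $\gamma'\in D_{i,k}$ for some $k\neq i$, hence also $\gamma'\leq u_k$; here we use the elementary clique-level fact that $\alpha\leq\beta$ implies, since $\alpha\Join\beta$, the existence of a relevant clique $K_\beta$ of $\beta$ with $K\cap Q\subseteq K_\beta\cap Q$ for \emph{every} relevant clique $K$ of $\alpha$ (take for $K_\beta$ a relevant clique of $\beta$ meeting some relevant clique of $\alpha$ inside $Q$, which kills the second alternative in the definition of dominance). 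Applying this to $\gamma'\leq\gamma_0$ and to $\gamma'\leq u_k$, and fixing any relevant clique $K^\ast$ of $\gamma'$, we obtain a relevant clique $K_{\gamma_0}$ of $\gamma_0$ and a relevant clique $K_{u_k}$ of $u_k$ with $\emptyset\neq K^\ast\cap Q\subseteq K_{\gamma_0}\cap K_{u_k}\cap Q$, so $\gamma_0\Join u_k$; as $u_k\neq\gamma_0$ is maximal this means $\gamma_0\leq u_k$ (impossible, $\gamma_0\in D_i$ and $k\neq i$) or $\gamma_0\leftrightarrow u_k$ (impossible, as shown above). This proves the claim.

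Granting the claim, the theorem follows by splitting into two cases for a neighboring triple $\Lambda$ with common vertex $v^\ast$. If some $\gamma_0\in\Lambda$ has $f(\gamma_0)=\ell+1$, say $\gamma_0\in D_i$, then every other member of $\Lambda$, being attached to $\gamma_0$, lies in $D_i$ by the claim, so $\Lambda\subseteq D_i$ and $f(\Lambda)\subseteq\{i,\ell+1\}$ by~\ref{item:b}. Otherwise $f(\gamma)\in[\ell]$ for all $\gamma\in\Lambda$; set $\nu(\gamma)=u_{f(\gamma)}\in\Upper_Q$ and observe $\gamma\leq\nu(\gamma)$ (by~\ref{item:b} when $\gamma\in D_i$, by~\ref{item:c} when $\gamma\in D_{i,j}$). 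By Lemma~\ref{lemma:elenco_trivial_1}.~\ref{item:neighboring_leq}, applied at $v^\ast\in V(\gamma)\cap Q$, each $\nu(\gamma)$ is a neighboring subgraph of $v^\ast$, so $\{\nu(\gamma):\gamma\in\Lambda\}$ is a neighboring set contained in $\Upper_Q$; had it three distinct elements, these would be pairwise incomparable, hence (being pairwise attached) pairwise antipodal, that is, a full antipodal triple inside $\Upper_Q$, contradicting the hypothesis. Therefore $|\{\nu(\gamma):\gamma\in\Lambda\}|\leq 2$, and since $f(\gamma)=f(\gamma')\Leftrightarrow\nu(\gamma)=\nu(\gamma')$ we conclude $|f(\Lambda)|\leq 2$. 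In both cases condition~\ref{com:mw_ii} holds, hence $f$ is a strong $Q$-coloring.

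The only genuinely delicate point is the last case $\gamma'\leq\gamma_0$ of the claim: excluding a $\gamma'$ with two distinct upper bounds forces one back to the clique-level definition of dominance and to the observation that a dominance relation confines the $Q$-traces of the relevant cliques of the smaller subgraph inside a single relevant clique of the larger one; this is precisely what produces the attachedness $\gamma_0\Join u_k$ and collides with $f(\gamma_0)=\ell+1$. The other two cases of the claim, and the concluding counting argument, are routine given Lemmas~\ref{lemma:elenco_trivial_1}, \ref{lemma:elenco_trivial_2} and~\ref{remark:D_i_is_a_partition}.
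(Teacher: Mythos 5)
Your proof is correct and follows essentially the same route as the paper's: your ``core step'' is precisely the analogue, for weak $Q$-colorings, of the stability property~\eqref{eq:stable} used in the proof of Theorem~\ref{thm:canonical} (which the paper invokes via ``by the same reasons given in the proof of~\eqref{eq:stable2}'' and closes with condition~\ref{item:d}), and your concluding split into triples meeting the color class $\ell+1$ versus triples mapped by $\nu$ into $\Upper_Q$ is exactly the paper's argument. The only difference is that you spell out the three realizations of $\Join$ explicitly (the clique-level digression in the case $\gamma'\leq\gamma_0$ could be shortcut by Lemma~\ref{lemma:elenco_trivial_1}), which is a matter of detail, not of approach.
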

\begin{proof} Since $\Upper_Q$ contains no full antipodal triples, the $Q$-skeleton $\mathcal{D}$ of $G$ is a partition of $\Gamma_Q$. Let $f$ be a weak $Q$-coloring of $G$ and let $H$ be the $Q$-antipodality graph of $G$. By Theorem~\ref{thm:restricted_col}, $f$ is a proper coloring of $H$. Hence, to prove that $f$ is a strong $Q$-coloring it suffices to prove that $|f(\Lambda)|\leq 2$ for every neighboring triple $\Lambda$ contained in $\Gamma_Q$. 
	
	Let $A=\{\gamma\in \Gamma_Q \ | \ f(\gamma)=\ell+1\}$. Since $f$ is a proper coloring of $H$, $A$ is an independent set of $H$. Exactly as \eqref{eq:stable2} in Theorem~\ref{thm:canonical}, 
	\begin{center}
		if $\Lambda$ is a neighboring triple such that $\Lambda\cap A\neq \emptyset$, then $\exists i\in [\ell]$ such that $\Lambda\subseteq D_i$.
	\end{center}
	By the same reasons given in the proof of~\eqref{eq:stable2}, if the statement above were not true, then there would be $i\in [\ell]$, $\gamma\in \Lambda\cap D_i$ and $u\in\Upper_Q\setminus\{u_i\}$ , such that $f(\gamma)=\ell+1$, and $\gamma\leftrightarrow u$. However, the latter two conditions cannot hold simultaneously because of property~\ref{item:d} of the weak $Q$-coloring $f$. We therefore conclude that the statement is true.  
	
	Next we observe that for every $D\in \mathcal{D}$ the restriction of $f$ to $H[D]$ is a proper 2-coloring of $H[D]$ (still by Theorem~\ref{thm:restricted_col}). We therefore conclude that $|f(\Lambda)|\leq 2$ for every neighboring triple $\Lambda$ contained in $D$ for every $D\in \mathcal{D}$. In particular for all those neighboring triples that intersect $A$. Hence it remains only to prove that 
	\begin{equation}\label{eq:last_triple}
		\text{$|f(\Lambda)|\leq 2$ for every neighboring triple $\Lambda\subseteq \Gamma_Q\setminus A$}.
	\end{equation}
	Notice that $f(\Lambda)\subseteq [\ell]$ for any such triple. Observe now that by conditions~\ref{item:a}, \ref{item:b}, and \ref{item:c}, each $\gamma\in \Gamma_Q\setminus A$ has the same color as one of its (at most two) upper bounds. Denote by $\nu(\gamma)$ the unique element of $\Upper_Q$ such that $f(\gamma)=f(\nu(\gamma))$. Hence, if~\eqref{eq:last_triple} were not true, then $|f(\Lambda)|=3$ for some neighboring triple $\Lambda\subseteq \Gamma_Q\setminus A$ implying $|f(\nu(\Lambda))|=3$, and $\gamma\leq \nu(\gamma)$ for $\gamma\in \Lambda$ (we have set $\nu(\Lambda)=\{\nu(\gamma) \, |\ \gamma\in \Lambda\}$). Since $\Lambda$ is neighboring triple, so is the triple $\nu(\Lambda)$ (by a repeated application of Lemma~\ref{lemma:elenco_trivial_1}). Since  $\nu(\Lambda)\subseteq \Upper_Q$, $\nu(\Lambda)$ is a full antipodal triple contained in $\Upper_Q$. Therefore, the hypothesis that $\Upper_Q$ contains no such a triple is contradicted and the proof is completed.
\end{proof}     
We summarize the characterization of path graphs within chordal graphs in the following result. For readers' convenience we listed explicitly the defining conditions of weak $Q$-colorability.
\begin{theorem}\label{cor:main}
	A chordal graph $G$ is a path graph if and only if, for all clique separators $Q$ of $G$, $\Upper_Q$ contains no full antipodal triple and $G$ is weak $Q$-colorable, namely, there exists $f:\Gamma_Q\rightarrow [\ell+1]$ satisfying the following:
	\begin{enumerate}[label={\rm(\alph*)}]\itemsep0em
		\item for all $i\in[\ell]$, $f(u_i)=i$, 
		\item for all $i\in[\ell]$, for all $\gamma\in D_i$, $f(\gamma)\in\{i,\ell+1\}$,
		\item for all $i<j\in[\ell]$, for all $\gamma\in D_{i,j}$, $f(\gamma)\in\{i,j\}$,
		\item for all $i\in[\ell]$, for all $\gamma\in D_i$ if $\exists u\in\Upper_Q$ such that $\gamma\leftrightarrow u$, then $f(\gamma)=i$,
		\item for all $i<j\in[\ell]$, for all $\gamma\in D_{i,j}$ such that $\exists\gamma'\in D_{k}$, for $k\in\{i,j\}$, satisfying $\gamma\leftrightarrow\gamma'$, then $f(\gamma)=\{i,j\}\setminus\{k\}$,
		\item for all $D\in\mathcal{D}$, for all $\gamma,\gamma\in D$ such that $\gamma\leftrightarrow\gamma'$, $f(\gamma)\neq f(\gamma')$, 
	\end{enumerate}
	where $\ell=|\Upper_Q|$, and $\mathcal{D}$ is the $Q$-skeleton of $G$.
\end{theorem}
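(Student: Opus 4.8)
The plan is to obtain Theorem~\ref{cor:main} by assembling the results already established, since the statement is essentially a consolidated form of the chain of equivalences ``$G$ is a path graph $\Leftrightarrow$ $G$ is strong $Q$-colorable for every clique separator $Q$ $\Leftrightarrow$ $G$ is weak $Q$-colorable for every clique separator $Q$''. The six displayed conditions (a)--(f) are just the conditions of Theorem~\ref{thm:canonical} reproduced, i.e.\ the defining conditions of weak $Q$-colorability in Definition~\ref{def:partial_and_weak}, so no new combinatorics is needed; the work is purely in sequencing the citations correctly.

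First I would prove the forward implication. Assume $G$ is a path graph. By Corollary~\ref{cor:local_mew}, $G$ is strong $Q$-colorable for every clique separator $Q$. Fix such a $Q$. Lemma~\ref{lemma:nofull} then guarantees that $\Gamma_Q$, and hence $\Upper_Q$, contains no full antipodal triple; this is the first assertion to be proved and, simultaneously, the hypothesis under which the $Q$-skeleton $\mathcal{D}$ is a partition of $\Gamma_Q$ (Lemma~\ref{remark:D_i_is_a_partition}) and under which Theorem~\ref{thm:canonical} is applicable. Applying Theorem~\ref{thm:canonical} to the strong $Q$-colorable graph $G$ produces a strong $Q$-coloring $f:\Gamma_Q\to[\ell+1]$ satisfying \ref{item:a}--\ref{item:f}; by Definition~\ref{def:partial_and_weak} this exhibits $G$ as weak $Q$-colorable, and since $Q$ was arbitrary we are done with this direction.

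For the converse, assume that for every clique separator $Q$ of $G$ we have that $\Upper_Q$ contains no full antipodal triple and $G$ is weak $Q$-colorable. Fix $Q$. Its two hypotheses are exactly those of Theorem~\ref{lemma:characterization_1_only_if}, which therefore yields that $G$ is strong $Q$-colorable. As $Q$ was arbitrary, Corollary~\ref{cor:local_mew} concludes that $G$ is a path graph.

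The only point requiring care is bookkeeping around definedness: Definition~\ref{def:partial_and_weak} presupposes the absence of a full antipodal triple in $\Upper_Q$, so in each direction the invocations must be ordered so that this fact is in hand before ``weak $Q$-colorable'' is even invoked. This is a formality rather than an obstacle; all genuine content resides in Corollary~\ref{cor:local_mew} and Theorems~\ref{thm:canonical} and \ref{lemma:characterization_1_only_if}, and the present statement only repackages them while spelling out the six conditions explicitly for the reader's convenience.
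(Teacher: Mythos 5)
Your proposal is correct and matches the paper's intent exactly: the paper offers no separate proof of Theorem~\ref{cor:main}, presenting it as a summary obtained by combining Corollary~\ref{cor:local_mew}, Lemma~\ref{lemma:nofull}, Theorem~\ref{thm:canonical} and Theorem~\ref{lemma:characterization_1_only_if} in precisely the way you describe. Your remark about establishing the absence of full antipodal triples before invoking weak $Q$-colorability is the right bookkeeping and is consistent with Definition~\ref{def:partial_and_weak}.
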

By Theorem~\ref{thm:always_exists_partial}, weak $Q$-colorings uniquely restrict to proper colorings of $H^\cross$. Such colorings are maps that satisfy the first (suitably restricted) five conditions of Definition~\ref{def:partial_and_weak} on $V(H^\cross)$. On the other hand, still by the theorem, such maps are precisely the proper colorings of $H^\cross$ which satisfy the first four conditions of Definition~\ref{def:partial_and_weak}. 
Denote by $\ccro$ the unique proper coloring of $H^\cross$ satisfying the first (suitably restricted) four conditions of Definition~\ref{def:partial_and_weak}. These conditions read as follows,
\begin{enumerate}[label={\rm(\roman*)}]\itemsep0em
	\item\label{com:partial_i} $\ccro(\gamma)=i$ for all $\gamma\in V(H^\cross)\cap D_i$ and for all $i\in [\ell]$,
	\item\label{com:partial_ii} $\ccro(\gamma)\in\{i,j\}$ for all $i,\,j\in [\ell]$, $i<j$ and for all $\gamma\in D_{i,j}\cap V(H^\cross)$, 
	\item\label{com:partial_iii} for all $i\in[\ell]$, for all $\gamma\in D_i$ if $\exists u\in\Upper_Q$ such that $\gamma\leftrightarrow u$, then $\ccro(\gamma)=i$,
\end{enumerate}
where \ref{com:partial_iii} is the same as condition \ref{item:d} because, as noticed after Remark~\ref{rem:tau}, conditions~\ref{item:d} and \ref{item:e} apply to $H^\cross$. Observe that there always exists a map $f: V(H^\cross)\rightarrow [\ell]$ which satisfies conditions \ref{com:partial_i}, \ref{com:partial_ii}, and \ref{com:partial_iii}. Moreover, such a map, has a unique restriction $\cro_Q$ on $V(H^\cross)\cap(\cup_iD_i)\rightarrow [\ell]$ defined by  $\cro_Q(\gamma)=i\Leftrightarrow \gamma\in V(H^\cross)\cap D_i$. Notice that $\cro_Q$ is a proper coloring of the subgraph of $H^\cross$ induced by $V(H^\cross)\cap(\cup_iD_i)$: if $\gamma\in D_i$ and $\gamma'\in D_j$ are such that $\gamma\leftrightarrow\gamma'$ for some $i,j$ with $i<j$, then $\gamma\leftrightarrow u_j$ and $\gamma'\leftrightarrow u_i$ both hold by Lemma~\ref{lemma:elenco_trivial_2}.~\ref{item:antipodal_square}; hence, $\cro_Q(\gamma)=i\not=\cro_Q(\gamma')=j$ because of condition~\ref{com:partial_iii}.

Hence, if coloring $\ccro$ exists, then it extends $\cro_Q$ on all of $V(H^\cross)$ under the condition that $\ccro$ is a proper coloring extension or, equivalently, that $\ccro$ satisfies condition \ref{item:e}. Since \ref{item:e} cannot be satisfied only if $\Gamma_Q$ contains three elements $\gamma\in D_{i,j}$, $\gamma'\in D_i$ and $\gamma''\in D_j$, $i<j$, such that $\gamma\leftrightarrow \gamma'$, and  $\gamma\leftrightarrow \gamma'$, it follows that the absence of such triples is a sufficient condition for $\ccro$ to extend $\cro_Q$.

The next result summarizes the remarks above. It can be seen as the converse of Theorem~\ref{thm:restricted_col}. As usual, $\{D_i\ | i\in [\ell]\}\cup \{D_{i,j} \ | i,\,j\in [\ell], i<j\}$ is the $Q$-skeleton of $G$.
\begin{theorem}\label{thm:always_exists_partial}
	Let $Q$ be a clique separator of $G$ and $H$ the $Q$-antipodality graph of $G$. 
	\begin{itemize}\itemsep0em
		\item If $\Upper_Q$ contains no full antipodal triple, then there exists a unique map $\cro_Q$ which satisfies \ref{com:partial_i}, \ref{com:partial_ii}, and \ref{com:partial_iii} on $V(H^\cross)\cap(\cup_iD_i)$. Such a map is a proper coloring of the subgraph induced by $V(H^\cross)\cap(\cup_iD_i)$ in $H^\cross$.  
		\item If, in addition, $\Gamma_Q\setminus \Upper_Q$ contains no triple of the form
		\begin{equation}\label{eq:bad_p3} 
			\text{$\{\gamma,\gamma',\gamma''\}$, $\gamma\in D_{i,j}$, $\gamma'\in D_i$, and $\gamma''\in D_j$ such that $\gamma\leftrightarrow \gamma'$, and  $\gamma\leftrightarrow \gamma''$, $i,\,j\in [\ell]$, $i<j$},
		\end{equation}
		then there is a unique extension $\ccro$ of $\cro_Q$ to all of $V(H^\cross)$ which satisfies condition~\ref{item:e}. Such a map is a proper coloring of $H^\cross$.
		\item If, in addition, $G$ is weak $Q$-colorable, then any weak $Q$-coloring $f$ extends $\ccro$ to a proper coloring of $H$. Moreover, $f$ restricts to a proper 2-coloring of $H^\intra$.  
	\end{itemize}
\end{theorem}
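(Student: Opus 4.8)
The plan is to establish the three bullets in turn, reading off throughout that under the standing hypothesis the $Q$-skeleton $\mathcal{D}$ is a partition of $\Gamma_Q$ (Lemma~\ref{remark:D_i_is_a_partition}), so that $V(H^\cross)$ splits as $\bigl(V(H^\cross)\cap\bigcup_i D_i\bigr)\cup\bigl(V(H^\cross)\cap\bigcup_{i<j}D_{i,j}\bigr)$. Essentially all of the raw material is already in the paragraphs preceding the statement; the work is to organize it and to pin down exactly where the hypotheses ``$\Gamma_Q\setminus\Upper_Q$ has no triple of the form~\eqref{eq:bad_p3}'' and ``$G$ is weak $Q$-colorable'' are actually used.

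For the first bullet, condition~\ref{com:partial_i} literally prescribes $\cro_Q(\gamma)=i$ for every $\gamma\in V(H^\cross)\cap D_i$, while~\ref{com:partial_ii} is vacuous and~\ref{com:partial_iii} automatically satisfied on $\bigcup_i D_i$; hence $\cro_Q$ exists and is unique, and the only thing to verify is that it is a proper coloring. An edge of $H^\cross$ with both ends in $\bigcup_i D_i$ joins some $\gamma\in D_i$ to some $\gamma'\in D_j$ with $i\neq j$ (an edge inside a single $D_i$ is an intra edge, hence absent from $H^\cross$), so its ends get the distinct colors $i$ and $j$; Lemma~\ref{lemma:elenco_trivial_2}.\ref{item:antipodal_square} together with~\ref{com:partial_iii} only re-confirm this.

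The crux is the second bullet, namely extending $\cro_Q$ over $V(H^\cross)\cap\bigcup_{i<j}D_{i,j}$. Fix $\gamma\in D_{i,j}\cap V(H^\cross)$. By definition $\gamma$ has a cross-neighbour, and by Lemma~\ref{lemma:elenco_trivial_2}.\ref{item:antipodal_k_k'_1} every cross-neighbour of $\gamma$ lies in $D_i\cup D_j$; pick one, say in $D_k$ with $k\in\{i,j\}$. Then condition~\ref{item:e} forces $\ccro(\gamma)=\{i,j\}\setminus\{k\}$, and the only way this can be inconsistent is if $\gamma$ has cross-neighbours in both $D_i$ and $D_j$ --- which is precisely the configuration ruled out by hypothesis~\eqref{eq:bad_p3}. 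So $\ccro$ is well defined, and it is the unique extension of $\cro_Q$ satisfying~\ref{item:e}, since the value on each $\gamma\in D_{i,j}\cap V(H^\cross)$ is forced. Properness of $\ccro$ on $H^\cross$ is then a brief case check: a cross edge either lies inside $\bigcup_i D_i$ (first bullet), or joins $\gamma\in D_{i,j}$ to $\gamma'\in D_k$ with $k\in\{i,j\}$, where $\ccro(\gamma)=\{i,j\}\setminus\{k\}\neq k=\ccro(\gamma')$; and there is no cross edge between two doubly-indexed members, again because a cross-neighbour of a $D_{i,j}$-element lies in $D_i\cup D_j$, which is disjoint from every $D_{i',j'}$.

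For the third bullet, take a weak $Q$-coloring $f$; by Theorem~\ref{thm:restricted_col} it is a proper coloring of $H$. On $D_i\cap V(H^\cross)$ every $\gamma$ is antipodal to something outside $D_i$, hence (Lemma~\ref{lemma:elenco_trivial_2}.\ref{item:antipodal_D_i_Upper}) to some $u\in\Upper_Q$, so condition~\ref{item:d} gives $f(\gamma)=i=\cro_Q(\gamma)$; thus $f$ extends $\cro_Q$, and since $f$ also obeys~\ref{item:e} and is proper on $H^\cross$, the uniqueness from the second bullet forces $f|_{V(H^\cross)}=\ccro$. So $f$ extends $\ccro$ to a proper coloring of $H$. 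For the last clause, conditions~\ref{item:b},~\ref{item:c},~\ref{item:f} say that on each member $D\in\mathcal{D}$ the coloring $f$ is proper and uses at most two colors ($\{i,\ell+1\}$ on $D_i$, $\{i,j\}$ on $D_{i,j}$), and since every connected component of $H^\intra$ is contained in a single member of $\mathcal{D}$, $f$ restricts to a proper $2$-coloring of $H^\intra$. I expect the only genuinely delicate point to be the well-definedness of $\ccro$ in the second bullet --- ensuring that condition~\ref{item:e} never demands two different colors on a single $\gamma\in D_{i,j}$ --- which is exactly where hypothesis~\eqref{eq:bad_p3} is needed; everything else is bookkeeping on top of Lemma~\ref{lemma:elenco_trivial_2} and Theorem~\ref{thm:restricted_col}.
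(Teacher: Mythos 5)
Your proposal is correct and follows essentially the same route as the paper, which presents Theorem~\ref{thm:always_exists_partial} without a separate proof as a summary of the remarks immediately preceding it (the forcing of $\cro_Q$ by condition~\ref{com:partial_i}, the use of Lemma~\ref{lemma:elenco_trivial_2}.\ref{item:antipodal_k_k'_1} and \ref{item:antipodal_square} to locate cross-neighbours, the role of the triples in~\eqref{eq:bad_p3} as the only obstruction to condition~\ref{item:e}, and the 2-coloring of each $H[D]$ via \ref{item:b}, \ref{item:c}, \ref{item:f}). Your write-up just makes those remarks explicit, and correctly pinpoints where each hypothesis is used.
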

Although  the result above is seemingly tautological, it makes really apparent the effect of the stiff structure of antipodal triple free attachedness graphs in the recognition of path graphs. In the first place, maps $\cro_Q$ and $\ccro$ can be viewed 
as ``partial proper colorings'' of $H$, namely, proper colorings of subgraphs of $H$. Therefore, weak-colorings can be sought among the extensions of these partial coloring rather than constructed from scratch and it can be easily decided whether such partial colorings exist because, in case, they are uniquely determined by the structure of $H$. In the second place, modulo the map $\ccro$---uniquely determined by $H^\cross$---strong $Q$-colorings are nothing but proper 2-colorings of $H^\intra$ where some of the vertices are already colored. In this precise sense we have reduced strong $Q$-colorability, namely, a constrained vertex coloring problem, to a 2-colorability problem. This is the essence of our ``weakening''. As a consequence of Theorem~\ref{cor:main} and Theorem~\ref{thm:always_exists_partial} we have the following
\begin{corollary}\label{cor:above}
	a chordal graph $G$ is not a path graph if and only if there exists a clique separator $Q$ such that either one of the following applies
	\begin{itemize}
		\item $\Upper_Q$ contains a full antipodal triple,
		\item $\Gamma_Q\setminus \Upper_Q$ contains a triple of the form given in~\eqref{eq:bad_p3},
		\item the restriction of $\ccro$ to $V(H^\cross)\cap D$ for some member $D$ of the $Q$-skeleton of $G$, does not extend to  a proper 2-coloring of $H[D]$, where $H$ is the antipodality graph of $G$.
	\end{itemize} 
\end{corollary}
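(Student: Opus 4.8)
The plan is to read the statement as the contrapositive of Theorem~\ref{cor:main}, and then to use Theorem~\ref{thm:always_exists_partial} (together with Theorem~\ref{thm:restricted_col}) to resolve the clause ``$G$ is not weak $Q$-colorable'' into the second and third alternatives. First I would negate Theorem~\ref{cor:main}: a chordal graph $G$ fails to be a path graph precisely when there is a clique separator $Q$ for which the conjunction ``$\Upper_Q$ contains no full antipodal triple and $G$ is weak $Q$-colorable'' fails, i.e.\ for which $\Upper_Q$ contains a full antipodal triple, or else $\Upper_Q$ contains no full antipodal triple but $G$ is not weak $Q$-colorable. The first option is verbatim the first bullet, so the whole argument reduces to the following claim: if $\Upper_Q$ contains no full antipodal triple, then $G$ is not weak $Q$-colorable if and only if the second or the third bullet holds.

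To prove this claim I would fix such a $Q$, note that the $Q$-skeleton $\mathcal{D}$ partitions $\Gamma_Q$ by Lemma~\ref{remark:D_i_is_a_partition} and that $\cro_Q$ exists by the first item of Theorem~\ref{thm:always_exists_partial}, and split into two cases according to whether $\Gamma_Q\setminus\Upper_Q$ contains a triple of the form~\eqref{eq:bad_p3}. If it does, the second bullet holds, and moreover $G$ cannot be weak $Q$-colorable: a triple $\{\gamma,\gamma',\gamma''\}$ with $\gamma\in D_{i,j}$, $\gamma'\in D_i$, $\gamma''\in D_j$, $\gamma\leftrightarrow\gamma'$ and $\gamma\leftrightarrow\gamma''$ would force, through condition~\ref{item:e}, both $f(\gamma)=j$ and $f(\gamma)=i$ with $i\neq j$. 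Hence this case contributes one direction of the claim and is otherwise closed, and the real work is in the complementary case in which $\Gamma_Q\setminus\Upper_Q$ contains no such triple.

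In the complementary case the second and third items of Theorem~\ref{thm:always_exists_partial} give the coloring $\ccro$ as the unique proper coloring of $H^\cross$ extending $\cro_Q$ and satisfying~\ref{item:e}, and they identify the weak $Q$-colorings of $G$ with the extensions of $\ccro$ to a proper coloring of $H$ that restrict to a proper $2$-coloring of $H^\intra$; the converse implication---that any such extension is again a weak $Q$-coloring---is supplied by Theorem~\ref{thm:restricted_col}. I would then observe that since $\mathcal{D}$ partitions $\Gamma_Q$, the graph $H^\intra$ is the vertex-disjoint union of the induced subgraphs $H[D]$, $D\in\mathcal{D}$, so an extension of $\ccro$ to a proper coloring of $H$ that $2$-colors $H^\intra$ exists if and only if, for every $D\in\mathcal{D}$, the restriction of $\ccro$ to $V(H^\cross)\cap D$ extends to a proper $2$-coloring of $H[D]$. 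Negating this equivalence turns ``$G$ is not weak $Q$-colorable'' into exactly the third bullet, which finishes the claim and hence the corollary.

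I expect the main obstacle to be the careful verification behind the sentence ``the weak $Q$-colorings of $G$ are exactly the extensions of $\ccro$ that $2$-color $H^\intra$'': one must check that an arbitrary proper coloring of $H$ agreeing with $\ccro$ on $V(H^\cross)$ and inducing a proper $2$-coloring of each $H[D]$ actually satisfies all six defining conditions of a weak $Q$-coloring---in particular conditions~\ref{item:d} and~\ref{item:e}, which constrain cross edges and are already baked into $\ccro$, must not be broken by the color choices made inside the $H[D]$'s, and the two colors used inside $H[D_i]$ (resp.\ $H[D_{i,j}]$) must be choosable as $\{i,\ell+1\}$ (resp.\ $\{i,j\}$). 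This rests on the facts that $V(H^\cross)$-vertices have their colors pinned by $\ccro$, that $\mathcal{D}$ is a genuine partition so distinct $H[D]$'s do not interact, and that $u_i\leftrightarrow\gamma$ is impossible for $\gamma\in D_i$ (since $\gamma\leq u_i$); all of this, however, is already packaged in Theorems~\ref{thm:always_exists_partial} and~\ref{thm:restricted_col}, so beyond invoking them the remaining steps are the routine contrapositive of Theorem~\ref{cor:main}.
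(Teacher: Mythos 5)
Your proposal is correct and follows exactly the route the paper intends: the corollary is presented there as an immediate consequence of Theorem~\ref{cor:main} together with Theorem~\ref{thm:always_exists_partial} (and the surrounding discussion of $\cro_Q$, $\ccro$, and the reduction to $2$-coloring $H^\intra$), which is precisely the contrapositive-plus-case-split you carry out. Your closing paragraph in fact supplies the verification (pinned colors on $V(H^\cross)$, disjointness of the $H[D]$'s, and $u_i\not\leftrightarrow\gamma$ for $\gamma\in D_i$) that the paper leaves implicit.
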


\subsection{Consequences of the characterization}\label{sec:consequences}
Theorem~\ref{cor:main} gives the so-called \emph{good characterization} for path graph membership within chordal graphs, namely, it proves that path graph membership of chordal graphs is in $\text{NP}\cap\text{CoNP}$ without resorting to existing polynomial-time algorithms.~To the best of our knowledge, this is the first such characterization. The ``only if'' part of Theorem~\ref{cor:main} is a short proof of path graph membership: since a chordal graph has linearly many clique separators, for each such separator $Q$, it suffices to exhibit the triplet $(\mathcal{D}^Q,H^Q,f)$ where 
$\mathcal{D}^Q$ is the $Q$-skeleton of $Q$, $H^Q$ is the $Q$-antipodality graph of $G$, and $f$ is a weak $Q$-coloring. Notice that both the $Q$-skeleton and the $Q$-antipodality graph are part of the proof. Hence, one has to prove that the former is a partition of $\Gamma_Q$ while the latter is the graph of the relation $\leftrightarrow$. However, both tasks can be accomplished in polynomial-time (see below). 

On the other hand, Corollary~\ref{cor:above} immediately provides a short refutation of path graph membership, namely, a short proof that a graph is not a path graph. The main consequence of Corollary~\ref{cor:above} is the characterization of path graphs by two exhaustive lists of obstructions to path graph membership in the form of minimal forbidden induced/partial 2-edge colored subgraphs in each of the \emph{attachedness graphs} of the input graph. Section~\ref{section:forbidden_subgraphs} is entirely devoted to such a characterization while the rest of the present section is devoted to further algorithmic consequences of the characterization. 
\mybreak
Theorem~\ref{cor:main} directly implies a polynomial-time algorithm for path graph membership (actually, as shown in \cite{mew,schaffer}, also for path graph realization with a little extra effort) which can be described as follows. For each clique separator $Q$, 
\begin{itemize}\itemsep0em
	\item compute the $Q$-skeleton and prove or refuse that the $Q$-skeleton is a partition of $\Gamma_Q$; this can be done in polynomial-time using the fact that $\leq$ is a partial order; moreover, one checks that the $Q$-skeleton is a partition of $\Gamma_Q$ either by computing the union or, roughly, by checking the triples in $\Upper_Q$ (these are essentially equivalent tasks because of Lemma~\ref{remark:D_i_is_a_partition});
	\item build the $Q$-antipodality graph of $G$; this task can be accomplished in polynomial time because we can answer the question $\gamma\leftrightarrow \gamma'$? roughly by comparing  the (polynomially many) relevant cliques of $\gamma$ and $\gamma'$ (each $\gamma$ is a a chordal graph);
	\item exhibit a weak $Q$-coloring or declare that none exists; the latter task can be accomplished in polynomial-time by extending the unique partial coloring $\cro_Q$ defined in Theorem~\ref{thm:always_exists_partial} to a proper coloring of the $Q$-antipodality graph of $G$; checking if such an extension exists, requires two simple tests implied by conditions \ref{item:e} and \ref{item:f}, respectively: 
	\begin{itemize}\itemsep0em
		\item[--] condition~\ref{item:e} cannot be satisfied only if only $\Gamma_Q$ contains a triple of the form described in~\eqref{eq:bad_p3}; this follows by Theorem~\ref{thm:always_exists_partial}; if \ref{item:e} is satisfied, then $\ccro$ is computed.
		\item[--] condition~\ref{item:f} cannot be satisfied only if there is some member $D$ of the $Q$-skeleton of $G$ such that the graph induced by $D$ in the $Q$-antipodality graph of $G$ cannot be 2-colored given the set of elements already colored by $\ccro$. 
	\end{itemize}
\end{itemize}
This algorithm admits a recursive implementation on the clique separators of the input graphs---in the same spirit as Theorem~\ref{thm:mw} and Sch\"{a}ffer's backtracking algorithm~\cite{schaffer}---. 

In each recursion step, one has to perform the three tasks described above. The overall complexity is determined by the complexity of these tasks. Among them, the construction of the $Q$-antipodality graph is the most expensive. A brute force construction (as the one described above) leads to an overall time- complexity which is worse than the time-complexity of the existing algorithms~\cite{chaplick,schaffer}. 

However, it is shown in~\cite{balzotti}, that by exploiting the very stiff structure of antipodality relation on the $D_i$'s and $D_{i,j}$'s described Lemma~\ref{lemma:elenco_trivial_2}---provided that  $\Upper_Q$ has no full antipodal triple---one can compute the antipodality graph quickly. Refer to~\cite{balzotti} for further details. 

The algorithm in \cite{balzotti} looks simpler and more intuitive than Sch\"{a}ffer's backtracking algorithm~\cite{schaffer} and requires no complex data structure in contrast to Chaplick's algorithm~\cite{chaplick}, yet it achieves the same time-complexity as that of the latter two algorithms.

\begin{remark}
The two defining conditions of strong $Q$-colorability are in trade-off. Weak $Q$-colorings solve this trade-off by minimizing the number of colors used within each member and between each pair of members of the $Q$-skeleton while keeping fixed the total number of colors. 
\end{remark}

The $Q$-skeleton of $G$ constitutes a powerful device also for the recognition of directed path graphs. It allows a quick construction of the $Q$-antipodality graph of directed path graphs as well. This leads to the algorithm in~\cite{balzotti} to recognize directed path graphs by relying on Monma and Wei~\cite{mew}' characterization of directed path graphs. This characterization is stated below in our terminology.
\begin{theorem}[Monma and Wei~\cite{mew}]
	A chordal graph $G$ is a directed path graph if and only if either $G$ is an atom or, for each clique separator $Q$ each graph $\gamma\in \Gamma_Q$ is a directed path graph and the $Q$-antipodality graph is 2-colorable.
\end{theorem}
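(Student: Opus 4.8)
The plan is to follow the recursive scheme behind Theorem~\ref{thm:mw}, reading the two colors of a proper $2$-coloring of the $Q$-antipodality graph as the two orientations of the host tree at the junction created by $Q$. The underlying model is the directed analogue of Gavril's Theorem~\ref{thm:gavril} (Gavril~\cite{gavril_DV_algorithm}): $G$ is a directed path graph if and only if it admits a \emph{directed clique path tree}, i.e.\ a directed tree $T$ on the cliques of $G$ such that for every $v\in V(G)$ the cliques containing $v$ induce a directed path (a dipath) in $T$. Since directed path graphs are in particular path graphs, the dominance/antipodality apparatus of Section~\ref{section:mew_characterization} and the $Q$-skeleton of Section~\ref{section:dominance_and_antipodality} are all available. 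The ``$G$ is an atom'' alternative is immediate: a chordal atom has at most two maximal cliques---any clique tree with three or more nodes contains an internal node, and the vertex set of such a node is a clique separator---hence it is an interval graph and a fortiori a directed path graph.

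For necessity I fix a directed clique path tree $T$ of $G$ together with a clique separator $Q$. As in the discussion preceding Theorem~\ref{thm:mw}, deleting the node $Q$ splits $T$ into branches, each $\gamma_i$ inheriting a directed clique path tree; hence every $\gamma_i$ is a directed path graph. I color $\gamma_i$ by the orientation---towards the junction (IN) or away from it (OUT)---of the edge joining its branch to $Q$, and show this is a proper $2$-coloring of the $Q$-antipodality graph. First, antipodal components lie in different branches: if $\gamma_i\leftrightarrow\gamma_j$ via relevant cliques $K_i,K_j$ sharing a vertex $v\in K_i\cap K_j\cap Q$ with incomparable footprints, and $K_i,K_j$ lay in one branch, then on the dipath of $v$ (which contains $Q$, $K_i$, $K_j$) the two cliques are on the same side of $Q$; taking $K_i$ to be the nearer one, incomparability gives some $w\in (K_j\cap Q)\setminus(K_i\cap Q)$, and then $w\in Q\cap K_j$ forces the dipath of $w$ to run from $Q$ to $K_j$ and hence through $K_i$, so $w\in K_i$---a contradiction. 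Second, a dipath crosses $Q$ using at most one incoming and one outgoing edge, so no vertex can be relevant in two equally oriented branches; thus equally colored components never share a vertex through incomparable relevant cliques and are never antipodal. Together these give $c(\gamma_i)\neq c(\gamma_j)$ whenever $\gamma_i\leftrightarrow\gamma_j$.

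For sufficiency I induct on the number of clique separators, the base case being the interval (atom) graphs above. In the inductive step each $\gamma_i$ has, by hypothesis and induction, a directed clique path tree $T_i$, and I fix a proper $2$-coloring splitting $\Gamma_Q$ into an IN-part and an OUT-part. The key observation is that within one part there are no antipodal pairs, so any two relevant cliques of same-colored components with intersecting footprints have comparable footprints; hence the footprints on each side form a laminar family and organize into a rooted forest, oriented towards $Q$ on the IN-side and away from $Q$ on the OUT-side. Reading this through the $Q$-skeleton---nesting each $D_i$ inside the branch of $u_i\in\Upper_Q$ and inserting each $D_{i,j}$ along the dipath joining the branches of $u_i$ and $u_j$, which lie on opposite sides because $D_{i,j}\neq\emptyset$ forces $u_i\leftrightarrow u_j$---I glue the $T_i$ to the new node $Q$, inheriting their internal orientations. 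The dipath condition is then checked vertex by vertex: for $v\notin Q$ it is inherited from the relevant $T_i$, while for $v\in Q$ the cliques containing $v$ are $Q$ together with the relevant cliques of the neighboring subgraphs of $v$, which by Lemma~\ref{lemma:elenco_trivial_2} and properness of the coloring split into a nested IN-chain and a nested OUT-chain; these concatenate through $Q$ into a single dipath entering from the IN-side and leaving to the OUT-side.

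I expect the sufficiency assembly, and in particular its global coherence, to be the main obstacle. A proper $2$-coloring only separates antipodal pairs directly; the work is to orient and nest all the $T_i$ simultaneously so that every vertex's clique set becomes one dipath. Two inputs are decisive: $2$-colorability precludes odd cycles and hence full antipodal triples (matching Lemma~\ref{lemma:nofull}), so the $Q$-skeleton is a genuine partition (Lemma~\ref{remark:D_i_is_a_partition}) along which the nesting is well defined; and Lemma~\ref{lemma:elenco_trivial_2}.~\ref{item:antipodal_square} guarantees that the insertions of $D_{i,j}$-elements are orientation-consistent, so that no vertex is forced to reuse a branch or to backtrack along the assembled dipath. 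Verifying this orientation-consistency---that the two laminar forests meet at $Q$ without creating a vertex whose clique path would have to branch or reverse---is the delicate point on which the whole argument turns.
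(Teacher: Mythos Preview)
The paper does not prove this theorem; it is quoted from Monma and Wei~\cite{mew} and stated without argument, so there is no proof in the paper to compare yours against.

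On its own merits: your necessity direction is correct. Different components of $G-Q$ occupy different branches of $T-Q$ automatically (so your first sub-argument is superfluous, though not wrong), and your second observation---that a dipath through $Q$ uses at most one incoming and one outgoing edge at $Q$---is exactly what forces antipodal components to have opposite orientations.

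The gap you yourself anticipate in the sufficiency direction is real, and it is a missing construction rather than a mere consistency check. If you literally ``glue the $T_i$ to the new node $Q$'' (each $T_i$ attached at its leaf $Q$), the result is not a clique path tree: whenever two same-colored components $\gamma,\gamma'$ are both neighboring subgraphs of some $v\in Q$, the cliques containing $v$ consist of $Q$ together with a path inside $T_\gamma$ and a separate path inside $T_{\gamma'}$, and their union is a star at $Q$, not a path. Your observation that same-colored footprints $K\cap Q$ are nested is correct, but nesting of subsets of $Q$ says nothing about where those cliques sit in the assembled tree. The phrases ``nesting each $D_i$ inside the branch of $u_i$'' and ``inserting each $D_{i,j}$ along the dipath joining $u_i$ and $u_j$'' point at the right repair---dominated components must be spliced into the trees of their dominators rather than attached to $Q$ directly---but you have not said \emph{where} in $T_{u_i}$ the tree $T_\gamma$ is grafted, nor why such a graft preserves the dipath property for the vertices of $\gamma$ itself. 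That splicing (and the freedom to reverse each $T_i$ before grafting) is the substantive content of Monma and Wei's sufficiency proof, and it is not recoverable from the $2$-coloring and Lemma~\ref{lemma:elenco_trivial_2} alone.
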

Here an \emph{atom} is a directed path graph if and only if it is chordal (recall that an atom is a graph with no clique separator). The algorithm in~\cite{balzotti} is the first algorithm that uses Monma and Wei's characterization and that recognizes directed path graphs without using the results in~\cite{chaplick-gutierrez}. Still refer to~\cite{balzotti} for more details.

\section{Forbidden subgraphs in attachedness graphs}\label{section:forbidden_subgraphs}
We now reap the graph theoretic crops of Theorem~\ref{cor:main} by listing all the obstructions to strong $Q$-colorability in the form of subgraphs of the $Q$-attachedness graphs of a chordal graph $G$. Recall that the $Q$-attachedness graph of $G$ is the graph $(\Gamma_Q,\Join)$ with reflexive pairs neglected ---whose edges are therefore pairs $\gamma\gamma'$, $\gamma,\gamma'\in \Gamma_Q$ such that $\gamma\Join\gamma'$---. Also recall that the $Q$-antipodality and the $Q$-dominance graph of $G$ factor $(\Gamma_Q,\Join)$. Such a factorization yields a 2-edge coloring of $(\Gamma_Q,\Join)$ which models the interactions between $\leftrightarrow$ and $\leq$. 

We first describe the uncolored version of our obstructions to path graphs membership.

\begin{definition}\,\,\\
-- For an integer $m$ such that $m\geq 3$, the $m$-\emph{wheel} is the graph on $[m+1]$ where the vertices in $[m]$ induce a cycle and vertex $m+1$ is adjacent to all the other vertices (see Figure~\ref{fig:wheelsandfans}.a).\\
-- For an integer $m$ such that $m\geq 4$, the \emph{$m$-fan} is the graph on $[m]$ such that $[m-1]$ induces a path having end-vertices $1$ and $m-1$ and vertex $m$ is adjacent to all the other vertices (see Figure~\ref{fig:wheelsandfans}.b).\\
-- The \emph{$m$-chorded fan} is the graph obtained from the $m$-fan by adding an edge between vertices $1$ and $m-1$. Notice that the $m$-chorded fan is isomorphic to the $m-1$-wheel (see Figure~\ref{fig:wheelsandfans}.c).\\
-- For an integer $m$ such that $m\geq 4$, the \emph{$m$-double fan} is the graph on $[m]$ such that $[m]$ induces a cycle and vertices $m-1$ and $m$ are adjacent to all other vertices (see Figure~\ref{fig:wheelsandfans}.d).
\end{definition}

\begin{figure}[h]
\captionsetup[subfigure]{justification=centering}
\centering
	\begin{subfigure}{2.6cm}
\begin{overpic}[width=2.6cm,percent]{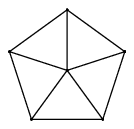}
\end{overpic}
\caption{}\label{fig:non-crossing_a}
\end{subfigure}
\qquad
	\begin{subfigure}{2.6cm}
\begin{overpic}[width=2.6cm,percent]{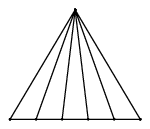}
\end{overpic}
\caption{}\label{fig:non-crossing_b}
\end{subfigure}
\qquad
	\begin{subfigure}{2.6cm}
\begin{overpic}[width=2.6cm,percent]{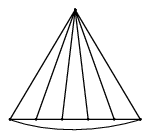}
\end{overpic}
\caption{}\label{fig:non-crossing_b_bis}
\end{subfigure}	
\qquad
	\begin{subfigure}{2.6cm}
\begin{overpic}[width=2.6cm,percent]{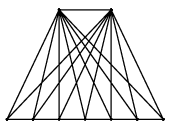}
\end{overpic}
\caption{}\label{fig:non-crossing_c}
\end{subfigure}	
  \caption{(a) $5$-wheel; (b) $7$-fan; (c) $7$-chorded-fan; (d) $9$-double fan.}
\label{fig:wheelsandfans}
\end{figure}

\noindent
Figure~\ref{fig:ostruzioni_indotte} lists certain special 2-edge-colored graphs, obtained as 2-edge-colored versions of the aforesaid graphs, needed in the characterization of path graphs (Theorem~\ref{cor:all}). The two colors are represented by dotted or solid lines, respectively.

\begin{figure}[h]
\centering
\begin{overpic}[width=10cm]{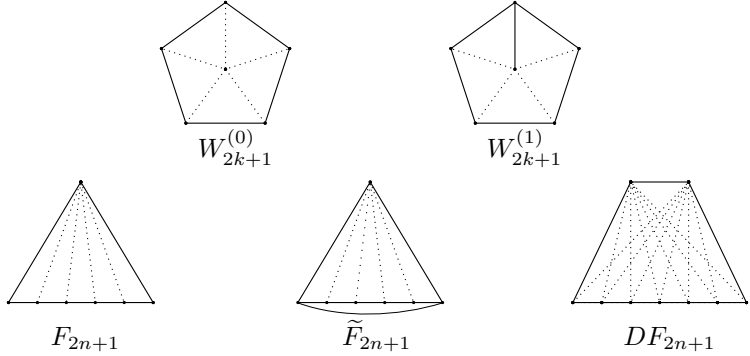} 
\put(26,26){$W^{(0)}_{2k+1}$}    
\put(64,26){$W^{(1)}_{2k+1}$}  
\put(6.7,1){$F_{2n+1}$}  
\put(45,1){$\widetilde{F}_{2n+1}$}  
\put(82,1){$DF_{2n+1}$}         
\end{overpic}
\caption{2-edge-colored graphs occurring in Theorem~\ref{cor:all}, $k\geq1$ and $n\geq2$.}
\label{fig:ostruzioni_indotte}
\end{figure}

It is convenient to settle a specific notation and terminology to present the results. An \emph{isomorphism of edge-colored graphs} is a graph isomorphism which preserves edge colors. All of the 2-edge-colored graphs in Figure~\ref{fig:ostruzioni_indotte} are pairwise non isomorphic as edge-colored graphs. We denote by $\mathcal{F}$ the collection they form ---$\mathcal{F}$ stands for “forbidden”---. Hence
$$\mathcal{F}=\left\{W_{2k+1}^{(0)},\,W_{2k+1}^{(1)},\, F_{2n+1}, \, \widetilde{F}_{2n+1},\, DF_{2n+1} \ |\ k\geq 1,\,n\geq 2\right\}.$$
Also let 
$$\mathcal{F}_0=\left\{W_{2k+1}^{(0)},W_{2k+1}^{(1)},F_{2n+1}\right\}.$$
Triangles of attachedness graphs play a special role. A triangle which is induced by a neighboring triple the $Q$-attachedness graph of $G$ is called a \emph{full triangle}, otherwise it is called \emph{empty}. A triangle all whose edges are antipodal is an \emph{antipodal triangle}. Not every triangle in $Q$-attachedness graph of $G$ is full, indeed an antipodal triangle might be empty (recall the discussion right after Lemma~\ref{lemma:elenco_trivial_1}). Unfortunately there is no way to establish whether an antipodal triangle is full or empty\footnote{Let $G$ be the graph $F_2$ in Figure~\ref{fig:lmp}; $G$ has only one separator, $Q$, say; let $H$ and $M$ be its $Q$-antipodality and $Q$-attachedness graphs; hence $M=H\cong K_3$ and the triangle spans a neighboring triple. However, if we denote by $z$ the universal vertex of $G$, then $G-z$  is separated by $Q\setminus z$. Again, let $Q'=Q\setminus z$ be the only clique separator; it holds that $M=H\cong K_3$ but the triangle does not span a neighboring triple.}. We know that full antipodal triples are obstructions to strong $Q$-colorability. Therefore, full antipodal triangles are obstructions to membership in the class of path graphs and they should be added to $\mathcal{F}$. However, since full antipodal triangles are not just edge colored triangles (because they have also the property of being full), we must treat such triangles separately in our statements. To overcome this (somehow unaesthetic and boring) ambiguity we use a standard trick.
\mybreak
For a graph $G$ let $G^+$ be the graph defined as follows. Let $V(G)=V=\{v_1,v_2,\ldots,v_n\}$ and $V^+$ be a copy of $V$, $V^+=\{v_1^+,v_2^+,\ldots,v_n^+\}$. Let
\begin{eqnarray}\label{eq:G+}
	G^+=\left(V\cup V^+,E(G)\cup\{v_iv_i^+\}_{i=1}^n\right).
\end{eqnarray}

\begin{lemma}\label{lemma:trick} Let $G$ be a graph. Then $G$ is a path graph if and only if $G^+$ is a path graph.
\end{lemma}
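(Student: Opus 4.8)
The plan is to establish the two implications of the equivalence separately. The direction ``$G^+$ is a path graph $\Rightarrow$ $G$ is a path graph'' is immediate: $G$ is the subgraph of $G^+$ induced by $V$, and, as recalled in the Introduction, the class of path graphs is closed under taking induced subgraphs. So the whole content of the lemma lies in the converse, which I would prove constructively, starting from a path realization of $G$.

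Assume then that $G$ is a path graph, and fix a host tree $H$, a collection $\mathcal{P}=\{P_v \mid v\in V\}$ of paths of $H$, and a bijection $\phi:v\mapsto P_v$ realizing $G$. I would turn $(H,\mathcal{P})$ into a realization of $G^+$ as follows. For every $v\in V$ pick one endvertex $x_v$ of the path $P_v$ (if $P_v$ is a single vertex, take $x_v$ to be that vertex), introduce a brand-new leaf $\ell_v$, and make it adjacent to $x_v$; call $H^+$ the resulting graph, which is again a tree since only new leaves have been appended. Now set $P'_v=V(P_v)\cup\{\ell_v\}$ for $v\in V$ and $P'_{v^+}=\{\ell_v\}$ for $v\in V$. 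Each $P'_v$ is a path of $H^+$ precisely because $\ell_v$ is attached at an \emph{endvertex} of $P_v$ --- this is the single spot where the choice of $x_v$ matters --- and each $P'_{v^+}$ is trivially a path.

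What remains is to check that $\{P'_v\}_{v\in V}\cup\{P'_{v^+}\}_{v\in V}$ realizes $G^+$, and this splits into three short verifications, all resting on the single fact that the vertices $\ell_v$ are pairwise distinct and lie in no $V(P_w)$: for two original vertices $v,w$ one has $P'_v\cap P'_w=V(P_v)\cap V(P_w)$, so adjacencies inside $V$ are unchanged; $P'_{v^+}\cap P'_{w^+}=\emptyset$ whenever $v\neq w$, so $\{v^+\mid v\in V\}$ is independent; and $P'_{v^+}\cap P'_w\neq\emptyset$ iff $\ell_v\in P'_w$ iff $v=w$, so $v^+$ is adjacent to $v$ and to no other vertex, exactly as required in $G^+$. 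Hence $G^+$ is a path graph.

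An alternative route would go through Gavril's characterization (Theorem~\ref{thm:gavril}): graft onto the clique path tree of $G$, for each $v$, a new node $\{v,v^+\}$ next to some clique containing $v$. This works as well, but it forces one to bookkeep how the maximal cliques of $G$ change under the construction (for instance, the clique $\{v\}$ of an isolated vertex of $G$ is absorbed into $\{v,v^+\}$), whereas the realization argument above handles everything uniformly. I do not expect a real obstacle here; the only point that needs care --- and which I have already flagged --- is that $x_v$ must be chosen among the endvertices of $P_v$, so that $P'_v$ remains a path in $H^+$.
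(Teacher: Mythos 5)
Your proposal is correct, and both directions are handled properly: the forward implication by heredity under induced subgraphs, and the converse by an explicit construction. The paper proves the converse differently in form though identically in spirit: it works with a clique path tree $T$ of $G$ (Theorem~\ref{thm:gavril}) rather than with a host-tree realization, picks an end-vertex $\tilde{Q}_v$ of the path induced in $T$ by the cliques containing $v$, and grafts the new clique $\{v,v^+\}$ onto $\tilde{Q}_v$ --- exactly the ``alternative route'' you sketch and set aside in your last paragraph. The shared key idea is the same in both arguments: the object representing $v$ is a path, so one may append the new material at one of its two ends without destroying the path property. What your version buys is that you never have to track how the set of maximal cliques changes when passing from $G$ to $G^+$ (e.g.\ the clique $\{v\}$ of an isolated vertex being absorbed into $\{v,v^+\}$), since the intersection model is insensitive to that; what the paper's version buys is brevity and uniformity with the rest of the paper, which reasons throughout in terms of clique path trees. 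Your verification that the new leaves $\ell_v$ are pairwise distinct and disjoint from all original paths is exactly the right thing to check, and the one delicate point --- that $\ell_v$ must be attached at an \emph{endvertex} of $P_v$ --- is correctly identified and handled.
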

\begin{proof}
	Since $G$ is an induced subgraph of $G^+$, $G$ is a path graph if $G^+$ is such.
	Let $T$ be a clique path tree of $G$. For all $v\in V(G)$, let $\mathcal{K}_v$ the set of all cliques of $G$ containing $v$. By Theorem~\ref{thm:gavril}, $\mathcal{K}_v$ induces a path in $T$, let $\tilde{Q}_v\in \mathcal{K}_v$ be an end-vertex of this path. Thus it suffices to join $vv^+$ to $\tilde{Q}_v$ for all $v\in V(G)$ to yield a clique path tree for $G^+$.
\end{proof}
The reason for having introduced graph $G^+$ relies on the fact that, for every clique separator $Q$ of $G^+$, full antipodal triangles of $G$ display in $Q$-attachedness graph of $G^+$ as small wheels as shown next. 
\begin{lemma}\label{lemma:g+}
	Let $Q$ be a clique separator of $G^+$ and let $M$ be the $Q$-attachedness graph of $G^+$. Then $M$ has no full antipodal triangle and has no induced copy of $W_{2k+1}^{(0)}$if and only if $M$ has no induced copy of $W_{2k+1}^{(0)}$.
\end{lemma}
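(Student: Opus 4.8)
The ``if'' direction is a pure weakening of the hypothesis and is trivial, so I would prove the ``only if'' direction in contrapositive form: \emph{if $M$ has a full antipodal triangle, then $M$ has an induced copy of $W_3^{(0)}$} (the $k=1$ member of the family $\{W_{2k+1}^{(0)}\}_{k\ge 1}$, so this suffices). If $|V(G)|=1$ there is nothing to prove; so assume $|V(G)|\ge 2$, whence, $G$ being connected, every maximal clique of $G$ has size at least two. The first step is to identify the clique separators of $G^+$. Each maximal clique of $G$ is still maximal in $G^+$ (a pendant $v_i^+$, having degree one, cannot be adjoined to a clique of size $\ge 2$), and each pendant edge $\{v_i,v_i^+\}$ is a new maximal clique; these two families are disjoint and exhaust the maximal cliques of $G^+$, so $Q$ is either a maximal clique $C$ of $G$ or a pendant edge $\{v_i,v_i^+\}$.

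I would then dispose of the pendant case. If $Q=\{v_i,v_i^+\}$, the only clique of $G^+$ containing $v_i^+$ is $Q$ itself, which is not relevant; hence every relevant clique $K$ of every member of $\Gamma_Q$ satisfies $K\cap Q=\{v_i\}$, so any two relevant cliques have equal traces on $Q$ and the relation $\leftrightarrow$ is empty on $\Gamma_Q$. Thus $M$ has no antipodal edge and, \emph{a fortiori}, no full antipodal triangle. So from now on $Q=C$ is a maximal clique of $G$. Here $\Gamma_C$ consists of the \emph{pendant members} $\delta_v:=G^+[\{v^+\}\cup C]$, one for each $v\in C$ (each $v^+$ being isolated in $G^+-C$), together with one member $\gamma_U:=G^+\!\left[\,U\cup\{w^+:w\in U\}\cup C\,\right]$ for each connected component $U$ of $G-C$. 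The pendant member $\delta_v$ has a unique relevant clique, namely $\{v,v^+\}$, whose trace on $C$ is $\{v\}$.

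The key observation is that each pendant member is dominated by every member it is attached to: if $\delta_v\Join\gamma$, then for every relevant clique $K'$ of $\gamma$ one has either $v\in K'$, and then $\{v\}=\{v,v^+\}\cap C\subseteq K'\cap C$, or $v\notin K'$, and then $\{v,v^+\}\cap K'\cap C=\emptyset$; by the definition of $\le$ this means $\delta_v\le\gamma$. In particular $\delta_v$ is antipodal to no member of $\Gamma_C$, so the three vertices of any full antipodal triangle of $M$ must be of the form $\gamma_{U_1},\gamma_{U_2},\gamma_{U_3}$ for three distinct components $U_1,U_2,U_3$ of $G-C$. Being a neighboring triple, $\{\gamma_{U_1},\gamma_{U_2},\gamma_{U_3}\}$ has a common vertex $v\in C$ lying in a relevant clique of each $\gamma_{U_i}$; hence $\delta_v\Join\gamma_{U_i}$ for $i=1,2,3$, and by the previous sentence each edge $\delta_v\gamma_{U_i}$ of $M$ is a dominance edge. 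Since $\delta_v$ contains a vertex of $V^+$ absent from each $\gamma_{U_i}$, the four members $\gamma_{U_1},\gamma_{U_2},\gamma_{U_3},\delta_v$ are distinct, and they induce in $M$ a copy of $K_4$ whose rim triangle is all antipodal and whose three spokes are all dominance edges --- that is, an induced copy of $W_3^{(0)}$. This establishes the contrapositive, and with it the lemma.

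The only parts that are not immediate are the (completely routine) description of the clique separators of $G^+$ and of the members of $\Gamma_C$, and the final bookkeeping check that the 2-edge-coloured $K_4$ produced above is indeed the prescribed colouring of $W_3^{(0)}$ (rim antipodal, no antipodal spoke). Once the pendant members are recognized as $\le$-minimal, the wheel drops out in one line, so I anticipate no real combinatorial obstacle.
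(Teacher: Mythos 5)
Your proof is correct and follows essentially the same route as the paper's: rule out the case $Q\cap V^+\neq\emptyset$ (no antipodal edges there), observe that the pendant member $\delta_v$ is dominated by every neighboring subgraph of $v$, and attach $\delta_v$ at a common vertex of the full antipodal triangle to produce an induced $W_3^{(0)}$. Your version merely spells out the classification of the clique separators of $G^+$ and the final colour bookkeeping, which the paper leaves implicit.
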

\begin{proof}
	One direction is trivial. For the other direction it suffices to prove that if $M$ has no induced copy of $W_{3}^{(0)}$, then $M$ has no full antipodal triangles. We prove the contrapositive: if $M$ has a full antipodal triangle, then $M$ has an induced copy of $W_{3}^{(0)}$. Observe first that $Q\cap \{v^+ \ |\ v\in V(G)\}=\emptyset$. For if not, then $Q$ is necessarily of the form $\{v,v^+\}$ for some $v\in V(G)$ (notice that in this case $v$ is a cut vertex); in this case however $M$ would contain no antipodal edges at all and thus no full antipodal triangles. Hence $v^+\not\in Q$ for each $v\in Q$. Notice that for each $v\in Q$ the graph $\gamma^+=(\{v,v^+\},\{vv^+\})$ is $\leq$-dominated by every other neighboring subgraph $\gamma$ of $v$. Let $\{\gamma,\gamma',\gamma''\}$ be the set of vertices of a full antipodal triangle in $M$. Hence, there is some $z\in V(G)$ such that $\gamma$, $\gamma'$ and $\gamma''$ are neighboring subgraphs of $z$. If  $\gamma_z$ is the subgraph of $G$ induced by $\{z,z^+\}\cup Q$, then $\{\gamma_z,\gamma,\gamma',\gamma''\}$ induces a copy of $W_3^{(0)}$ in $M$.
\end{proof}
In the following theorem we claim our characterization by forbidden subgraphs in the attachedness graphs. Note that graphs in $\mathcal{F}$ are induced obstructions, while graphs in $\mathcal{F}_0$ are not necessarily
induced. Moreover, statements~\ref{itfin:nof0+} and \ref{itfin:nof+} are equivalent to~\ref{itfin:nof} and \ref{itfin:nof}, respectively, by using $G^+$ in place of $G$ thanks to Lemma~\ref{lemma:trick} and Lemma~\ref{lemma:g+}.

\begin{theorem}\label{cor:all}
Let $G$ be a chordal graph. Then the following statements are equivalent:
	\begin{enumerate}[label=\subscript{\alph*}{\rm )}, ref=\subscript{\alph*}]\itemsep0em
		\item\label{itfin:G_path_graph} $G$ is a path graph,
		\item\label{itfin:nof0} for every clique separator $Q$ of $G$, the $Q$-attachedness graph of $G$ has no full antipodal triangle and has no subgraph isomorphic to any of the graphs in $\mathcal{F}_0$,
		\item\label{itfin:nof0+} for every clique separator $Q$ of $G$, the $Q$-attachedness graph of $G^+$ has no subgraph isomorphic to any of the graphs in $\mathcal{F}_0$,
		\item\label{itfin:nof} for every clique separator $Q$ of $G$, the $Q$-attachedness graph of $G$ has no full antipodal triangle and has no induced subgraph isomorphic to any of the graphs in $\mathcal{F}$,
		\item\label{itfin:nof+} for every clique separator $Q$ of $G$, the $Q$-attachedness graph of $G^+$ has no induced subgraph isomorphic to any of the graphs in $\mathcal{F}$.
	\end{enumerate}
\end{theorem}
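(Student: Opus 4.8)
The plan is to prove Theorem~\ref{cor:all} by establishing a cycle of implications and then cashing in the two reduction lemmas already proved. Concretely, I would prove \ref{itfin:G_path_graph}$\Leftrightarrow$\ref{itfin:nof} directly, note that \ref{itfin:nof}$\Rightarrow$\ref{itfin:nof0} is essentially trivial (an induced forbidden subgraph is in particular a subgraph, and $\mathcal{F}_0\subseteq\mathcal{F}$ as uncolored-but-2-edge-colored graphs up to the chordal-fan/wheel identification), then prove \ref{itfin:nof0}$\Rightarrow$\ref{itfin:G_path_graph} to close the first loop, and finally obtain \ref{itfin:nof0+}$\Leftrightarrow$\ref{itfin:nof0} and \ref{itfin:nof+}$\Leftrightarrow$\ref{itfin:nof} from Lemma~\ref{lemma:trick} and Lemma~\ref{lemma:g+}. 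The key point for the $G^+$ equivalences is that $G$ and $G^+$ have ``the same'' clique separators in the relevant sense: by the argument in the proof of Lemma~\ref{lemma:g+}, no $v^+$ lies in a clique separator $Q$ of $G^+$ unless $Q=\{v,v^+\}$, in which case $Q$ induces no antipodal edges; and for every clique separator $Q$ of $G$, $Q$ is also a clique separator of $G^+$ whose attachedness graph differs from that of $G$ only by the pendant-type vertices $\gamma^+=(\{v,v^+\},\{vv^+\})$, which are $\leq$-dominated by everything and hence attach by dominance edges only. The role of Lemma~\ref{lemma:g+} is then precisely to absorb the ``full antipodal triangle'' side condition into an induced $W_3^{(0)}$, so that statements about $G^+$ need no side condition.

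The heart of the matter is \ref{itfin:G_path_graph}$\Leftrightarrow$\ref{itfin:nof}. Here I would lean on Corollary~\ref{cor:above}: $G$ is \emph{not} a path graph iff for some clique separator $Q$ one of three things happens --- (i) $\Upper_Q$ contains a full antipodal triple, (ii) $\Gamma_Q\setminus\Upper_Q$ contains a ``bad $P_3$'' of the form \eqref{eq:bad_p3}, or (iii) $\ccro$ restricted to some member $D$ of the $Q$-skeleton does not extend to a proper 2-coloring of $H[D]$. Case (i) is exactly a full antipodal triangle (in fact contained in $\Upper_Q$). Case (iii) is, via the classical fact that a graph fails to be properly 2-colorable relative to a partial precoloring iff it contains an odd closed walk through the precolored/uncolored structure, what produces the odd-cycle-based gadgets $W^{(i)}_{2k+1}$, $F_{2n+1}$, $\widetilde F_{2n+1}$, $DF_{2n+1}$ as minimal witnesses inside $H[D]$ together with the $H^\cross$-edges that force the precoloring; here one must check that the 2-edge-colored structure (which antipodality edges come from $H[D]$ and which ``forcing'' edges are dominance edges to upper bounds or cross-antipodality edges) is exactly as drawn in Figure~\ref{fig:ostruzioni_indotte}. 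Case (ii) is a small configuration that I expect to reduce to one of the listed gadgets or to a short separate argument; it is where the $\widetilde F$ family likely enters. The converse direction --- that each graph in $\mathcal{F}$ genuinely obstructs --- is the routine half: one checks that an induced copy of any member forces, through conditions (a)--(f) of Theorem~\ref{cor:main}, a contradiction (an odd cycle needing a proper 2-coloring with fixed endpoints, or a vertex forced to receive two different colors).

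The main obstacle I anticipate is the \emph{minimality and exhaustiveness} bookkeeping: translating ``$\ccro\restriction_{H[D]}$ does not extend to a proper 2-coloring of $H[D]$'' into a clean finite list of 2-edge-colored minimal obstructions. Abstractly this is a list-coloring / precoloring-extension obstruction and the minimal obstructions to 2-colorability with some vertices precolored are odd cycles possibly with a chord-path forcing a conflict --- but here the precoloring is not arbitrary: it is $\ccro$, which is itself governed by the rigid structure of Lemma~\ref{lemma:elenco_trivial_2} (a vertex of $D_i\cap V(H^\cross)$ gets color $i$ by \ref{com:partial_iii}, a vertex of $D_{i,j}$ gets a color in $\{i,j\}$, etc.). So the obstructions must be classified according to whether the ``forcing'' happens via a single $D_i$ with cross-edges to two distinct upper bounds/other-$D$ vertices (giving $F$ and $W^{(0)}$), via a $D_{i,j}$ with cross-edges pinning both endpoints of an odd antipodal path (giving $W^{(1)}$ and $\widetilde F$), or symmetric double-sided pinning (giving $DF$). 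Getting this case analysis complete --- in particular arguing that a minimal offending configuration has \emph{exactly} the edge set of one of the five named graphs and no extra antipodal edges, and handling the full-triangle ambiguity uniformly via $G^+$ --- is the delicate part; everything else is assembling Corollary~\ref{cor:above}, Lemma~\ref{lemma:trick}, and Lemma~\ref{lemma:g+} into the stated five-way equivalence.
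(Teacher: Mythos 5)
Your overall architecture matches the paper's: Corollary~\ref{cor:above} is the engine for the hard direction, the three failure modes (full antipodal triple in $\Upper_Q$, a triple as in~\eqref{eq:bad_p3}, failure of $\ccro$ to extend to a $2$-coloring of some $H[D]$) are translated into the gadget families, and the $G^+$ statements are absorbed via Lemma~\ref{lemma:trick} and Lemma~\ref{lemma:g+}. But there is a genuine gap in your logical skeleton. You claim that \ref{itfin:nof}$\Rightarrow$\ref{itfin:nof0} is ``essentially trivial'' because an induced forbidden subgraph is in particular a subgraph. That argument proves the \emph{converse} implication \ref{itfin:nof0}$\Rightarrow$\ref{itfin:nof}: from ``every induced copy of a member of $\mathcal{F}$ contains a subgraph copy of a member of $\mathcal{F}_0$'' you get, by contraposition, ``no $\mathcal{F}_0$-subgraph implies no induced $\mathcal{F}$-subgraph.'' The implication \ref{itfin:nof}$\Rightarrow$\ref{itfin:nof0} that you actually need says: if the attachedness graph contains a (possibly non-induced) copy of a graph in $\mathcal{F}_0$, then it contains an \emph{induced} copy of a graph in $\mathcal{F}$. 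This is exactly the paper's Lemma~\ref{lemma:ind=sub}, whose proof is a delicate analysis (Claim~\eqref{eq:chords}) of which dominance chords and antipodal chords an odd antipodal cycle can carry in a full-antipodal-triangle-free attachedness graph; it is the least trivial purely combinatorial step of the whole section. With the implications your plan actually delivers --- \ref{itfin:G_path_graph}$\Leftrightarrow$\ref{itfin:nof}, \ref{itfin:nof0}$\Rightarrow$\ref{itfin:nof}, \ref{itfin:nof0}$\Rightarrow$\ref{itfin:G_path_graph} --- statement \ref{itfin:nof0} is never shown to follow from the others, so the five-way equivalence does not close. You need either \ref{itfin:G_path_graph}$\Rightarrow$\ref{itfin:nof0} directly (the paper's Lemma~\ref{lemma:nofan}: a strong $Q$-coloring forces a parity contradiction along the fan-like spine $F_0$ of any subgraph copy of a member of $\mathcal{F}_0$, using that the triangles $\{\theta_0,\theta_i,\theta_{i+1}\}$ are full by Lemma~\ref{lemma:elenco_trivial_1}), or the chord analysis of Lemma~\ref{lemma:ind=sub}.

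The same missing idea resurfaces in your direct proof of \ref{itfin:G_path_graph}$\Rightarrow$\ref{itfin:nof}. The paper deliberately produces only \emph{subgraph} copies of members of $\mathcal{F}_0$ from the three failure modes (Lemma~\ref{lemma:main}), which is easy because one need not control extra edges among the chosen vertices, and then upgrades to induced copies of members of $\mathcal{F}$ in a separate step. You propose to extract induced copies directly and correctly flag the ``minimality and exhaustiveness bookkeeping'' as the obstacle, but you do not supply the structural fact that makes it work: in the absence of full antipodal triangles, once an odd antipodal cycle $C$ has one $d$-chord at a vertex $\gamma_0$, it has \emph{all} $d$-chords at $\gamma_0$, any further $d$-chords are confined to one of the two neighbors of $\gamma_0$ on $C$, and the presence of the $a$-chord $\gamma_1\gamma_{2k}$ excludes all other $d$-chords. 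This is what forces the induced subgraph on $V(C)$ (plus a hub) to be exactly one of $W^{(0)}_{2k+1}$, $W^{(1)}_{2k+1}$, $F_{2n+1}$, $\widetilde F_{2n+1}$, $DF_{2n+1}$ rather than some unlisted chord pattern. Without it, your assertion that a minimal offending configuration has exactly one of the five named edge sets is unsupported.
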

The equivalences \ref{itfin:nof0}$\Leftrightarrow$\ref{itfin:nof0+} and \ref{itfin:nof}$\Leftrightarrow$\ref{itfin:nof+} in the theorem above follow straightforwardly by Lemma~\ref{lemma:trick} and Lemma~\ref{lemma:g+}. The remaining implication in Theorem~\ref{cor:all} (the core of the characterization), will be the content of the next section. We close this section instead with a brief comparison of our characterization with L\'{e}v\^{e}que, Maffray,~and~Preissmann's characterization~\cite{bfm}. 

Table~\ref{table:syn} gives a kind of dictionary between the two characterizations. The table reads as follows. For each row of the table, if a chordal graph $G$ contains an induced copy of one of the subgraphs in the leftmost column (according to L\'{e}v\^{e}que, Maffray,~and~Preissmann's characterization), then each of the graphs in the rightmost column occurs as an induced copy in the $Q$-attachedness graph of $G^+$ for some clique separator $Q$ (according to our characterization). 

Observe that for each graph $F$ in the leftmost column Table~\ref{table:syn}, there is no need to build the graph $F^+$ because, for each clique separator $Q$ of $F$, a full antipodal triangle in the $Q$-attachedness of $F$ corresponds to $W^{(0)}_3$ in the $Q$-attachedness graph of $F^+$. Let us give some clue on the content of the table. Obstructions $F_i$ for $i\in\{1,2,3,4,6,7,13,14,15\}$ have exactly one clique separator and thus there is one to one correspondence between L\'{e}v\^{e}que, Maffray,~and~Preissmann's obstructions and ours. Obstructions $F_j$ for $j\in\{8,9,11,16\}$ have exactly two clique separators but they generate the same obstruction in $\mathcal{F}$ by symmetry. The same happens for obstructions $F_5(n)$ and $F_{10}(n)$, where the number of clique separators grows with $n$ but all clique separators generate similar attachedness graphs that have the same obstruction. Obstruction $F_{12}(4k)$ deserves special attention because it has two clique separators that generate two different attachedness graphs. Furthermore, cases $k=2$ and  $k>2$ have to be distinguished.

From the table it is apparent a sort of coarsening of the obstructions. However we remark that the obstructions in our characterization are 2-edge colored subgraphs and that they have to be forbidden in each graph of the collection of the attachedness graph of $G^+$, while in L\'{e}v\^{e}que, Maffray,~and~Preissmann's characterization the obstructions are forbidden in the input graph itself.

\begin{figure}[h]
\centering
\begin{overpic}[width=15cm]{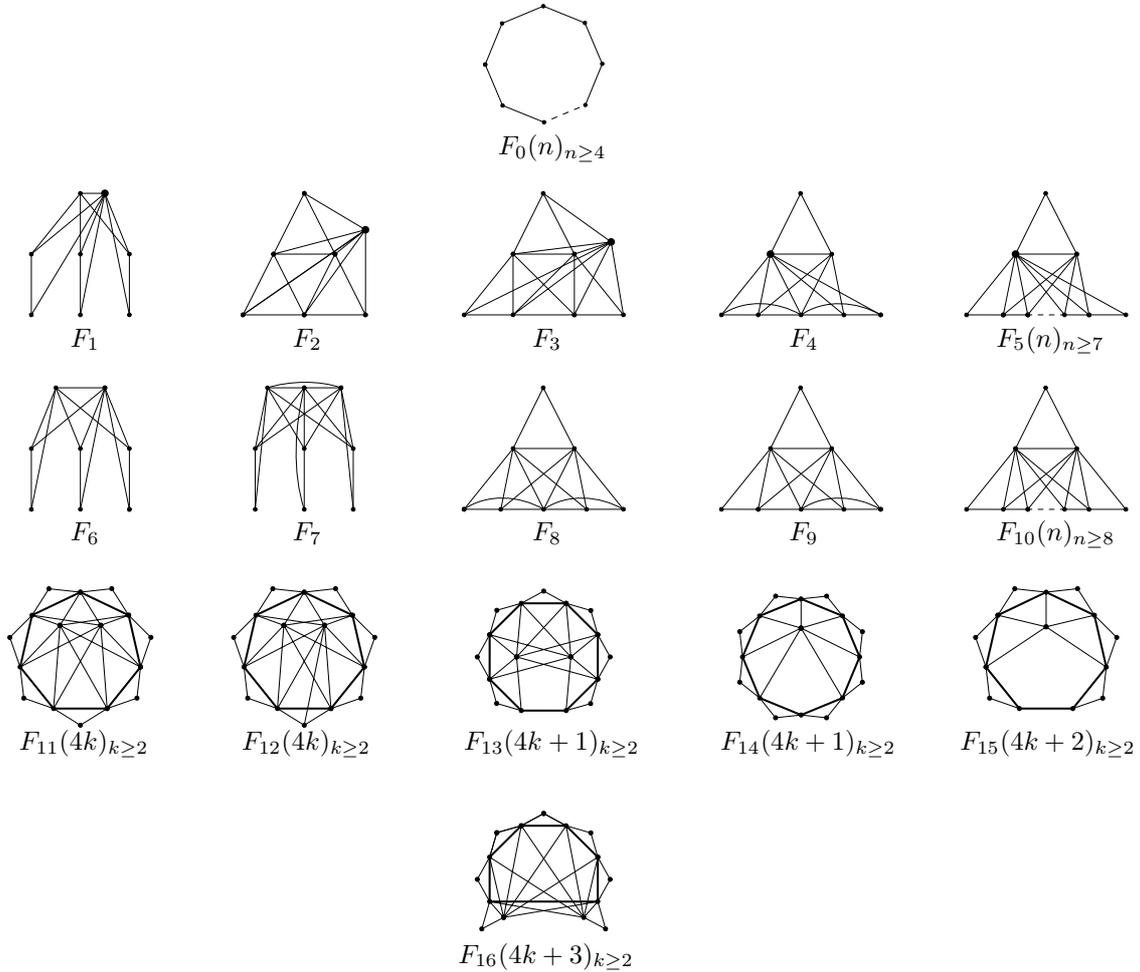}

\put(43.5,74.5){$F_0(n)_{n\geq4}$}

\put(6,57.5){$F_1$}
\put(25.5,57.5){$F_2$}
\put(46.5,57.5){$F_3$}
\put(69,57.5){$F_4$}
\put(87.2,57.5){$F_5(n)_{n\geq7}$}

\put(6,40.3){$F_6$}
\put(25.5,40.3){$F_7$}
\put(46.5,40.3){$F_8$}
\put(69,40.3){$F_9$}
\put(87.2,40.3){$F_{10}(n)_{n\geq8}$}

\put(1.5,21.8){$F_{11}(4k)_{k\geq2}$}
\put(21,21.8){$F_{12}(4k)_{k\geq2}$}
\put(40.5,21.8){$F_{13}(4k+1)_{k\geq2}$}
\put(63,21.8){$F_{14}(4k+1)_{k\geq2}$}
\put(84,21.8){$F_{15}(4k+2)_{k\geq2}$}

\put(40,3){$F_{16}(4k+3)_{k\geq2}$}
\end{overpic}
\caption{L\'{e}v\^{e}que, Maffray and Preissmann's exhaustive list of minimal non path graphs~\cite{bfm} (bold edges form a clique).}
\label{fig:lmp}
\end{figure}

\begin{table}[h]
	\begin{center}
		{\tablinesep=0.01ex\tabcolsep=2pt
			\begin{tabular}{||c c||} 
				\hline
				Family & Obstruction  \\
				\hline\hline
				$F_1, F_2,\ldots, F_9, F_{10}$ & $W_3^{(0)}$ \\ 
				\hline
				$F_{11}(4k)_{k\geq2}$ & $W_{2k-1}^{(0)}$\\
				\hline
				$F_{12}(4k)_{k\geq2}$ & $W_3^{(0)}$, $W_{3}^{(1)}$,  (for $k=2$), $F_{2k-1}$, $W^{(1)}_{2k-1}$ (for $k>2$)\\
				\hline
				$F_{13}(4k+1)_{k\geq2}$ &  $DF_{2k-1}$\\
				\hline
				$F_{14}(4k+1)_{k\geq2}$ &  $\widetilde{F}_{2k+1}$\\
				\hline
				$F_{15}(4k+2)_{k\geq2}$ &  $F_{2k+1}$\\
				\hline
				$F_{16}(4k+3)_{k\geq2}$& $F_{2k+1}$\\
				\hline
				\hline
			\end{tabular}
		}
		\caption{A dictionary between L\'{e}v\^{e}que, Maffray and Preissmann's characterization and Statement~\ref{itfin:nof} in Theorem~\ref{cor:all}. Note that $F_0$ is the obstruction to chordality.}
		\label{table:syn}
	\end{center}
\end{table}

\subsection{Proof of Theorem~\ref{cor:all}}
We now prove the core of Theorem~\ref{cor:all} according to the schema~\ref{itfin:G_path_graph}$\xLeftrightarrow{\text{Lemma~\ref{lemma:nofan},~\ref{lemma:main}}\,\,}$\ref{itfin:nof0}$\xLeftrightarrow{\text{Lemma~\ref{lemma:ind=sub}}\,\,}$\ref{itfin:nof}; we remember that the equivalences \ref{itfin:nof0}$\Leftrightarrow$\ref{itfin:nof0+} and \ref{itfin:nof}$\Leftrightarrow$\ref{itfin:nof+} are implied by Lemma~\ref{lemma:trick} and Lemma~\ref{lemma:g+}. In what follows $G$ is a chordal graph which is not an atom.

\begin{lemma}\label{lemma:nofan}
If $G$ is a path graph, then, for each clique separator $Q$, the $Q$-attachedness graph of $G$ has neither full antipodal triangles nor copies of any of the graphs in $\mathcal{F}_0$ as subgraphs.
\end{lemma}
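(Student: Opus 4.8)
The plan is to show that each of the structures that must be absent from the $Q$-attachedness graph is an obstruction to strong $Q$-colorability, and then invoke Corollary~\ref{cor:local_mew} (a path graph is strong $Q$-colorable for every clique separator $Q$). The full antipodal triangle case is already handled by Lemma~\ref{lemma:nofull}, so the work is to deal with the three families $W^{(0)}_{2k+1}$, $W^{(1)}_{2k+1}$, and $F_{2n+1}$ in $\mathcal{F}_0$. Note carefully that these are \emph{subgraphs}, not induced subgraphs, so I only get to use the edges that are actually present in the copy; this is to my advantage since a strong $Q$-coloring must still respect all antipodal edges and all neighboring triples, and the 2-edge-colored copy already certifies which edges are antipodal and (via the full triangles it contains) which triples are neighboring. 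The key bookkeeping fact I will use repeatedly is Remark~\ref{rem:pro_coloring}: a strong $Q$-coloring is a proper coloring of the $Q$-antipodality graph in which every neighboring triple receives at most two colors.

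First I would fix a clique separator $Q$ with a copy of one of these 2-edge-colored graphs sitting in the $Q$-attachedness graph $M$, and suppose for contradiction that $G$ is strong $Q$-colorable, with strong $Q$-coloring $f$. For $W^{(0)}_{2k+1}$ (the odd wheel whose rim edges and spokes are antipodal, with the hub): the $2k+1$ rim vertices together with the hub form an odd antipodal cycle plus a dominating antipodal vertex; since $f$ is a proper coloring of the antipodality graph, the odd cycle of length $2k+1$ on the rim already forces $f$ to use at least $3$ colors on the rim. But every consecutive pair of rim vertices together with the hub is a neighboring triple — this is exactly the information encoded in the 2-edge-coloring $W^{(0)}_{2k+1}$, where I need to check from Figure~\ref{fig:ostruzioni_indotte} that the relevant triangles are full — so each such triple is 2-colored; walking around the rim this would force all rim vertices plus the hub to lie in a set of only two colors, contradicting the odd cycle. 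The variant $W^{(1)}_{2k+1}$ differs in that one spoke is a dominance edge rather than an antipodal edge; here I would argue that the odd rim cycle still forces three colors while the neighboring-triple constraints around the rim (using the full triangles, and treating the one dominance edge via the triangle diagrams in~\eqref{diagram:triangles}, cases (c)–(e)) again collapse the palette to two colors.

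For $F_{2n+1}$ (the odd fan: a path of antipodal edges with an apex joined to every path vertex): the spine path has $2n$ vertices with $2n-1$ antipodal edges, the apex is antipodal to the two spine endpoints, and the two endpoints of the spine are joined to the apex so as to close an odd antipodal cycle of length $2n+1$; as before properness forces three colors on that odd cycle, while each consecutive triple (two spine vertices plus the apex, or an endpoint-endpoint-apex triple) is a neighboring triple by the 2-edge-coloring, forcing $\le 2$ colors, and propagating this along the spine contradicts the $3$-chromatic odd cycle. In every case the same mechanism applies: an odd antipodal cycle demands three colors, but the overlaid neighboring triples — read off from the fixed 2-edge-colored picture — act as a "2-coloring propagator" around that cycle, which is impossible. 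Since this holds for every clique separator $Q$, Corollary~\ref{cor:local_mew} gives that $G$ is not a path graph, contradiction.

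The main obstacle I anticipate is purely a matter of verifying the combinatorial claim "the triangles I use are full, i.e.\ induced by neighboring triples" from the definitions of the 2-edge-colored graphs in Figure~\ref{fig:ostruzioni_indotte}: a triangle of antipodal edges need not be a neighboring triple (the footnote after Lemma~\ref{lemma:elenco_trivial_1} warns of exactly this), so the propagation argument is only legitimate for the specific triangles that the figure marks as full, and I must make sure the odd-cycle-plus-propagator skeleton I extract uses only those. A secondary, lighter obstacle is the $W^{(1)}$ case, where one edge is a dominance edge: there I need the triangle-classification in~\eqref{diagram:triangles} to conclude that the relevant triple is still neighboring and still must be $2$-colored, so that the propagation is not broken at that one edge. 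Once these verifications are in place, the contradiction with the odd antipodal cycle is immediate and the lemma follows.
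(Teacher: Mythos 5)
Your overall mechanism is the same one the paper uses: assume a strong $Q$-coloring $f$ exists (via Corollary~\ref{cor:local_mew}), observe that condition~\ref{com:mw_i} of Definition~\ref{def:strong_colorability} forces three colors on an odd antipodal cycle, and use the neighboring triples at the hub/apex, via condition~\ref{com:mw_ii}, to propagate a two-color constraint around that cycle until it contradicts properness. The paper packages the propagation as a single parity claim about a common sub-configuration $F_0$ (an antipodal path together with one vertex joined to every path vertex by \emph{dominance} edges) and then finishes each of the three families by inspecting the one or two remaining triangles; your version walks around the cycle directly, which is the same computation.

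The genuine gap is in how you plan to certify that the triangles you propagate through are neighboring triples. You propose to ``read off'' fullness from the 2-edge-colored picture, i.e.\ from ``the triangles the figure marks as full''. This cannot work even in principle: whether a triangle is full is not determined by the 2-edge-colored attachedness graph (the paper makes exactly this point in the footnote of Section~\ref{section:forbidden_subgraphs}, and again after Diagram~\eqref{diagram:triangles}). The only triangles that are \emph{guaranteed} full are those containing at least one dominance edge, by Lemma~\ref{lemma:elenco_trivial_1} --- cases (b)--(e) of~\eqref{diagram:triangles} are always induced by neighboring triples, while case (a), the all-antipodal triangle, need not be. This is precisely where your guessed edge-colorings fail: in Figure~\ref{fig:ostruzioni_indotte} the spokes of $W^{(0)}_{2k+1}$ are \emph{dominance} edges (and in $W^{(1)}_{2k+1}$ all spokes but one are dominance edges), not antipodal as you assume, and in $F_{2n+1}$ the apex is joined to the internal spine vertices by dominance edges and only to the two spine endpoints by antipodal edges. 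Under your reading (all-antipodal spokes) every hub triangle would be of type (a), could be empty, and the propagation would be illegitimate --- indeed an all-antipodal wheel with empty triangles is not an obstruction at all. Under the correct reading, every triangle $\{\eta,\gamma_i,\gamma_{i+1}\}$ you need contains a dominance edge, Lemma~\ref{lemma:elenco_trivial_1} makes it a neighboring triple, and your argument closes exactly as the paper's does. So what is missing is not a deferred verification but the actual engine of the proof: the dominance edges are what make the relevant triples neighboring, and hence what make these particular 2-edge-colored graphs obstructions.
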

\begin{proof}
Let $Q$ be a clique separator. Let us denote by $M$ the $Q$-attachedness graph of $G$. Since $G$ is a path graph, then $\Gamma_Q$ contains no full antipodal triangle by Lemma~\ref{lemma:nofull}. Suppose by contradiction that $M$ contains, as a subgraph, a copy $S$ of $F_{2n+1}$ or $W_{2k+1}^{(0)}$ or $W_{2k+1}^{(1)}$. In all cases, $S$ contains a subgraph $F_0$ on $\{\theta_0,\theta_1\dots,\theta_{2t}\}$, $t$ being a positive integer, fulfilling the following conditions
	\begin{itemize}\itemsep0em
		\item[--] $\theta_i\theta_{i+1}$ in an antipodal edge of $M$, namely $\theta_i\leftrightarrow\theta_{i+1}$, for $i=1,\ldots,2t-1$;
		\item[--] $\theta_0\theta_i$ is a dominance edge of $M$, namely, either $\theta_i\leq\theta_0$ or $\theta_0\leq\theta_i$, for all $i=1,\ldots,2t$.
	\end{itemize} 
We claim that:
	\cadre\label{lemma:scorciatoia}If $f$ is any strong $Q$-coloring of $G$, then $f(\theta_1)\neq f(\theta_{2t})$ and $f(\theta_0)\in\{f(\theta_1),f(\theta_{2t})\}$.
	\endcadre
	\begin{claimproof}{\eqref{lemma:scorciatoia}}
By Lemma~\ref{lemma:elenco_trivial_1} all triangles $\{\theta_0,\theta_i,\theta_{i+1}\}$ are full, for $i=1,\ldots,2t-1$. Hence ,being $f$ a strong $Q$-coloring, its second defining condition~\ref{com:mw_ii} implies that $|f(\{\theta_0,\theta_i,\theta_{i+1}\})|=2$, for $i=1,\ldots,2t-1$. Thus if $f(\theta_0)=f(\theta_1)$, then $f(\theta_2)\neq f(\theta_0)$, $f(\theta_3)=f(\theta_0),\dots, f(\theta_{2t})\neq f(\theta_0)$. Instead, if $f(\theta_0)\neq f(\theta_1)$, then $f(\theta_2)= f(\theta_0)$, $f(\theta_3)\neq f(\theta_0),\ldots, f(\theta_{2t})=f(\theta_0)$. In both cases, the thesis follows.
	\end{claimproof}
	We now use Claim~\eqref{lemma:scorciatoia} to prove a contradiction to the strong $Q$-colorability of $G$. Suppose first that $S\cong F_{2n+1}$, for some $n$, then let $V(S)=\{\eta,\gamma_1,\dots,\gamma_{2n}\}$ where $\eta$ is the maximum degree vertex of $S$. Let $F'$ be the subgraph induced by $V(S)=\{\eta,\gamma_2,\dots,\gamma_{2n-1}\}$. Hence $F'\cong F_0$. By Claim~\eqref{lemma:scorciatoia}, $\gamma_2$ and $\gamma_{2n-1}$  have opposite colors and $f(\gamma_{\eta})\in\{f(\gamma_2),f(\gamma_{2n-1})\}$. Moreover, the triangles induced by $\{\eta,\gamma_1,\gamma_2\}$ and $\{\eta,\gamma_{2n-1},\gamma_{2n}\}$ are both full by Lemma~\ref{lemma:elenco_trivial_1} and at least one of them cannot be 2-colored under $f$. 
	
	Suppose next that $S\cong W_{2k+1}^{(0)}$ or $S\cong W_{2k+1}^{(1)}$ for some $k$. Let $V(S)=\{\eta,\gamma_1,\dots,\gamma_{2k+1}\}$ where $\eta$ is still the maximum degree vertex of $S$ (if $S\cong W_{2k+1}^{(1)}$, then let $\gamma_1$ be the only vertex such that $\gamma_1\eta$ is an antipodal edge) and let $F''$ be the subgraph induced by $V(S)=\{\eta,\gamma_1,\dots,\gamma_{2k}\}$. Clearly, $F''\cong F_0$, as well. By Claim~\eqref{lemma:scorciatoia}, $\gamma_1$ and $\gamma_{2k}$  have opposite colors and $f(\eta)\in\{f(\gamma_1),f(\gamma_{2k})\}$. It holds that $f(\gamma_{2k+1})\not\in\{f(\gamma_{2k}),f(\gamma_1)\}$	
 because $\gamma_{2k+1}\leftrightarrow\gamma_{2k}$ and $\gamma_{2k+1}\leftrightarrow\gamma_1$. Moreover, the triangles induced by $\{\eta,\gamma_1,\gamma_{2k+1}\}$ and $\{\eta,\gamma_{2k},\gamma_{2k+1}\}$ are both full by Lemma~\ref{lemma:elenco_trivial_1} and at least one of them cannot be 2-colored under $f$. In any case a contradiction to the strong $Q$-colorability of $G$ is achieved. 
\end{proof}

\begin{lemma}\label{lemma:main}
If for each clique separator $Q$, the $Q$-attachedness graph of $G$ has neither full antipodal triangles nor copies of any of the graphs in $\mathcal{F}_0$ as subgraphs, then $G$ is a path graph.
\end{lemma}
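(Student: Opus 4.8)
The goal is the ``hard'' direction of Theorem~\ref{cor:all}, the converse of Lemma~\ref{lemma:nofan}: assuming that for every clique separator $Q$ the $Q$-attachedness graph $M=M_Q$ has no full antipodal triangle and no subgraph (not necessarily induced) isomorphic to a member of $\mathcal{F}_0=\{W^{(0)}_{2k+1},W^{(1)}_{2k+1},F_{2n+1}\}$, we must produce a weak $Q$-coloring for every $Q$ and invoke Theorem~\ref{cor:main}. By Corollary~\ref{cor:above} it suffices, fixing one $Q$, to rule out the three listed obstructions to weak $Q$-colorability: (i) a full antipodal triple in $\Upper_Q$; (ii) a ``bad $P_3$'' of the form~\eqref{eq:bad_p3}, i.e.\ $\gamma\in D_{i,j}$, $\gamma'\in D_i$, $\gamma''\in D_j$ with $\gamma\leftrightarrow\gamma'$ and $\gamma\leftrightarrow\gamma''$; (iii) a member $D\in\mathcal{D}$ for which the partial proper coloring $\ccro$ restricted to $V(H^\cross)\cap D$ does not extend to a proper $2$-coloring of $H[D]$, where $H$ is the $Q$-antipodality graph. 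The first is immediate: a full antipodal triple is a $K_3$ all of whose edges are antipodal and which is spanned by a neighbouring triple, so by Lemma~\ref{lemma:g+}'s proof-type reasoning (or directly) it contains a copy of $W^{(0)}_3\in\mathcal{F}_0$ inside $M$ once one adjoins the apex — but more simply, a full antipodal triangle is itself excluded by hypothesis. So the work is in (ii) and (iii).

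For obstruction (ii), I would argue that a triple as in~\eqref{eq:bad_p3} forces a forbidden subgraph of $M$. The three vertices $\gamma,\gamma',\gamma''$ sit in $D_{i,j},D_i,D_j$ respectively; $\gamma$ is antipodal to both $\gamma'$ and $\gamma''$, and by Lemma~\ref{lemma:elenco_trivial_2}\ref{item:antipodal_D_i_Upper}--\ref{item:antipodal_square} we also get $\gamma'\leftrightarrow u_j$, $\gamma''\leftrightarrow u_i$, $\gamma\leftrightarrow u_i$? — here one must track carefully which antipodalities to the upper bounds $u_i,u_j$ are forced, together with the dominance edges $\gamma\leq u_i$, $\gamma\leq u_j$, $\gamma'\leq u_i$, $\gamma''\leq u_j$. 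Assembling these vertices $\{u_i,u_j,\gamma,\gamma',\gamma''\}$ (five vertices) with their antipodal and dominance edges, I expect to recognize either $W^{(1)}_3$ or the small fan $F_5$ (or a closely related small member of $\mathcal{F}_0$), because the alternating antipodal path $\gamma'-\gamma-\gamma''$ together with a common dominating vertex is exactly the $F_0$-skeleton used in Lemma~\ref{lemma:nofan}. The key is to check the pattern of dominance/antipodality edges matches one of the listed $2$-edge-coloured graphs; minimality (that these are the smallest such) should come out of the case analysis.

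Obstruction (iii) is the main one, and here is where the fan/wheel forbidden subgraphs do their real work. Fix $D\in\mathcal{D}$. The graph $H[D]$ is a graph whose edges are all antipodal; some of its vertices (those in $V(H^\cross)\cap D$) carry a prescribed colour under $\ccro$ — prescribed into $\{i\}$ if $D=D_i$ (by \ref{com:partial_iii}), or into one of $\{i,j\}$ if $D=D_{i,j}$. A proper $2$-colouring extending this prescription fails to exist iff $H[D]$ contains an odd cycle, or a path of even length between two vertices pre-coloured the same / odd length between two pre-coloured differently — equivalently, the graph obtained from $H[D]$ by identifying all equally-pre-coloured vertices (and adding an edge between the two pre-colour classes, in the $D_{i,j}$ case) is non-bipartite, i.e.\ contains an odd closed walk, hence an odd cycle or an odd ``theta''. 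I would translate such an odd structure back into $M$: the identified pre-coloured vertices are antipodal to a common upper bound $u$ (case $D_i$) or the two classes are antipodal to $u_i$ and to $u_j$ respectively with $u_i\leftrightarrow u_j$ (case $D_{i,j}$), and all of $D$'s vertices are dominated by that/those upper bound(s). So $u$ (resp.\ $u_i,u_j$) plays the apex role and $H[D]$ supplies an odd antipodal cycle or odd antipodal path attached to it — precisely the configuration underlying $W^{(0)}_{2k+1}$, $W^{(1)}_{2k+1}$, $F_{2n+1}$ once one records the dominance edges from the apex to the cycle/path vertices. An odd antipodal cycle fully dominated by an apex is a $W^{(0)}_{2k+1}$; an odd antipodal path with both endpoints antipodal to the apex and interior dominated is a $W^{(1)}_{2k+1}$ or, if only one endpoint is antipodal to the apex, an $F_{2n+1}$; the $D_{i,j}$ case with two apexes $u_i\leftrightarrow u_j$ gives the remaining configuration.

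**Main obstacle.** The delicate part is the bookkeeping in (iii): extracting from ``$H[D]$'s pre-coloured version is non-bipartite'' a *minimal* odd substructure and then certifying that, once the apex (or apex pair) and all the forced dominance edges are adjoined, the resulting $2$-edge-coloured graph contains one of $W^{(0)}_{2k+1},W^{(1)}_{2k+1},F_{2n+1}$ *as a subgraph*, with exactly the right colours on exactly the right edges and with the $F_0$-type full triangles present (so that Lemma~\ref{lemma:elenco_trivial_1} applies). One must also handle the interaction between a vertex of $D$ being antipodal to an upper bound versus to a cross-neighbour in another member of $\mathcal{D}$, using Lemma~\ref{lemma:elenco_trivial_2} to show the apex edges are really forced. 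I expect the proof to proceed by: (1) reducing non-$2$-colourability of the pre-coloured $H[D]$ to an explicit odd walk; (2) choosing a shortest such walk and showing it is an induced odd cycle or induced odd path in $H[D]$; (3) adjoining the apex/apices and the dominance edges dictated by $D$'s definition and Lemma~\ref{lemma:elenco_trivial_2}; (4) matching the result against the catalogue $\mathcal{F}_0$, treating the $D_i$ and $D_{i,j}$ cases separately and the small cases $k=1$, $n=2$ by hand. Combining the contrapositives of (i)--(iii) with Corollary~\ref{cor:above} yields that $G$ is a path graph.
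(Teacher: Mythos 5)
Your proposal follows essentially the same route as the paper's proof: argue the contrapositive via Corollary~\ref{cor:above}, and translate each of its three obstructions (full antipodal triple in $\Upper_Q$, a triple as in~\eqref{eq:bad_p3}, and failure of $\ccro$ to extend to a proper $2$-coloring of some $H[D]$) into a forbidden configuration, with the third case split exactly as you describe into an odd cycle, an even path between equally pre-colored endpoints, and an odd path between differently pre-colored endpoints, yielding $W^{(0)}_{2k+1}$, $W^{(1)}_{2k+1}/F_{2n+1}/\widetilde{F}_{2n+1}$, and $DF_{2n+1}$ respectively. The only cosmetic difference is that the paper finds members of the larger family $\mathcal{F}$ (e.g.\ $DF_5$ rather than $F_5$ in case~(ii)) and then uses the fact that each contains a member of $\mathcal{F}_0$ as a subgraph, a reduction you also invoke.
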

\begin{proof}
By Corollary~\ref{cor:local_mew}, $G$ is a path graph if and only if $G$ is strong $Q$-colorable for each clique separator $Q$. We prove the contrapositive statement, namely, if $G$ is not strong $Q$-colorable for some clique separator $Q$, then the $Q$-attachedness graph $M$ of $G$ contains full antipodal triangles or some copy of a graph of $\mathcal{F}_0$ as subgraphs. Since each graph in $\mathcal{F}$ contains some graph of $\mathcal{F}_0$ as subgraph, we show the statement with $\mathcal{F}$ in place of $\mathcal{F}_0$. Let $H$ and $\mathcal{D}$ be the $Q$-antipodality graph, and the $Q$-skeleton of $G$, respectively. For $D\in \mathcal{D}$ denote by $H_D$ the subgraph of $H$ induced by $D$. 

By Corollary~\ref{cor:above}, if $G$ is not a path graph, then either $\Upper_Q$ contains a full antipodal triple, or $\Gamma_Q\setminus \Upper_Q$ contains a triple of the form described in~\eqref{eq:bad_p3} in Theorem~\ref{thm:always_exists_partial}, or there is $D\in \mathcal{D}$ such that the restriction of $\ccro$ to $D$ does not extend to a proper 2-coloring of $H_D$.

If $\Upper_Q$ contains a full antipodal triple, then this triple induces a full antipodal triangle in $M$. Hence we may assume that $\Gamma_Q$ contains no such triple. Suppose that $\Gamma_Q\setminus \Upper_Q$ contains a triple of the form $\{\gamma,\gamma',\gamma''\}$ with $\gamma\in D_{i,j}$, $\gamma'\in D_i$ and $\gamma''\in D_j$, for some $i\,j\in[\ell]$, $i<j$, such that $\gamma\leftrightarrow\gamma'$ and $\gamma\leftrightarrow\gamma''$. Thus either $\gamma'\leftrightarrow\gamma''$ or $\gamma'\not\leftrightarrow\gamma''$. If the first case applies, then $\{\gamma,\gamma',\gamma'',u_i\}$ induces a copy of $W^{(1)}_3$ in $M$ (refer to Lemma~\ref{lemma:elenco_trivial_1} and Lemma~\ref{lemma:elenco_trivial_2} to determine all colored edges in $M$). Else, the second case applies and $\{\gamma,\gamma',\gamma'',u_i,u_j\}$ induces a copy of $DF_5$.

Hence we may assume that neither $\Upper_Q$ contains full antipodal triples, nor $\Gamma_Q\setminus \Upper_Q$ contains a triple of the form described in~\eqref{eq:bad_p3} in Theorem~\ref{thm:always_exists_partial}. By Theorem~\ref{thm:always_exists_partial} there exists a unique proper coloring $\ccro$ of $H^\cross$. Let $\cro^D$ denote the restriction of $\ccro$ to $D$. If $\cro^D$ does not extend to proper 2-coloring of $H_D$, then only three cases can occur:
	\begin{itemize}\itemsep0em
		\item[--]  $H_D$ is non bipartite. In this case no 2-coloring of $H_D$ exists.
		\item[--] $H_D$ is bipartite but it contains a path $P$ with an even number of vertices whose endvertices have the same color under $\cro^D$. 
		\item[--] $H_D$ is bipartite but it contains a path $P$ with an odd number of vertices whose endvertices have different color under $\cro^D$. 
	\end{itemize}
\noindent
In the first case, $H_D$ contains an odd cycle $C$, on $2k+1$ vertices, say, as subgraph. Hence, for $u\in \Upper_Q\cap D$, the subgraph induced by $C\cup \{u\}$ in $H$ contains a copy of $W_{2k+1}^{(0)}$ as a subgraph.
	\mybreak
	In the second case let $\Theta=\{\theta_1,\ldots,\theta_{2k}\}$ be the set of vertices of $P$. Suppose first that $D=D_i$ for some $i\in [\ell]$. By definition of $\cro^D$ there are $\gamma,\gamma'\not\in D_i$ such that $\gamma\leftrightarrow\theta_1$ and $\gamma'\leftrightarrow\theta_{2k}$. It holds that $\gamma\leftrightarrow u_i$ and $\gamma'\leftrightarrow u_i$ by the transitivity of $\leq$ and the definition of $D_i$. Now, let $N$ be the subgraph induced by $\Theta\cup\{\gamma,\gamma',u_i\}$. If $\gamma=\gamma'$ then $N$ contains $W^{(1)}_{2k+1}$ as subgraph. If $\gamma\neq\gamma'$, then $N$ contains either $F_{2n+1}$ or $\widetilde{F}_{2n+1}$ according to whether $\gamma\leftrightarrow\gamma'$ or not. 
	If $D=D_{i,j}$, then we obtain the same results by a similar reasoning.
	\mybreak
	The third case can apply only to $D=D_{i,j}$ for some $i,\,j\in [\ell]$, because all the elements of $V(H^\cross)\cap D_i$ have the same color $i$ under $\cro^D$. Let $\Theta=\{\theta_1,\ldots,\theta_{2k+1}\}$ be the set of vertices of $P$. By the definition of $\cro^D$ there are $\gamma\in D_i$ and $\gamma'\in D_j$ such that $\gamma\leftrightarrow\theta_1$ and $\gamma'\leftrightarrow\theta_{2k+1}$. Then $\Theta\cup\{\gamma,\gamma',u_i,u_j\}$ induces a subgraph in $H_{i,j}$ that contains $DF_{2n+1}$ as subgraph.
\end{proof}

\begin{lemma}\label{lemma:ind=sub}
Let $Q$ be a clique separator of $G$. If the $Q$-attachedness graph of $G$ has no full antipodal triangle, then it has a copy of a graph in $\mathcal{F}_0$ as a subgraph if and only if it has a copy of a graph in $\mathcal{F}$ as an induced subgraph.
\end{lemma}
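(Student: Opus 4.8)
The statement to prove is Lemma~\ref{lemma:ind=sub}: assuming the $Q$-attachedness graph $M$ of $G$ has no full antipodal triangle, $M$ contains a copy of some graph in $\mathcal{F}_0$ as a (not necessarily induced) subgraph if and only if it contains a copy of some graph in $\mathcal{F}$ as an induced subgraph. One direction is immediate: every graph in $\mathcal{F}$ contains a graph of $\mathcal{F}_0$ as a subgraph (as already remarked in the proof of Lemma~\ref{lemma:main}), so if $M$ has an induced copy of some $F\in\mathcal{F}$, it a fortiori has a subgraph copy of some element of $\mathcal{F}_0$. The content of the lemma is the forward direction.

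For the forward direction, I would proceed by taking a copy $S$ of a graph in $\mathcal{F}_0$ (one of $W_{2k+1}^{(0)}$, $W_{2k+1}^{(1)}$, $F_{2n+1}$) occurring as a subgraph in $M$, chosen \emph{minimal} with respect to the number of vertices (and, if helpful, edges) among all such subgraph-copies. The goal is to show that such a minimal $S$ is in fact induced and is a copy of some graph in $\mathcal{F}$. The strategy is to analyze the possible ``extra'' edges of $M$ spanned by $V(S)$ — edges of $M$ between vertices of $S$ that are not edges of $S$ — together with their colors (antipodal vs.\ dominance). Any such chord, combined with the structure of $S$, should either (i) contradict minimality by exhibiting a smaller member of $\mathcal{F}_0$ inside $V(S)$ (for instance, a chord across the rim of a wheel or the path-spine of a fan produces a shorter odd cycle / shorter fan, and a chord to the hub keeps it a wheel/fan of smaller order), or (ii) force the chord to have exactly the color prescribed by the corresponding member of $\mathcal{F}$, so that $S$ together with its chords is precisely an induced copy of a graph in $\mathcal{F}$. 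Here is where the no-full-antipodal-triangle hypothesis does real work: transitivity of $\leq$ (Diagram~\eqref{diagram:triangles}) severely restricts which colors a triangle can carry, and the exclusion of the all-antipodal triangle $(a)$ when it is full pins down colors of chords that close triangles with two spine edges or a spine edge and a hub edge. I would go case by case over $F_0$-type triangles $\{\theta_0,\theta_i,\theta_{i+1}\}$, which are full by Lemma~\ref{lemma:elenco_trivial_1}, to propagate color constraints along the spine; then handle the closing edges near the endpoints ($\gamma_1$, $\gamma_{2k}$, and the apex/last rim vertex) to identify which of $W^{(0)}$, $W^{(1)}$, $F$, $\widetilde F$, $DF$ is obtained.

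Concretely, I would organize the proof as: first record that a minimal subgraph-copy $S$ of an $\mathcal{F}_0$-graph has no chord along its "long" path or cycle (rim/spine), since such a chord would split off a strictly smaller odd cycle or shorter fan still lying in $M$ — using that the hub/apex $\theta_0$ is adjacent to everything, so the smaller configuration is still a wheel or fan. This forces $S[V(S)]$ to differ from $S$ only possibly in: (a) the color of rim/spine edges and hub edges — but those colors are already fixed in the $\mathcal{F}_0$ template up to the ambiguity that distinguishes, say, $W^{(0)}$ from $W^{(1)}$; and (b) the presence of the single "closing chord" between the two path-endpoints, which is exactly the edge whose presence/absence distinguishes a fan $F$ from its chorded version and a double-fan. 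Then I would argue that each remaining degree of freedom is resolved by the color constraints: a full triangle cannot be all-antipodal (by hypothesis, since it would be a full antipodal triangle), and by transitivity of $\leq$ an all-dominance triangle as in $(b)$ is forced when two of its edges are dominance edges — this is what forces the consistent two-coloring pattern along the spine used in Claim~\eqref{lemma:scorciatoia}, and it forces each chord to carry the color the corresponding $\mathcal{F}$-graph demands. At the end, matching the resulting 2-edge-colored induced subgraph on $V(S)$ against the templates $W_{2k+1}^{(0)},W_{2k+1}^{(1)},F_{2n+1},\widetilde F_{2n+1},DF_{2n+1}$ in Figure~\ref{fig:ostruzioni_indotte} yields an induced copy of a graph in $\mathcal{F}$.

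The main obstacle I anticipate is the bookkeeping of chords incident to the hub/apex and near the two spine endpoints: a chord there need not shorten the cycle/spine, so minimality alone does not kill it, and one must instead use the coloring constraints (full-triangle analysis plus transitivity of dominance) to show the chord's color is forced, thereby landing inside $\mathcal{F}$ rather than just $\mathcal{F}_0$. A secondary subtlety is making sure the minimality choice is taken over \emph{all} of $\mathcal{F}_0$ simultaneously (not within a fixed isomorphism type), so that, e.g., a chord of a fan that creates a wheel is genuinely a contradiction; and that, in the degenerate smallest cases ($k=1$, $n=2$), the resulting configuration is one of the listed graphs (e.g.\ $W_3^{(1)}$, $DF_5$) exactly as they appear in the proof of Lemma~\ref{lemma:main}. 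I would treat those base cases explicitly to anchor the induction/minimality argument.
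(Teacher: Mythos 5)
Your overall strategy---start from a subgraph copy $S$ of a graph in $\mathcal{F}_0$, study the chords of $M$ on $V(S)$, and use transitivity of $\leq$ together with the absence of full antipodal triangles to pin the configuration down to a template in $\mathcal{F}$---is in the same family as the paper's argument, which likewise reduces to an odd antipodal cycle and classifies its chords. But there is a genuine gap at the step you lean on most: the assertion that a minimal copy of an $\mathcal{F}_0$-graph ``has no chord along its long path or cycle.'' That is essentially correct for \emph{antipodal} chords (such a chord splits off a strictly shorter odd antipodal cycle, hence a smaller wheel over the same hub). A \emph{dominance} chord between two rim/spine vertices, however, does not yield a smaller member of $\mathcal{F}_0$---the shorter cycle it closes is not an antipodal cycle---so minimality does not eliminate it, and it cannot be eliminated: such chords genuinely occur, and they are precisely what generate the two families $DF_{2n+1}$ and $\widetilde{F}_{2n+1}$ of $\mathcal{F}\setminus\mathcal{F}_0$. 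Your plan attributes these extra templates to ``the single closing chord between the two path-endpoints,'' which is not where they come from. What is needed (and what the paper proves as Claim~\eqref{eq:chords}) is a propagation statement: on an antipodal odd cycle with at most one prescribed $a$-chord, a single $d$-chord out of a vertex $\gamma_0$ forces \emph{all} $d$-chords out of $\gamma_0$ (with every other non-neighbour dominated by $\gamma_0$), at most one further vertex can play the same hub role (producing $DF_{2k+1}$), and the presence of the $a$-chord forbids any further $d$-chords (producing $\widetilde{F}_{2k+1}$); the no-full-antipodal-triangle hypothesis is used exactly at this last step. Without this propagation lemma, the case analysis you defer to---``chords incident to the hub and near the endpoints''---cannot be closed.

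Two smaller inaccuracies. First, the lemma is a purely structural statement about the 2-edge-colored attachedness graph; vertex colorings play no role in it, so Claim~\eqref{lemma:scorciatoia} and constraints on a strong $Q$-coloring $f$ are not the right tools here (they belong to Lemma~\ref{lemma:nofan}). Second, it is not true that a triangle with two dominance edges must be all-dominance: Diagram~\ref{diagram:triangles}.(d) is a legitimate configuration, so transitivity of $\leq$ constrains a third edge only when the two dominances are suitably oriented, and tracking these orientations is the actual content of the propagation claim. Finally, note that the paper does not take a globally minimal $\mathcal{F}_0$-copy: it fixes the odd cycle $R$ of the given copy and takes a minimum odd cycle of order at least $5$ in the antipodality graph induced on $R$, which cleanly reduces to ``at most one $a$-chord, and it forms a triangle with two cycle edges''; your minimization over all of $\mathcal{F}_0$ would still have to justify an analogous reduction, including the degenerate small cases you rightly flag.
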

\begin{proof}
Since any graph in $\mathcal{F}_0$ is contained as subgraph in one of the graph in $\mathcal{F}$ one direction is trivial. Let us prove the other direction. Let $H$ and $M$ be the $Q$-antipodality  and $Q$-attachedness graph of $G$. We have to prove that if $M$ contains some copy of a graph of $\mathcal{F}_0$, then $M$ contains an induced copy of some graph of $\mathcal{F}$. Let $S$ be a graph of $\mathcal{F}_0$. For a cycle $C$ of $S$ it is convenient to distinguish between chords that are edges of the antipodality graphs, which we call $a$-chords, from those that are edges of the dominance graph, which we call $d$-chords. 

Let now $C$ be a \emph{antipodal} odd cycle of $S$ on $2k+1$ vertices for some integer $k\geq2$, i.e., the vertex set of $C$ is $\{\gamma_0,\ldots,\gamma_{2k}\}$ and the edges are $\{\gamma_0\gamma_1,\ldots,\gamma_{2k-1}\gamma_{2k},\gamma_0\gamma_{2k}\}$, where all the edges of $C$ are antipodal edges. Suppose that $C$ has either no $a$-chord, namely $C$ is induced in $H$, or $C$ has precisely the $a$-chord $\gamma_1\gamma_{2k}$. We will show that every graph in $\mathcal{F}_0$ contains such a cycle with possible $d$-chords with an end in $\gamma_0$. The following fact about such a $C$ is crucial to prove the lemma and it implies that if $C$ has at least one $d$-chord with an end in $\gamma_0$, then $C$ induces in $M$ a copy of $F_{2k+1}$, $DF_{2k+1}$ or $\widetilde{F}_{2k+1}$.
\cadre\label{eq:chords}
If $\gamma_0\gamma_j$ is a $d$-chord of $C$ with, say, $\gamma_j\leq \gamma_0$, $j\not\in\{1,2k\}$, then $C$ has $d$-chords $\gamma_0\gamma_l$ with $\gamma_j\leq\gamma_0$, for all $j\not\in\{1,2k\}$. Moreover,
\begin{itemize}\itemsep0em
\item if $C$ is induced in $H$ and $C$ has some other $d$-chord, then $C$ possesses either all $d$-chords $\gamma_1\gamma_j$ with $\gamma_j\leq \gamma_1$, $j\not \in \{0,2\}$, or, symmetrically, all the $d$-chords $\gamma_{2k}\gamma_j$, with $\gamma_j\leq \gamma_{2k}$, $j\not\in\{0,2k-1\}$,
\item if $\gamma_1\gamma_q$ is an $a$-chord of $C$, then $C$ has no other $d$-chords.  
\end{itemize}
\endcadre
\begin{claimproof}{\eqref{eq:chords}} In the first place, observe that $\gamma_{j-1}\leftrightarrow \gamma_j$ and $\gamma_{j+1}\leftrightarrow \gamma_j$ trivially imply $\gamma_{j-1}\Join\gamma_j$ and $\gamma_{j+1}\Join \gamma_j$ hence, by Lemma~\ref{lemma:elenco_trivial_1}, it holds that  $\gamma_0\Join\gamma_{j-1}$ and $\gamma_0\Join\gamma_{j+1}$. Thus $\gamma_0\gamma_{j-1}$ and $\gamma_0\gamma_{j+1}$ are $d$-chords of $C$, because the unique possible $a$-chord is $\gamma_1\gamma_{2k}$. Necessarily $\gamma_{j-1}\leq \gamma_0$ for, if not, then $\gamma_j\leq \gamma_0\leq \gamma_{j-1}$ would imply $\gamma_j\leq \gamma_{j-1}$ contradicting that  $\gamma_{j-1}\leftrightarrow\gamma_j$. By the same reasons, $\gamma_{j+1}\leq \gamma_0$. A repeated application of this argument to $j-1$ and $j+1$ in place of $j$ proves the first part of the claim (see Figure~\ref{fig:eptagons}(\subref{fig:eptagons_a})). 

The first part of the claim is clearly invariant under automorphisms of $C$. Consequently, we deduce that if $C$ has another $d$-chord  $\gamma_h\gamma_\ell$ with $\gamma_\ell\leq \gamma_h$ and $h\not\in\{1,2k\}$, then $C$ has also $d$-chords $\gamma_h\gamma_1$ and $\gamma_h\gamma_{2k}$. But this is impossible because it would imply $\gamma_{2k}\leq \gamma_h\leq \gamma_0$ while we know that $\gamma_0\leftrightarrow\gamma_{2k}$. Hence all the other possible $d$-chords of $C$ have one end in $\{\gamma_1,\gamma_{2k}\}$. On the other hand $C$ cannot possess $d$-chords $\gamma_1\gamma_h$ and $\gamma_{2k}\gamma_\ell$ for some $h,\,\ell\in [2k]$  because, by the first part of the claim, it would possess the $d$-chord $\gamma_1\gamma_{2k}$ and this would imply $\gamma_{2k}\leq \gamma_1$ and $\gamma_1\leq \gamma_{2k}$ and consequently the contradiction $\gamma_1=\gamma_{2k}$ (see Figure~\ref{fig:eptagons}(\subref{fig:eptagons_b}) and Figure~\ref{fig:eptagons}(\subref{fig:eptagons_c})). 
	
It remains to prove that if $\gamma_1\gamma_{2k}$ is an $a$-chord of $C$, then $C$ has no other $d$-chords with one end in $\{\gamma_1,\gamma_{2k}\}$ (hence no other $d$-chords at all, as Figure~\ref{fig:eptagons}(\subref{fig:eptagons_d})) shows). Suppose that $C$ has a $d$-chord with one end in $\{\gamma_1,\gamma_{2k}\}$, $\gamma_1$ say. Then $C$ has the $d$-chord $\gamma_1\gamma_{2k-1}$ by above. Since $\gamma_{2k-1}\leq\gamma_0$, $\gamma_{2k-1}\leq \gamma_1$ and $\gamma_{2k-1}\leftrightarrow \gamma_{2k}$, by Lemma~\ref{lemma:elenco_trivial_1} it follows that  $\{\gamma_0,\gamma_1,\gamma_{2k}\}$ induces a full antipodal triangle in $M$, contradicting that $M$ has no such triangles.   
\end{claimproof}
\begin{figure}[h]
\captionsetup[subfigure]{justification=centering}
\centering
\quad
	\begin{subfigure}{2.5cm}
\begin{overpic}[width=2.5cm,percent]{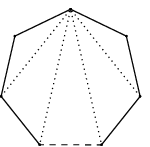}
\put(45,102){$\gamma_0$}
\put(-11,79){$\gamma_{2k}$}
\put(88,79){$\gamma_1$}

\put(100,35){$\gamma_2$}
\put(77,0){$\gamma_3$}
\put(-34,35){$\gamma_{2k-1}$}
\put(-9,0){$\gamma_{2k-2}$}
\end{overpic}
\caption{}\label{fig:eptagons_a}
\end{subfigure}
\qquad\qquad
	\begin{subfigure}{2.5cm}
\begin{overpic}[width=2.5cm,percent]{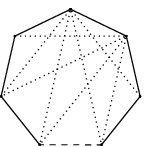}
\put(45,102){$\gamma_0$}
\put(-11,79){$\gamma_{2k}$}
\put(88,79){$\gamma_1$}

\put(100,35){$\gamma_2$}
\put(77,0){$\gamma_3$}
\put(-34,35){$\gamma_{2k-1}$}
\put(-9,0){$\gamma_{2k-2}$}
\end{overpic}
\caption{}\label{fig:eptagons_b}
\end{subfigure}
\qquad\qquad
	\begin{subfigure}{2.5cm}
\begin{overpic}[width=2.5cm,percent]{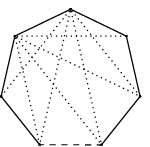}
\put(45,102){$\gamma_0$}
\put(-11,79){$\gamma_{2k}$}
\put(88,79){$\gamma_1$}

\put(100,35){$\gamma_2$}
\put(77,0){$\gamma_3$}
\put(-34,35){$\gamma_{2k-1}$}
\put(-9,0){$\gamma_{2k-2}$}
\end{overpic}
\caption{}\label{fig:eptagons_c}
\end{subfigure}
\qquad\qquad
	\begin{subfigure}{2.5cm}
\begin{overpic}[width=2.5cm,percent]{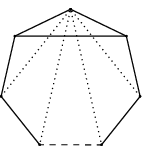}
\put(45,102){$\gamma_0$}
\put(-11,79){$\gamma_{2k}$}
\put(88,79){$\gamma_1$}

\put(100,35){$\gamma_2$}
\put(77,0){$\gamma_3$}
\put(-34,35){$\gamma_{2k-1}$}
\put(-9,0){$\gamma_{2k-2}$}
\end{overpic}
\caption{}\label{fig:eptagons_d}
\end{subfigure}
\caption{graphs in the proof of Claim~\ref{eq:chords}. Note that the graph in (\subref{fig:eptagons_a}) is isomorphic to $F_{2k+1}$, while the graphs in (\subref{fig:eptagons_b}) and (\subref{fig:eptagons_c}) are isomorphic to $DF_{2k+1}$, and the graph in (\subref{fig:eptagons_d}) is isomorphic to $\widetilde{F}_{2k+1}$.}\label{fig:eptagons}
\end{figure}
We can now complete the proof of the lemma. Let $S$ be a copy in $M$ of any of the three graphs in $\mathcal{F}_0$, and let $S$ have $n$ vertices $\gamma_0,\gamma_1,\ldots,\gamma_{n-1}$. Observe that $S$ possesses an odd cycle $R$ on at least $n-1$ vertices; more precisely, the wheels have an odd cycle on $n-1$ vertices and the fan on $n$ vertices. Let $\gamma_0$\ be the highest degree vertex in $S$ and let $H_R$ and $M_R$ be the graphs induced by $R$ in $H$ and $M$, respectively. Let $C$ be a cycle with minimum possible order among the odd cycles contained in $H_R$ whose order is at least 5. Hence either $C$ is an odd hole of $H$ or $C$ is an odd cycle of $H$ with exactly one $a$-chord which belongs to a triangle having the other two edges on $C$, otherwise the minimality is denied. Clearly, the dominance edges of $S$  induced by $V(C)$ are $d$-chords of $C$. Suppose first that $C$ has no extra $d$-chord other than those. In this case we are done because, $V(C)\cup \{\gamma_0\}$ (possibly $\theta\in V(C)$ when $S$ is a fan) induces either a wheel, or a fan, or a chorded fan. We may therefore assume that $C$ possesses some extra $d$-chord (a dominance edge of $M_R$ which is not in $S$). Possibly after relabeling, $C$ is of the form  described in Claim~\eqref{eq:chords} and $C$ possesses all the $d$-chords $\gamma_0\gamma_i$, $i\in [n-1]$ (by the claim). If $C$ possesses no other $d$-chords we are done, because $V(C)$ induces either a chorded fan or a fan according to whether or not $C$ possesses the unique $a$-chord $\gamma_1\gamma_t$. If $C$ possesses some other $d$-chord, still by Claim~\eqref{eq:chords}, then $C$ possesses either all $d$-chords $\gamma_1\gamma_j$ with $\gamma_j\leq \gamma_1$, $j\not \in \{0,2\}$, or all the $d$-chords $\gamma_t\gamma_j$, with $\gamma_j\leq \gamma_t$, $j\not\in\{0,t-1\}$. In this case $V(C)$ induces  a double fan in $M$. The proof is completed.
\end{proof}


\bibliography{biblio.bib}
\bibliographystyle{siam}

\end{document}